\definecolor{darkred}{rgb}{0.8,0.1,0.1}
\theoremstyle{plain}
\newtheorem{theo}{Theorem}[section]
\newtheorem{lem}[theo]{Lemma}
\newtheorem{propo}[theo]{Proposition}
\newtheorem{cor}[theo]{Corollary}
\theoremstyle{definition}
\newtheorem{defi}[theo]{Definition}
\newenvironment{ex}
  {\pushQED{\qed}\exx}
  {\popQED\endexx}
\newenvironment{rem}
  {\pushQED{\qed}\remm}
  {\popQED\endremm}
\numberwithin{equation}{section}
\def\nn{\nonumber}
\def\bbK{\mathbb{K}}
\def\bbR{\mathbb{R}}
\def\bbN{\mathbb{N}}
\def\bbZ{\mathbb{Z}}
\def\bbT{\mathbb{T}}
\def\id{\mathrm{id}}
\def\dd{\mathrm{d}}
\def\1{\mathbbm{1}}
\def\op{\mathrm{op}}
\def\pr{\mathrm{pr}}
\def\Loc{\mathsf{Loc}}
\def\Alg{\mathsf{Alg}}
\def\Str{\mathsf{Str}}
\def\dgVec{\mathsf{dgVec}}
\def\dgAlg{\mathsf{dgAlg}}
\def\DD{\mathsf{D}}
\def\ext{\mathrm{ext}}
\def\AA{\mathfrak{A}}
\def\BB{\mathfrak{B}}
\def\QQ{\mathfrak{Q}}
\def\Ranpi{\mathrm{Ran}_{\pi}^{}}
\def\hoRanpi{\mathrm{hoRan}_{\pi}^{}}
\def\holim{\mathrm{holim}}
\def\U{\mathrm{U}_{\pi}^{}}
\def\hoU{\mathrm{hoU}_{\pi}^{}}
\newcommand\mycom[2]{\genfrac{}{}{0pt}{}{#1}{#2}}
\def\sk{\vspace{2mm}}
\title{%
Quantum field theories on categories fibered in groupoids
}
\author{%
Marco Benini$^{1,a}$ and Alexander Schenkel$^{2,b}$ \vspace{3mm}\\
{\small $^1$ Institut f\"ur Mathematik, Universit\"at Potsdam,}\\
{\small Karl-Liebknecht-Str.~24-25, 14476 Potsdam, Germany.}\vspace{2mm}\\
{\small ${}^2$ School of Mathematical Sciences, University of Nottingham,}\\
{\small University Park, Nottingham NG7 2RD, United Kingdom.}\vspace{3mm}\\
{\footnotesize \texttt{Email:} 
$^a$\texttt{mbenini87@gmail.com}, $^b$\texttt{alexander.schenkel@nottingham.ac.uk}}
 }
\date{July 2017}
\begin{document}

\maketitle

\begin{abstract}
\noindent We introduce an abstract concept of quantum field theory on categories fibered in groupoids over the category of spacetimes. This provides us with a general and flexible framework to study quantum field theories defined on spacetimes with extra geometric structures such as bundles, connections and spin structures. Using right Kan extensions, we can assign to any such theory an ordinary quantum field theory defined on the category of spacetimes and we shall clarify under which conditions it satisfies the axioms of locally covariant quantum field theory. The same constructions can be performed in a homotopy theoretic framework by using homotopy right Kan extensions, which allows us to obtain first toy-models of homotopical quantum field theories resembling some aspects of gauge theories.
\end{abstract}

\paragraph*{Keywords:} locally covariant quantum field theory, fibered categories, Kan extension, homotopical algebra,
homotopical quantum field theory
\paragraph*{MSC 2010:} 81T05, 18D30, 18G55

\tableofcontents


\newpage


\section{\label{sec:intro}Introduction and summary}
A locally covariant quantum field theory in the original sense of \cite{Brunetti}
is a functor $\AA : \Loc\to \Alg$ which assigns algebras of quantum observables 
to globally hyperbolic Lorentzian manifolds (i.e.\ spacetimes) 
subject to a collection of physically motivated axioms.
In practice, however, it is often convenient to slightly generalize
this framework and consider functors $\AA : \Str\to \Alg$ defined
on a category $\Str$ of spacetimes with additional geometric structures.
For example, Dirac quantum fields are typically defined on the category of globally hyperbolic
Lorentzian spin manifolds (cf.\ \cite{Verch,DHP, Sanders}) and charged quantum fields in the presence of
background gauge fields are defined on a category of principal bundles with connections over spacetimes
(cf.\ \cite{Zahn, SchenkelZahn}). 
A common feature of these and similar examples appearing throughout the literature
is that there exists a projection functor $\pi:\Str\to\Loc$ from structured spacetimes to spacetimes
which forgets the extra geometric structures. In examples, the functor $\pi:\Str\to\Loc$ 
exhibits special properties in the sense that 1.)~geometric structures defined over a spacetime 
$M^\prime$ admit pullbacks along $\Loc$-morphisms $f:M\to M^\prime$ and 2.)~the fibers $\pi^{-1}(M)$
of geometric structures over a spacetime $M$ are groupoids. (The morphisms of these
groupoids should be interpreted as gauge transformations between geometric structures over $M$.) 
In technical terms, this means that $\pi:\Str\to\Loc$ is a category fibered in groupoids.
\sk

In this paper we abstract these examples and study quantum field theories $\AA :\Str\to\Alg$
defined on categories fibered in groupoids $\pi:\Str\to \Loc$ from a model-independent perspective.
We shall show that to any such theory one can assign (via a universal construction 
called right Kan extension) a functor $\U\AA :\Loc\to\Alg$ defined on the category of spacetimes $\Loc$.
The relationship between $\U\AA$ and $\AA$
is specified by a (universal) diagram of functors
\begin{flalign}\label{eqn:diagramintro}
\xymatrix{
\ar[dr]_-{\pi}\Str \ar[rr]^-{\AA} &\ar@{<=}[d]_-{\epsilon}& \Alg\\
&\Loc\ar[ru]_-{\U\AA}&
}
\end{flalign}
which commutes up to a natural transformation $\epsilon$ that embeds $\U\AA \circ \pi$ 
as a subtheory of $\AA$. 
We will show that the right Kan extension $\U\AA : \Loc\to \Alg$ assigns interesting 
algebras $\U\AA(M)$ to spacetimes $M$, 
which one may interpret as gauge invariant combinations
of classical observables for the geometric structures over $M$ and quantum observables
of the original theory $\AA:\Str\to \Alg$ corresponding to all possible structures over $M$.
In physical terminology, this means that the {\em background} geometric structures
of the theory $\AA : \Str\to \Alg$ on $\Str$ are promoted via the right Kan extension
to (classical) {\em degrees of freedom} of the theory $\U\AA:\Loc\to\Alg$ on $\Loc$.
The latter perspective has the advantage that the 
extra geometric structures do not have to be chosen 
a priori for assigning an algebra via $\AA :\Str\to\Alg$, 
but they may be selected later by a suitable choice of state on the algebra $\U\AA(M)$.
We will prove a theorem providing sufficient (and in some cases also necessary) conditions
on the category fibered in groupoids $\pi:\Str\to\Loc$ such that the right Kan extension 
$\U\AA:\Loc\to\Alg$ satisfies the axioms of locally covariant quantum field theory \cite{Brunetti}.
In particular, we find that the isotony axiom is often violated by the
typical examples $\pi:\Str\to\Loc$ of spacetimes equipped with additional geometric structures  
considered in the literature. 
Such feature is similar to the isotony violations observed in models of quantum gauge theories 
\cite{DL,SDH,BDS,BDHS,BSS,BBSS,Benini}.
\sk

In very special instances, our general construction reduces to the assignment of the fixed-point theory
of a locally covariant quantum field theory with ``global gauge group'', which has 
been studied by Fewster in \cite{FewsterAuto}. A quantum field theory $\BB :\Loc\to\Alg$
on $\Loc$ together with a representation $\eta :G\to \mathrm{Aut}(\BB)$ of a group $G$ in 
terms of automorphisms is equivalent to a quantum field theory $\AA : \Loc\times G\to \Alg$
on the trivial category fibered in groupoids $\pi: \Loc\times G\to\Loc$, where all fibers are given
by $G$ (regarded as a groupoid with only one object). The right Kan extension
$\U\AA : \Loc\to\Alg$ of such a theory is then precisely the fixed-point theory of $(\BB,\eta)$.
From this perspective, our construction may  also be interpreted as a generalization
of the assignment of fixed-point theories to situations where the ``global gauge group'' 
$G$ is replaced by ``local gauge groupoids'' $\pi^{-1}(M)$, i.e.\ a family of
groupoids depending on the underlying spacetime $M$. A very intriguing (and, most likely, challenging)
task would be to understand which parts of the program of Doplicher, Haag and Roberts \cite{DHR1,DHR2}
can be generalized to our situation. For an extension of these techniques to Lorentzian manifolds see \cite{Ruzzi,BR}.
In a first attempt, this could be simplified by studying one of the following two distinct scenarios:
a)~The symmetries are described by a ``global gauge groupoid'', i.e.\ quantum field theories on the trivial category fibered in 
groupoids $\pi: \Loc\times \mathcal{G}\to\Loc$, where $\mathcal{G}$  is now any groupoid.
b)~The symmetries are described by a ``local gauge group'', i.e.\ quantum field theories on a 
category fibered in  groupoids $\pi: \Str \to\Loc$, where all fibers $\pi^{-1}(M)$ are ($M$-dependent) 
groupoids with only one object. Such questions lie beyond the scope of this 
paper because, in contrast to our purely algebraic and categorical approach, they presumably 
also require a careful treatment of functional analytical aspects. 
\sk

In the second part of this paper we go beyond the standard framework of locally 
covariant quantum field theory by adding a homotopy theoretical flavor to
our constructions. This generalization is motivated by the well-known mathematical 
fact that the local-to-global behavior of gauge theories, which is captured by
the concept of {\em descent} for stacks, see e.g.\ \cite{Hollander}, is necessarily
of a homotopical (or higher categorical) nature.
In \cite{BSShomotopy}, we initiated the development of a 
homotopical generalization of locally covariant quantum field theory, 
where observable algebras are replaced by higher algebraic structures such as 
differential graded algebras or cosimplicial algebras. Using simple toy-models
given by classical (i.e.\ not quantized) and non-dynamical Abelian gauge theories,
we confirmed that this approach allows for a homotopically refined
version of Fredenhagen's ``universal algebra'' construction \cite{Fre1,Fre2,Fre3}
which is suitable for gauge theories.
What was missing in \cite{BSShomotopy} is a study 
of the crucial question how the axioms of locally covariant quantum field theory
may be implemented in a homotopically meaningful way.
In this work we provide a first answer to this question by
constructing toy-models of homotopical quantum field theories
via a homotopical generalization of the right Kan extension.
Concretely, we consider a category fibered in groupoids $\pi : \Str \to \Loc$
and a non-homotopical quantum field theory $\AA : \Str\to \Alg$ on it.
As an example, one may think of $\AA$ as a charged matter quantum field theory 
coupled to background gauge fields that are encoded in the groupoids $\pi^{-1}(M)$.
Regarding $\AA:\Str\to\dgAlg$ as a trivial homotopical quantum field theory via 
the embedding $\Alg \to \dgAlg$ of algebras into differential graded algebras 
concentrated in degree $0$, we produce a (generically) non-trivial homotopical 
quantum field theory $\hoU\AA:\Loc \to \dgAlg$ on $\Loc$ via 
the homotopy right Kan extension. We observe that this theory assigns
to a spacetime $M$ the differential graded algebra 
$\hoU\AA(M) = C^\bullet(\pi^{-1}(M);\AA)$ underlying the groupoid cohomology 
of $\pi^{-1}(M)$ with values in the functor $\AA:\Str\to\Alg$.
Its zeroth cohomology is precisely the algebra $\U\AA(M)$
assigned by the ordinary right Kan extension (i.e.\ an algebra of gauge invariant 
combinations of classical gauge field and quantum matter field observables)
and its higher cohomologies encode more detailed aspects of the action of the ``gauge groupoids''
$\pi^{-1}(M)$ on $\AA$. Notice that the higher cohomologies are not visible in a 
non-homotopical approach and hence are novel features arising within it. Unfortunately,
a satisfactory interpretation of the physics encoded in such higher cohomologies is not entirely developed yet.
\sk

Using our toy-models $\hoU\AA:\Loc \to \dgAlg$ for homotopical quantum field theories
we investigate to which extent they satisfy the axioms of locally covariant quantum field theory
\cite{Brunetti}. Our first observation is that $\hoU\AA:\Loc \to \dgAlg$ is in general not a strict functor.
It is only a functor `up to homotopy' in the sense that for two composable $\Loc$-morphisms
$f$ and $f^\prime$ there exists a cochain homotopy 
$\hoU\AA(f^\prime) \circ \hoU\AA(f) \sim \hoU\AA(f^\prime\circ f)$ controlling compositions.
Similarly, we observe that the causality axiom and (under suitable conditions) the time-slice axiom
hold `up to homotopy'. Such weaker notions of the axioms of locally covariant quantum field
theory are mathematically expected because the strict axioms are unstable under weak equivalences.
Concretely, once we are given a model for a homotopical quantum field theory satisfying the
strict axioms, we could pass to a weakly equivalent description that will only satisfy the axioms `up to homotopy'.
We would like to emphasize that the concept of causality `up to homotopy' is different from
the weakened causality condition for interlinked regions discovered in \cite{Buc1,Buc2}.
In fact, the former descends to strict causality on the level of cohomologies and in particular
on the level of gauge invariant observables (i.e.\ the zeroth cohomology).
\sk

Our studies also indicate that there seems to exist a refinement
of the `up to homotopy' axioms by higher homotopies and coherence conditions.
This means that one may {\em choose} particular cochain homotopies which enforce the
`up to homotopy' axioms and control their iterations (e.g.\ multiple compositions of morphisms
or multiple commutations of spacelike separated observables) by higher cochain homotopies and coherences.
From a homotopical perspective, it is natural to add all those (higher) homotopies and their coherences
to the data defining a homotopical quantum field theory. It is however very hard to deal
with such structures by using only elementary categorical techniques. 
To cope with (higher) homotopies and their coherences systematically, 
one needs the machinery of colored operads (see e.g.\ \cite{Operad}). 
Hence, our results point towards the usefulness of colored operads in the formulation
of locally covariant quantum field theory and its {\em coherent} homotopical 
generalization. This operadic perspective will be developed in our future works.
It is worth to emphasize the differences between the non-coherent approach to 
homotopical quantum field theory employed in the present paper and the aforementioned coherent one: 
The former allows us to assign interesting differential graded algebras to spacetimes,
whose cohomologies capture, in addition to gauge invariant observables, 
further information about the action of gauge transformations. Moreover, all information
encoded in these cohomologies satisfies the locally covariant quantum field theory axioms strictly. 
This is already quite satisfactory if one is mainly interested in cohomological information 
(even more so, when gauge invariant observables, i.e.\ the zeroth cohomology, are the main object of concern). 
On the contrary, in order to perform certain constructions, 
it becomes crucial to keep track of all (higher) homotopies and their coherences 
and therefore a {\em coherent} homotopical generalization (in the sense explained above) 
of locally covariant quantum field theory becomes necessary. 
For example, this is the case when one is confronted with
questions related to local-to-global properties, 
e.g.\ generalizations of Fredenhagen's ``universal algebra'' construction.
The reason is that such constructions involve colimits over {\em commutative} diagrams 
associated to embeddings of spacetime regions, whose homotopical generalization must be in terms of 
{\em homotopy coherent commutative} diagrams. Notice that this is very similar to the formulation of
descent for stacks in non-strict models, e.g.\ in terms of pseudo-functors (cf.\ \cite{Vistoli}).
\sk

The outline of the remainder of this paper is as follows:
In Section \ref{sec:setup} we review some basic aspects of
categories fibered in groupoids $\pi:\Str\to\Loc$ over the spacetime category $\Loc$
and introduce a notion of quantum field theory on them. We will also show that
many examples of quantum field theories defined on spacetimes
with extra geometric structures appearing throughout the literature fit into our framework.
In Section \ref{sec:Kan} we compute the right Kan extension of a quantum field theory
$\AA: \Str\to\Alg$ on structured spacetimes along the projection functor $\pi:\Str\to\Loc$
and thereby obtain candidates $\U\AA : \Loc\to \Alg$ for quantum field theories on $\Loc$.
In Section \ref{sec:properties} we prove a theorem providing sufficient (and in some cases also necessary) 
conditions on the category fibered in groupoids $\pi:\Str\to\Loc$
such that the right Kan extension $\U\AA : \Loc\to \Alg$ satisfies the axioms of 
locally covariant quantum field theory.
We will confirm by examples that there exist right Kan extensions which satisfy the
causality and time-slice axioms, while the isotony axiom is typically violated.
A homotopical generalization of these constructions is studied in Section \ref{sec:homotopyKan}
and its properties are studies in Section  \ref{sec:homproperties}. As a result, we construct
first toy-models of homotopical quantum field theories via homotopy right Kan extensions. 
Appendix \ref{app:dgAlg} contains some standard material on the homotopy theory of 
differential graded vector spaces and differential graded algebras, which is used in the main text.

\section{\label{sec:setup}Setup}
Let us denote by $\Loc$ the category of $m$-dimensional oriented, 
time-oriented and globally hyperbolic Lorentzian manifolds
with morphisms given by orientation and time-orientation
preserving causal, isometric and open embeddings.
Physically, $\Loc$ describes the category of spacetimes without
additional geometric structures such as bundles (with connections)
or spin structures. In order to allow for such additional geometric structures,
we consider a category $\Str$, which describes the structures of interest as well as their symmetries,
together with a functor $\pi : \Str \to \Loc$ that assigns the underlying spacetime.
\sk

A quantum field theory on structured spacetimes is then given by a functor
\begin{flalign}
\AA : \Str \longrightarrow \Alg
\end{flalign}
to the category of unital associative algebras over a (fixed) 
field $\bbK$ with morphisms given by unital algebra homomorphisms. 
We shall assume the standard axioms of locally covariant quantum field theory \cite{Brunetti},
adapted to the category $\Str$ and the functor $\pi : \Str\to\Loc$. 
\begin{defi}\label{def:QFT}
A functor $\AA: \Str \to \Alg$ is called a {\em quantum field theory} on $\pi : \Str\to \Loc$ if
the following axioms are fulfilled.
\begin{itemize}
\item {\em Isotony:} For every $\Str$-morphism $g : S \to S^\prime$, the $\Alg$-morphism
$\AA(g) : \AA(S)\to \AA(S^\prime)$ is a monomorphism.

\item {\em Causality:} Let $S_1 \stackrel{g_1}{\longrightarrow} S \stackrel{g_2}{\longleftarrow} S_2 $ 
be a $\Str$-diagram, such that its projection via $\pi$ to $\Loc$ 
$\pi(S_1) \stackrel{\pi(g_1)}{\longrightarrow} \pi(S) \stackrel{\pi(g_2)}{\longleftarrow} \pi(S_2) $
is causally disjoint, i.e.\ the images of $\pi(g_1)$ and $\pi(g_2)$ are causally disjoint subsets of $\pi(S)$.
Then the induced commutator
\begin{flalign}
[\,\cdot\,,\,\cdot\,] \circ \big(\AA(g_1)\otimes \AA(g_2)\big) \,:\, \AA(S_1)\otimes \AA(S_2) \longrightarrow \AA(S)
\end{flalign}
is zero.

\item {\em Time-slice:} Let $g : S \to S^\prime$ be a $\Str$-morphism, such that
its projection $\pi(g) : \pi(S) \to \pi(S^\prime)$ via $\pi$ to $\Loc$ is a Cauchy $\Loc$-morphism, 
i.e.\ the image of $\pi(g)$ contains a Cauchy surface of $\pi(S^\prime)$. Then
the $\Alg$-morphism $\AA(g) : \AA(S)\to \AA(S^\prime)$ is an isomorphism.
\end{itemize}
Moreover, we shall always assume that, for each object $S$ in $\Str$,
$\AA(S)$ is not a terminal object in $\Alg$. Equivalently, this means that 
the unit element $\1\in \AA(S)$ is different from the zero element $0\in\AA(S)$, i.e.\ $\1\neq 0$,
for each object $S$ in $\Str$.
\end{defi}
\begin{rem}
Notice that a quantum field theory $\AA:\Loc\to\Alg$ on the identity functor
$\id_{\Loc}:\Loc \to\Loc$ is  a locally covariant quantum field theory 
in the sense of \cite{Brunetti}.
\end{rem}

The case where we just assume any functor $\pi : \Str \to \Loc$ will turn out to be too generic
to allow for interesting model-independent constructions. In many examples of interest, 
some of which we shall review below, it turns out that any object $S^\prime$ in $\Str$
may be pulled back along a $\Loc$-morphism $f : M\to \pi(S^\prime)$, giving rise to
an object $f^\ast S^\prime$ in $\Str$ with $\pi(f^\ast S^\prime) = M$
and a $\Str$-morphism $f_\ast : f^\ast S^\prime \to S^\prime$ such that
$\pi(f_\ast) =f : M\to \pi(S^\prime)$. Existence of pullbacks can be formalized in 
terms of fibered categories, see e.g.\ \cite[Section 3]{Vistoli} for an introduction.
Let us briefly review the main definitions relevant for our work.
\begin{defi}\label{def:cartesian}
A $\Str$-morphism $g : S\to S^\prime$ is called {\em cartesian} if
for any $\Str$-morphism $g^\prime : \widetilde{S} \to S^\prime$
and any $\Loc$-morphism $f : \pi(\widetilde{S}) \to \pi(S)$,
such that the $\Loc$-diagram
\begin{flalign}
\xymatrix{
\ar[dr]_-{f}\pi(\widetilde{S}) \ar[rr]^-{\pi(g^\prime)} && \pi(S^\prime)\\
&\pi(S) \ar[ru]_-{\pi(g)}&
}
\end{flalign}
commutes, there exists a unique $\Str$-morphism $\widetilde{g} : \widetilde{S} \to S$,
such that $\pi(\widetilde{g}) =f$ and the $\Str$-diagram
\begin{flalign}
\xymatrix{
\ar@{-->}[dr]_-{\exists !\, \widetilde{g}} \widetilde{S} \ar[rr]^-{g^\prime} && S^\prime\\
&S \ar[ru]_-{g}&
}
\end{flalign}
commutes.
If $g : S\to S^\prime$ is a cartesian $\Str$-morphism, we also say that $S$ is a {\em pullback}
of $S^\prime$ to $\pi(S)$ (along the $\Loc$-morphism $\pi(g) : \pi(S)\to \pi(S^\prime)$).
\end{defi}
\begin{rem}
As a direct consequence of the universal definition of cartesian $\Str$-morphisms,
it follows that any two pullbacks of $S^\prime$ to $M$ along a $\Loc$-morphism
$f:M\to\pi(S^\prime)$  (if they exist) are isomorphic via a unique isomorphism.
\end{rem}
\begin{defi}\label{def:fiberedcategory}
A functor $\pi : \Str\to\Loc$ is called a {\em fibered category} over $\Loc$ if for any
$\Loc$-morphism $f : M\to M^\prime$ and any object $S^\prime$ in $\Str$ with $\pi(S^\prime)=M^\prime$
there exists a cartesian $\Str$-morphism $g:S\to S^\prime$ such that
$\pi(g) =f: M\to M^\prime$. 
\end{defi}
\begin{defi}\label{def:categoryfiberedingrpd}
A functor $\pi : \Str\to\Loc$ is called a {\em category fibered in groupoids} over $\Loc$
if it is a fibered category over $\Loc$ and additionally $\pi^{-1}(M)$ is a groupoid, for each object $M$ in $\Loc$.
Here $\pi^{-1}(M)$ is the subcategory of $\Str$ with objects given by all objects $S$ in $\Str$ such that
$\pi(S) =M$ and morphisms given by all $\Str$-morphisms $g : S\to S^\prime$ such that
$\pi(g) = \id_M : M\to M$.
\end{defi}

We finish this section by providing some examples of categories fibered in groupoids over $\Loc$,
which were used in the literature to describe quantum field theories that are
defined on spacetimes with extra geometric structures or admit some additional symmetries.

\begin{ex}[Spin structures]\label{ex:spin}
Assume that the spacetime dimension is $m\geq 4$.
Let $\mathsf{SLoc}$ be the category of $m$-dimensional oriented, 
time-oriented and globally hyperbolic Lorentzian spin manifolds.
Its objects are all tuples $(M,P,\psi)$, where $M$ is an object in $\Loc$,
$P$ is a principal $\mathrm{Spin}_0(1,m-1)$-bundle over $M$
and $\psi : P \to FM$ is a $\mathrm{Spin}_0(1,m-1)$-equivariant
bundle map (over $\id_M$) to the pseudo-orthonormal oriented 
and time-oriented frame bundle $FM$ over $M$.
(The right $\mathrm{Spin}_0(1,m-1)$-action on $FM$ is induced by the double covering
group homomorphism $\rho: \mathrm{Spin}_0(1,m-1) \to \mathrm{SO}_0(1,m-1)$.)
A morphism $g : (M,P,\psi) \to (M^\prime,P^\prime,\psi^\prime)$ in $\mathsf{SLoc}$
is a principal $\mathrm{Spin}_0(1,m-1)$-bundle morphism $g : P\to P^\prime$ covering
a $\Loc$-morphism $f: M\to M^\prime$, such that $ \psi^\prime \circ g = f_\ast \circ \psi$, where
$f_\ast : FM\to FM^\prime$ is the pseudo-orthonormal oriented and time-oriented
frame bundle morphism induced by the $\Loc$-morphism $f: M\to M^\prime$.
\sk

There is an obvious functor $\pi : \mathsf{SLoc}\to\Loc$ which forgets the spin structure, i.e.\
$\pi(M,P,\psi) = M$ and $\pi(g) = f$. The fiber $\pi^{-1}(M)$ over any object $M$ in $\Loc$
is a groupoid, because principal bundle morphisms covering the identity are isomorphism.
Moreover, $\pi : \mathsf{SLoc}\to\Loc$ is a fibered category and thus a category fibered in groupoids: 
Given any object $(M^\prime,P^\prime,\psi^\prime)$ in $\mathsf{SLoc}$ and any $\Loc$-morphism
$f : M\to M^\prime$, we pull $P^\prime$ back to a principal $\mathrm{Spin}_0(1,m-1)$-bundle
$f^\ast P^\prime$ over $M$ and $FM^\prime$ to a principal $\mathrm{SO}_0(1,m-1)$-bundle
$f^\ast FM^\prime$ over $M$, where, as a consequence of the properties of $\Loc$-morphisms, 
the latter is isomorphic to the pseudo-orthonormal oriented 
and time-oriented frame bundle $FM$ over $M$. Composing the induced map
$f^\ast\psi^\prime : f^\ast P^\prime\to f^\ast FM^\prime$ with the isomorphism $f^\ast FM^\prime\simeq FM$
we obtain a $\mathrm{Spin}_0(1,m-1)$-equivariant bundle map $\psi : f^\ast P^\prime\to FM$, hence
an object $(M,f^\ast P^\prime,\psi)$ in $\mathsf{SLoc}$. The canonical principal $\mathrm{Spin}_0(1,m-1)$-bundle 
morphism $f_\ast : f^\ast P^\prime \to P^\prime$ covering the $\Loc$-morphism $f:M\to M^\prime$
defines a $\mathsf{SLoc}$-morphism $f_\ast : (M,f^\ast P^\prime,\psi)\to (M^\prime,P^\prime,\psi^\prime)$
and it is straightforward to verify that the latter is cartesian.
\sk

Examples of quantum field theories defined on $\pi : \mathsf{SLoc}\to \Loc$
include Dirac quantum fields, see e.g.\ \cite{Verch,DHP, Sanders}.
(In order to include also fermionic quantum field theories, Definition \ref{def:QFT} 
has to be generalized in the usual way to $\bbZ_2$-graded algebras, see e.g.\ \cite{BaerGinoux}.)
\end{ex}

\begin{ex}[Principal bundles (with connections)]\label{ex:gauge}
Fix any Lie group $G$. Let $\mathsf{B}G\Loc$ be the category with 
objects given by all pairs $(M,P)$, where $M$ is an object in $\Loc$ and $P$ is a principal $G$-bundle over $M$,
and morphisms $g : (M,P)\to (M^\prime,P^\prime)$ given by all principal $G$-bundle morphisms
$g : P\to P^\prime$ covering a $\Loc$-morphism $f:M\to M^\prime$.
There is an obvious functor $\pi : \mathsf{B}G\Loc \to\Loc$ which forgets the bundle data, 
i.e.\ $\pi(M,P) = M$ and $\pi(g) = f$. Using ordinary pullbacks of principal bundles as in Example \ref{ex:spin}, 
it is easy to show that $\pi : \mathsf{B}G\Loc \to\Loc$ is a category fibered in groupoids.
\sk

Let $\mathsf{B}G^{\mathrm{con}}\Loc$ be the category with 
objects given by all tuples $(M,P,A)$, where $M$ is an object in $\Loc$, $P$ is a principal $G$-bundle over $M$
and $A$ is a connection on $P$, and morphisms $g : (M,P,A)\to (M^\prime,P^\prime,A^\prime)$ 
given by all principal $G$-bundle morphisms $g : P\to P^\prime$ covering a $\Loc$-morphism $f:M\to M^\prime$
and preserving the connections, i.e.\ $g^\ast A^\prime = A$.
There is an obvious functor $\pi : \mathsf{B}G^{\mathrm{con}}\Loc \to\Loc$ which forgets the bundle and connection 
data,  i.e.\ $\pi(M,P,A) = M$ and $\pi(g) = f$. Using again ordinary pullbacks of principal bundles and also pullbacks of
connections, it is easy to show that $\pi : \mathsf{B}G^{\mathrm{con}}\Loc \to\Loc$ is a category fibered in groupoids.
\sk

Examples of quantum field theories defined on $\pi : \mathsf{B}G\Loc \to\Loc$
include dynamical quantum gauge theories on fixed but arbitrary principal bundles, see e.g.\ \cite{BDHS,BDS}.
Examples of quantum field theories defined on $\pi : \mathsf{B}G^{\mathrm{con}}\Loc \to\Loc$
include charged matter quantum field theories on fixed but arbitrary background gauge fields,
see e.g.\ \cite{Zahn,SchenkelZahn}.
\end{ex}

\begin{ex}[Global coframes]\label{ex:frame}
In \cite{Fewster1,Fewster2}, Fewster introduced the category $\mathsf{FLoc}$
of (co)framed spacetimes for studying model-independent aspects of the 
spin-statistics theorem. Objects in $\mathsf{FLoc}$ are all pairs $(M,e)$, where
$M$ is an $m$-dimensional manifold and $e = \{e^a \in\Omega^1(M) : a=0,1,\dots,m-1\}$
is a global coframe, such that the tuple $\pi(M,e) := (M, \eta_{ab}\,e^a\otimes e^b, e^0, e^0\wedge\dots\wedge e^{m-1})$
is an object in $\Loc$. Here $\eta_{ab} = \mathrm{diag}(1,-1,\dots,-1)_{ab}$ denotes the Minkowski metric.
A morphism $g : (M,e)\to (M^\prime,e^\prime)$ in $\mathsf{FLoc}$ is a smooth map
$g : M\to M^\prime$, such that $g^\ast {e^\prime}^a = e^a$, for all $a$, and
$\pi(g) := g : \pi(M,e)\to \pi(M^\prime,e^\prime)$ is a $\Loc$-morphism.
We obtain a functor $\pi: \mathsf{FLoc}\to \Loc$, which is easily seen to
be a category fibered in groupoids by pulling back global coframes. 
\end{ex}

\begin{ex}[Source terms]\label{ex:source}
For studying inhomogeneous Klein-Gordon quantum field theories in the presence of source terms
$J\in C^\infty(M)$, \cite{FewsterSchenkel} introduced the category $\mathsf{LocSrc}$.
Objects in $\mathsf{LocSrc}$ are all pairs $(M,J)$, where $M$ is an object in $\Loc$ and $J\in C^\infty(M)$,
and morphisms $g: (M,J)\to (M^\prime,J^\prime)$ are given by all
$\Loc$-morphisms $g: M\to M^\prime$ such that $g^\ast J^\prime = J$.
There is an obvious functor $\pi :\mathsf{LocSrc}\to\Loc$ which forgets the source terms,
i.e.\ $\pi(M,J) = M$ and $\pi(g) = g$. It is easy to check that $\pi :\mathsf{LocSrc}\to\Loc$
is a category fibered in groupoids.
\end{ex}

\begin{ex}[Global gauge transformations]\label{ex:globalgauge}
Let $G$ be a group. Interpreting $G$ as a groupoid with only one object 
(the automorphisms of this object are given by the elements $g\in G$ of the group),
we may form the product category $\Loc\times G$. Its objects
are the same as the objects in $\Loc$ and its morphisms
are pairs $(f,g) : M\to M^\prime$, where $f:M\to M^\prime$ is a $\Loc$-morphism
and $g\in G$ is a group element. Composition of morphisms is given by
$(f^\prime,g^\prime)\circ (f,g) = (f^\prime\circ f,g^\prime\,g)$ with $g^\prime\,g$ 
defined by the group operation on $G$, and the identity morphisms are $(\id_M,e)$ 
with $e\in G$ the identity element.
There is an obvious functor $\pi : \Loc\times G \to \Loc$ projecting onto $\Loc$,
i.e.\ $\pi(M) = M$ and $\pi(f,g) = f$. The fiber $\pi^{-1}(M)$ over any object $M$
in $\Loc$ is isomorphic to the groupoid $G$. It is easy to show that $\pi : \Loc\times G \to \Loc$
is a category fibered in groupoids. (Notice that every $\Loc\times G$-morphism is cartesian.)
\sk

Quantum field theories $\AA : \Loc\times G\to\Alg$ on $\pi : \Loc\times G \to \Loc$ are 
in one-to-one correspondence with ordinary quantum field theories $\BB:\Loc\to\Alg$ 
on $\Loc$ together with a representation $\eta : G \to \mathrm{Aut}(\BB)$ of the group $G$
in terms of automorphisms of $\BB$. 
(The concept of automorphism groups of locally covariant quantum field theories
was introduced and studied by Fewster in \cite{FewsterAuto}.)
Explicitly, given $\AA : \Loc\times G\to\Alg$, we define
a functor $\BB : \Loc\to \Alg$ by setting $\BB(M) :=\AA(M)$, for all objects $M$ in $\Loc$,
and $\BB(f) := \AA(f,e)$, for all $\Loc$-morphisms $f : M\to M^\prime$.
For $g\in G$, the natural isomorphism $\eta(g) : \BB\Rightarrow \BB$
is specified by the components $\eta(g)_M := \AA(\id_M,g)$, for all objects $M$ in $\Loc$.
It is easy to check naturality of these components and also that $\eta$ defines a representation of $G$.
Conversely, given $\BB:\Loc\to \Alg$ and $\eta: G\to \mathrm{Aut}(\BB)$,
we define a functor $\AA : \Loc\times G\to\Alg$ by setting $\AA(M):= \BB(M)$,
for all objects $M$ in $\Loc\times G$, and $\AA(f,g) := \BB(f) \circ \eta(g)_M$, for
all $\Loc\times G$-morphisms $(f,g):M\to M^\prime$. 
Since $\eta$ is a representation in terms of automorphisms of the functor $\BB$, 
it follows that $\AA$ is indeed a functor.
\sk

The automorphism group $\mathrm{Aut}(\BB)$ of a quantum field theory $\BB:\Loc\to \Alg$ on $\Loc$
was interpreted in \cite{FewsterAuto} as the ``global gauge group'' of the theory. See also \cite{Ruzzi,BR}
for a different point of view. Our fibered category approach thus includes also scenarios where one
is interested in (subgroups of) the ``global gauge group'' of ordinary quantum field theories and their actions. 
It is important to emphasize that the corresponding category fibered in groupoids
$\pi : \Loc\times G\to \Loc$ is extremely special when compared to our other examples above: 
1.)~Each fiber $\pi^{-1}(M)$ is isomorphic to $G$, i.e.\ a groupoid with only one object.
In our informal language from above, this means that there are no additional geometric 
structures attached to spacetimes but only additional automorphisms.
2.)~The fibers $\pi^{-1}(M)$ are the same for all spacetimes $M$. 
This justifies employing the terminology ``global gauge group'' for the present scenario.
\end{ex}
\begin{rem}\label{rem:KK}
Example \ref{ex:globalgauge} also captures a variant of Kaluza-Klein theories. (We are grateful
to one of the referees for asking us to address this point.)
Let us fix a compact oriented $k$-dimensional Riemannian manifold $K$, which we interpret as
the ``internal space'' of a Kaluza-Klein theory. The category $\Loc_{m+k}$ is defined
analogously to $\Loc$ by replacing $m$-dimensional manifolds with $m+k$-dimensional ones.
Let $\Loc_K$ be the subcategory of $\Loc_{m+k}$ whose
objects are of the form $M\times K$, with $M$ an object in $\Loc$ (i.e.\ $M$ is $m$-dimensional)
and $K$ our ``internal space'', and whose morphisms are of the form $(f,g) : M\times K\to M^\prime\times K$,
with $f:M\to M^\prime$ a $\Loc$-morphism. It follows that $g: K\to K$ is an orientation preserving
isometry of $K$ (in particular, note that $g$ is a diffeomorphism 
because it is an open embedding with compact image, hence $g(K)=K$), 
i.e.\ $g\in \mathrm{Iso}^+(K)$ is an element of the orientation preserving isometry group.
There is an obvious functor $\pi:\Loc_K\to \Loc$ projecting onto $\Loc$, i.e.\
$\pi(M\times K) = M$ and $\pi(f,g) =f$. Notice that $\pi:\Loc_K\to \Loc$ is a category fibered in groupoids
and as such isomorphic to $\pi: \Loc\times\mathrm{Iso}^+(K)\to\Loc$, cf.\ Example \ref{ex:globalgauge}.
Given any $m+k$-dimensional quantum field theory $\AA : \Loc_{m+k}\to \Alg$,
we may restrict it to the subcategory $\Loc_K$ and obtain a quantum field theory
on $\pi: \Loc_K\to\Loc$. This restriction may be interpreted as the first step
of a Kaluza-Klein construction because one introduces a fixed ``internal space'' $K$ 
and considers the theory only on spacetimes of the form $M\times K$, 
where $M$ is $m$-dimensional. A variant of Kaluza-Klein reduction is then given
by assigning the fixed-point theory of the $ \mathrm{Iso}^+(K)$-action. This
is captured by our general construction presented in the next section, see
Remark \ref{rem:physicalinterpretation}.
\end{rem}

\section{\label{sec:Kan}Kan extension}
Let $\pi : \Str \to \Loc$ be a category fibered in groupoids over $\Loc$
and let $\AA : \Str \to \Alg$ be a functor. 
In practice, $\AA$ will satisfy the quantum field theory axioms of Definition \ref{def:QFT},
but these are not needed for the present section.
The goal of this section is to canonically induce 
from this data a functor $\Ranpi \AA : \Loc \to \Alg$  
on the category $\Loc$, i.e.\ a candidate for a quantum field theory 
defined on spacetimes without additional structures.
Technically, our construction is a right Kan extension \cite[Chapter X]{MacLane}.
\begin{defi}
A {\em right Kan extension} of $\AA : \Str \to \Alg$ along $\pi : \Str \to \Loc$
is a functor $\Ranpi\AA : \Loc \to \Alg$, together with a natural transformation
$\epsilon : \Ranpi\AA \circ\pi \Rightarrow \AA$, that is universal in the following sense: 
Given any functor $\BB : \Loc \to \Alg$ and natural transformation
$\zeta :  \BB\circ\pi \Rightarrow \AA$, then $\zeta$ uniquely factors through $\epsilon$.
\end{defi}
\begin{rem}
A right Kan extension may be visualized by the diagram
\begin{flalign}
\xymatrix{
\ar[dr]_-{\pi}\Str \ar[rr]^-{\AA} &\ar@{<=}[d]_-{\epsilon}& \Alg\\
&\Loc\ar[ru]_-{\Ranpi\AA}&
}
\end{flalign}
which commutes up to the natural transformation $\epsilon$. The universal property then
says that for any other such diagram
\begin{flalign}
\xymatrix{
\ar[dr]_-{\pi}\Str \ar[rr]^-{\AA} &\ar@{<=}[d]_-{\zeta}& \Alg\\
&\Loc\ar[ru]_-{\mathfrak{B}}&
}
\end{flalign}
there exists a unique natural transformation $\alpha : \mathfrak{B}\Rightarrow \Ranpi \AA$ such that
\begin{flalign}
\parbox[c]{1cm}{\xymatrix@C=4.0em@R=3.5em{
\ar[dr]_-{\pi}\Str \ar[rr]^-{\AA} &\ar@{<=}[d]_-{\zeta}& \Alg\\
&\Loc\ar[ru]_-{\mathfrak{B}}&
}} ~\quad~=~\quad~ 
\parbox[c]{1cm}{\xymatrix@C=4.0em@R=3.5em{
\ar[dr]_-{\pi}\Str \ar[rr]^-{\AA} &\ar@{<=}[d]_-{\epsilon}& \Alg\\
&\Loc\ar@/^0.75pc/[ru]^-{~~~~~~~~~~\Ranpi\AA}_-{\rotatebox[origin=c]{135}{$\Rightarrow$}\alpha~~~~~} \ar@/_0.8pc/[ru]_-{\mathfrak{B}}&
}}
\end{flalign}
If it exists, a right Kan extension is unique up to a unique isomorphism, hence it is justified to 
speak of {\em the} right Kan extension $\Ranpi\AA: \Loc\to\Alg$ 
of $\AA : \Str \to \Alg$ along $\pi : \Str \to \Loc$.
\end{rem}

Because the category $\Alg$ is complete, i.e.\ all limits in $\Alg$ exist, the
right Kan extension $\Ranpi\AA: \Loc\to\Alg$ 
of $\AA : \Str \to \Alg$ along $\pi : \Str \to \Loc$ exists and 
may be computed via limits.
For an object $M$ in $\Loc$, we denote its under-category by $M\downarrow \pi$:
Objects in $M\downarrow \pi$ are pairs $(S,h)$ consisting of an object
$S$ in $\Str$ and a $\Loc$-morphism $h : M\to \pi(S)$. Morphisms
$g : (S,h) \to (\widetilde{S}, \widetilde{h})$ in $M\downarrow \pi$ 
are $\Str$-morphisms $g:S\to\widetilde{S}$ such that the diagram
\begin{flalign}
\xymatrix{
&\ar[dl]_-{h}M\ar[dr]^-{\widetilde{h}}&\\
\pi(S) \ar[rr]_-{\pi(g)} && \pi(\widetilde{S})
}
\end{flalign}
commutes. There is a projection functor
\begin{flalign}
\QQ^M : M\downarrow \pi \longrightarrow \Str
\end{flalign}
that acts on objects as $(S,h) \mapsto S$
and on morphisms as $(g : (S,h) \to (\widetilde{S}, \widetilde{h})) \mapsto (g:S\to \widetilde{S})$.
Moreover, given any $\Loc$-morphism $f : M\to M^\prime$, there is a functor
$f\downarrow \pi : M^\prime \downarrow \pi \to M\downarrow\pi$
that acts on objects $(S^\prime,h^\prime)$ as
\begin{subequations}
\begin{flalign}
f\downarrow \pi (S^\prime,h^\prime) := (S^\prime, h^\prime\circ f)
\end{flalign}
and on morphisms $g^\prime : (S^\prime,h^\prime) \to (\widetilde{S^\prime}, \widetilde{h^\prime})$ as
\begin{flalign}
f\downarrow \pi (g^\prime) := g^\prime :  (S^\prime, h^\prime\circ f) \longrightarrow 
(\widetilde{S^\prime}, \widetilde{h^\prime}\circ f)~.
\end{flalign}
\end{subequations}
Hence, we obtain a functor
\begin{flalign}
- \downarrow \pi : \Loc^\op \longrightarrow \mathsf{Cat}
\end{flalign}
to the category $\mathsf{Cat}$ of categories.
\sk

For any object $M$ in $\Loc$, we define an object in $\Alg$ by forming the limit
\begin{flalign}\label{eqn:RanpiLimit}
\Ranpi\AA(M) := \lim \Big(M\downarrow \pi \stackrel{\QQ^M}{\longrightarrow}  \Str\stackrel{\AA}{\longrightarrow} \Alg\Big)
\end{flalign}
in the category $\Alg$. Given any $\Loc$-morphism $f : M\to M^\prime$, there are two functors
from $M^\prime \downarrow \pi $ to $\Alg$,
\begin{flalign}
M^\prime \downarrow \pi \stackrel{f\downarrow \pi}{\longrightarrow}  M \downarrow \pi
\stackrel{\QQ^M}{\longrightarrow}  \Str\stackrel{\AA}{\longrightarrow} \Alg\quad,\qquad 
M^\prime \downarrow \pi \stackrel{\QQ^{M^\prime}}{\longrightarrow}  \Str\stackrel{\AA}{\longrightarrow} \Alg~~,
\end{flalign}
and a natural transformation $\eta : \AA\circ \QQ^M\circ f\downarrow \pi \Rightarrow \AA\circ \QQ^{M^\prime}$
with components given by $\eta_{(S^\prime,h^\prime)}^{} =\id_{\AA(S^\prime)} : \AA(S^\prime) \to \AA(S^\prime)$, for all
objects $(S^\prime,h^\prime)$ in $M^\prime\downarrow \pi$. By universality of limits, this defines an $\Alg$-morphism
\begin{flalign}
\Ranpi\AA(f) : \Ranpi\AA(M)\longrightarrow \Ranpi\AA(M^\prime)~,
\end{flalign}
for any $\Loc$-morphism $f:M\to M^\prime$. It is easy to show that the construction above defines a functor
\begin{flalign}
\Ranpi\AA : \Loc\longrightarrow \Alg~.
\end{flalign}
Moreover, there is a natural transformation $\epsilon : \Ranpi\AA\circ \pi \Rightarrow \AA$
with components
\begin{flalign}
\epsilon_{S}^{} := \pr_{(S,\id_{\pi(S)})} : \Ranpi\AA(\pi(S)) \longrightarrow \AA(S)~,
\end{flalign}
for all objects $S$ in $\Str$, given by the canonical projections from the limit \eqref{eqn:RanpiLimit} to the factor
labeled by the  object $(S,\id_{\pi(S)})$ in $\pi(S)\downarrow \pi$.
\begin{theo}[{\cite[Theorem X.3.1]{MacLane}}]\label{theo:Kanextension}
The functor $\Ranpi\AA : \Loc\to \Alg$ together with the natural
transformation $\epsilon : \Ranpi\AA\circ \pi \Rightarrow \AA$ constructed above
is a right Kan extension of $\AA : \Str\to \Alg$ along $\pi : \Str\to\Loc$.
\end{theo}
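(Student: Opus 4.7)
The plan is to verify the three bookkeeping items needed to prove the universal property: functoriality of $\Ranpi\AA$, naturality of $\epsilon$, and existence and uniqueness of the factorization $\alpha: \BB \Rightarrow \Ranpi\AA$ for any competing cone $(\BB, \zeta)$. All three reduce to repeated applications of the universal property of the limits defining $\Ranpi\AA(M)$ in \eqref{eqn:RanpiLimit}, so the task is essentially bookkeeping once the right cones are identified.

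First, I would check that $\Ranpi\AA$ is a functor. For this I need to show that $\Ranpi\AA(\id_M) = \id_{\Ranpi\AA(M)}$ and $\Ranpi\AA(f^\prime \circ f) = \Ranpi\AA(f^\prime) \circ \Ranpi\AA(f)$. Both identities follow immediately by uniqueness of maps into a limit: the composite $\Ranpi\AA(f^\prime) \circ \Ranpi\AA(f)$ and $\Ranpi\AA(f^\prime\circ f)$ are two maps $\Ranpi\AA(M) \to \Ranpi\AA(M^{\prime\prime})$ that, after post-composition with each projection $\pr_{(S^{\prime\prime},h^{\prime\prime})}$, agree because in each case they equal $\pr_{(S^{\prime\prime}, h^{\prime\prime}\circ f^\prime\circ f)}$. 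Naturality of $\epsilon$, i.e.\ $\AA(g) \circ \epsilon_S = \epsilon_{S^\prime}\circ \Ranpi\AA(\pi(g))$ for every $\Str$-morphism $g: S\to S^\prime$, is verified the same way: the morphism $g$ may be regarded as a morphism $(S, \id_{\pi(S)}\circ \pi(g)) \to (S^\prime, \id_{\pi(S^\prime)})$ in $\pi(S)\downarrow \pi$, so the cone property of the projections from the limit $\Ranpi\AA(\pi(S))$ gives exactly the desired equality.

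The heart of the argument is the universal property. Given $\BB: \Loc \to \Alg$ and $\zeta: \BB\circ\pi \Rightarrow \AA$, I construct $\alpha_M: \BB(M) \to \Ranpi\AA(M)$ for each $M$ in $\Loc$ as follows. For each object $(S, h)$ of $M\downarrow\pi$, the composite
\begin{flalign}
\zeta_S \circ \BB(h) \,:\, \BB(M)\longrightarrow \BB(\pi(S)) \longrightarrow \AA(S)
\end{flalign}
is a candidate leg of a cone. To confirm it is a cone over $\AA\circ \QQ^M$, let $g: (S, h) \to (\widetilde{S}, \widetilde{h})$ be a morphism in $M\downarrow \pi$, meaning $\pi(g)\circ h = \widetilde{h}$. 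Then
\begin{flalign}
\AA(g)\circ \zeta_S\circ \BB(h) \,=\, \zeta_{\widetilde{S}}\circ \BB(\pi(g))\circ \BB(h)\,=\, \zeta_{\widetilde{S}}\circ \BB(\widetilde{h})~,
\end{flalign}
using naturality of $\zeta$ and functoriality of $\BB$. Hence the universal property of the limit \eqref{eqn:RanpiLimit} produces a unique $\Alg$-morphism $\alpha_M : \BB(M) \to \Ranpi\AA(M)$ with $\pr_{(S,h)}\circ \alpha_M = \zeta_S\circ \BB(h)$ for all $(S, h)$. Naturality of $\alpha$ follows by yet another uniqueness argument: for $f: M\to M^\prime$, both $\Ranpi\AA(f)\circ \alpha_M$ and $\alpha_{M^\prime}\circ \BB(f)$ become, after composing with $\pr_{(S^\prime,h^\prime)}$ on $\Ranpi\AA(M^\prime)$, equal to $\zeta_{S^\prime}\circ \BB(h^\prime\circ f)$.

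Finally, the identity $\zeta = \epsilon \cdot (\alpha\pi)$ is immediate by evaluating at an object $S$ of $\Str$ and using the defining property of $\alpha_{\pi(S)}$ at the object $(S, \id_{\pi(S)})$, where $\epsilon_S = \pr_{(S,\id_{\pi(S)})}$. Uniqueness of $\alpha$ is also immediate: any natural transformation $\alpha^\prime: \BB\Rightarrow \Ranpi\AA$ satisfying $\epsilon\cdot(\alpha^\prime\pi) = \zeta$ must, for each $M$ and each $(S,h)$ in $M\downarrow \pi$, have $\pr_{(S,h)}\circ \alpha^\prime_M$ equal to $\zeta_S\circ \BB(h)$ (use naturality of $\alpha^\prime$ with respect to $h: M\to \pi(S)$ together with the factorization at $M = \pi(S)$), and these components uniquely determine $\alpha^\prime_M$ by the universal property. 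I expect no real obstacle; the only mildly delicate point is keeping the direction of all the $\Loc$-morphisms $h$ straight so that the cone condition uses naturality of $\zeta$ rather than some nonexistent naturality in the wrong variable.
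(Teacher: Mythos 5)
Your proof is correct and is essentially the standard pointwise-limit argument of Mac Lane's Theorem X.3.1, which is exactly what the paper cites in lieu of giving its own proof (the paper only spells out the construction and verifies naturality of $\epsilon$ explicitly in element-wise form after the theorem statement). One small slip: in your naturality check for $\epsilon$, the morphism induced by $g:S\to S^\prime$ in $\pi(S)\downarrow\pi$ is $(S,\id_{\pi(S)})\to(S^\prime,\pi(g))$, i.e.\ the \emph{target} carries the composite $\id_{\pi(S^\prime)}\circ\pi(g)=\pi(g)$ --- the objects as you wrote them do not typecheck, although the identity $\pr_{(S^\prime,\pi(g))}=\epsilon_{S^\prime}\circ\Ranpi\AA(\pi(g))$ that you actually use is the right one.
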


It is instructive to provide also more explicit formulas for the right Kan extension constructed above:
Given any object $M$ in $\Loc$, the limit in \eqref{eqn:RanpiLimit} can be expressed as
\begin{flalign}\label{eqn:RanpiLimitexplicit}
\Ranpi\AA(M) = \Big\{a \in \!\!\!\prod\limits_{(S,h)\in (M\downarrow\pi)_0} \!\!\!\AA(S) ~:~
\AA(g)\big(a(S,h)\big) = a(\widetilde{S},\widetilde{h})\,,~\forall g:(S,h)\to (\widetilde{S},\widetilde{h})  \Big\}~.
\end{flalign}
In this expression we have regarded elements 
$a\in \!\!\!\prod\limits_{(S,h)\in (M\downarrow\pi)_0}\!\!\!\AA(S)$ of the product as mappings
\begin{flalign}
(M\downarrow\pi)_0 \ni (S,h) \longmapsto a(S,h) \in \AA(S)
\end{flalign}
from the objects $(M\downarrow\pi)_0$ of the category $M\downarrow\pi$
to the functor $\AA\circ\QQ^M$.
Given any $\Loc$-morphism $f : M\to M^\prime$, the $\Alg$-morphism
$\Ranpi\AA(f) : \Ranpi\AA(M)\to\Ranpi\AA(M^\prime)$ maps an element
$a \in \Ranpi\AA(M)$ to the element in $\Ranpi\AA(M^\prime)$ specified by 
\begin{flalign}\label{eqn:Ranpimorphexplicit}
\big(\Ranpi\AA(f)(a)\big)(S^\prime,h^\prime) = a(S^\prime,h^\prime\circ f)~,
\end{flalign}
for all objects $(S^\prime,h^\prime)$ in $M^\prime\downarrow \pi$.
Finally, the natural transformation $\epsilon : \Ranpi\circ \pi \Rightarrow \AA$ has components
\begin{flalign}
\epsilon_S^{} : \Ranpi\AA(\pi(S)) \longrightarrow  \AA(S)~,~~ a\longmapsto a(S, \id_{\pi(S)})~,
\end{flalign}
for all objects $S$ in $\Str$. The fact that $\epsilon$ is a natural transformation, i.e.\ that
for any $\Str$-morphism $g : S\to S^\prime$ the diagram
\begin{flalign}
\xymatrix{
\ar[d]_-{\epsilon_S^{}}\Ranpi\AA(\pi(S)) \ar[rr]^-{\Ranpi\AA(\pi(g))} && \Ranpi\AA(\pi(S^\prime))\ar[d]^-{\epsilon_{S^\prime}^{}}\\
\AA(S) \ar[rr]_-{\AA(g)} && \AA(S^\prime)
}
\end{flalign}
in $\Alg$ commutes, may also be confirmed by an explicit computation: Mapping an element
$a\in \Ranpi\AA(\pi(S))$ along the upper path of this diagram, we obtain
\begin{subequations}
\begin{flalign}
\epsilon_{S^\prime}^{}\big(\Ranpi\AA(\pi(g))(a)\big) = 
\big(\Ranpi\AA(\pi(g))(a)\big)(S^\prime,\id_{\pi(S^\prime)})
= a(S^\prime,\pi(g))~,
\end{flalign}
while going the lower path we obtain
\begin{flalign}
\AA(g)\big(\epsilon_S^{}(a)\big) = \AA(g)\big(a(S,\id_{\pi(S)})\big)~.
\end{flalign}
\end{subequations}
Equality holds true because of the compatibility conditions in \eqref{eqn:RanpiLimitexplicit} and the fact that
the $\Str$-morphism $g: S\to S^\prime$ defines a $\pi(S)\downarrow \pi$-morphism
$g : (S,\id_{\pi(S)}) \to (S^\prime, \pi(g))$.
\sk

At first glance, the functor $\Ranpi\AA : \Loc\to \Alg$ of Theorem \ref{theo:Kanextension},
which we later would like to interpret as a quantum field theory on $\Loc$,
seems to be non-local: For an object $M$ in $\Loc$
the algebra $\Ranpi\AA(M)$ is constructed as a limit over the under-category $M\downarrow \pi$ 
[cf.\ \eqref{eqn:RanpiLimit} and \eqref{eqn:RanpiLimitexplicit}] and hence it seems to
depend on algebras $\AA(S)$ associated to 
structured spacetimes $S$ whose underlying spacetime $\pi(S)$ is larger than $M$.
Using that, by assumption, $\pi :\Str\to \Loc$ is a category fibered in groupoids,
we shall now show that $\Ranpi\AA(M)$ is isomorphic to a limit over the groupoid
$\pi^{-1}(M)$ and hence just depends on the algebras $\AA(S)$ for structured spacetimes
$S$ over $M$.
\begin{theo}\label{theo:initial}
If $\pi:\Str\to \Loc$ is a category fibered in groupoids (or just a fibered category), then,
for any object $M$ in $\Loc$, there exists a canonical isomorphism
\begin{flalign}
\xymatrix{
\Ranpi\AA(M) ~\ar[r]^-{\simeq}& ~\lim \AA\vert_{\pi^{-1}(M)}^{}
}~~,
\end{flalign}
where $\AA\vert_{\pi^{-1}(M)} : \pi^{-1}(M)\to\Alg$ 
denotes the restriction of the functor $\AA : \Str \to \Alg$
to the subcategory $\pi^{-1}(M)$ of $\Str$. 
\end{theo}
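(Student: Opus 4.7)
My plan is to show that the natural inclusion functor $\iota : \pi^{-1}(M) \to M\downarrow \pi$, sending $S\mapsto (S,\id_M)$ and acting as the identity on morphisms, is an \emph{initial functor} in the sense relevant for limits (i.e.\ for every object $(S,h)$ in $M\downarrow \pi$, the comma category $\iota\downarrow (S,h)$ is non-empty and connected). Once this is established, a standard result about initial functors (see \cite[Section IX.3]{MacLane}) gives the canonical isomorphism
\begin{flalign}
\Ranpi\AA(M) \,=\, \lim\bigl(\AA\circ \QQ^M\bigr) \,\simeq\, \lim\bigl(\AA\circ \QQ^M \circ \iota\bigr) \,=\, \lim \AA\vert_{\pi^{-1}(M)}^{}~,
\end{flalign}
where the last equality uses $\QQ^M\circ \iota(S) = S$, so that $\AA\circ\QQ^M\circ\iota = \AA\vert_{\pi^{-1}(M)}^{}$.

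To check that $\iota\downarrow (S,h)$ is non-empty, I would use the hypothesis that $\pi:\Str\to\Loc$ is a fibered category: for the $\Loc$-morphism $h : M\to \pi(S)$ and the object $S$ in $\Str$ there exists a cartesian $\Str$-morphism $h_\ast : h^\ast S \to S$ with $\pi(h_\ast) = h$ and $\pi(h^\ast S) = M$. Then $h^\ast S$ is an object of $\pi^{-1}(M)$ and $h_\ast$ defines a morphism $\iota(h^\ast S) = (h^\ast S,\id_M) \to (S,h)$ in $M\downarrow \pi$, since $\pi(h_\ast)\circ \id_M = h$. This produces an object $(h^\ast S, h_\ast)$ of $\iota\downarrow (S,h)$.

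For connectedness, I would exploit the universal property of the cartesian morphism $h_\ast$. Given any two objects $(T_1,\alpha_1)$ and $(T_2,\alpha_2)$ of $\iota\downarrow(S,h)$, i.e.\ objects $T_i$ in $\pi^{-1}(M)$ together with $\Str$-morphisms $\alpha_i : T_i\to S$ satisfying $\pi(\alpha_i)= h$, the cartesian property of $h_\ast$ applied to $\alpha_i$ and to the $\Loc$-morphism $\id_M : \pi(T_i)=M\to M=\pi(h^\ast S)$ yields unique $\Str$-morphisms $\widetilde{\alpha}_i : T_i \to h^\ast S$ with $\pi(\widetilde{\alpha}_i) = \id_M$ and $h_\ast\circ\widetilde{\alpha}_i = \alpha_i$. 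Because $\pi(\widetilde{\alpha}_i)=\id_M$, each $\widetilde{\alpha}_i$ is a morphism in $\pi^{-1}(M)$, and the identity $h_\ast\circ\widetilde{\alpha}_i = \alpha_i$ means precisely that $\widetilde{\alpha}_i$ defines a morphism $(T_i,\alpha_i)\to (h^\ast S, h_\ast)$ in $\iota\downarrow(S,h)$. This produces the zigzag $(T_1,\alpha_1)\to (h^\ast S, h_\ast) \leftarrow (T_2,\alpha_2)$, establishing connectedness.

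The only mild subtlety, and arguably the main point to get right, is bookkeeping the direction of the universal property (cartesian lift versus the comma category in the correct orientation for \emph{limits}, not colimits); the groupoid hypothesis on the fibers is not needed for this step, which is why the statement holds for general fibered categories. With these two facts in hand, initiality of $\iota$ follows and the claimed isomorphism is canonical by the universal property of the limit, realized explicitly by restricting the formula \eqref{eqn:RanpiLimitexplicit} to those pairs $(S,h)$ with $h=\id_M$.
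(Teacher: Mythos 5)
Your proposal is correct and follows essentially the same route as the paper: both reduce the statement to showing that the inclusion $\iota:\pi^{-1}(M)\to M\downarrow\pi$, $S\mapsto (S,\id_M)$, is an initial functor, establish non-emptiness of the relevant comma categories via the existence of a cartesian lift $h_\ast:h^\ast S\to S$, and establish connectedness via the universal property of that cartesian lift (the paper phrases connectedness through explicit zig-zags, you phrase it by mapping every object of $\iota\downarrow(S,h)$ into $(h^\ast S,h_\ast)$, which is the same argument). Your remark that the groupoid hypothesis on the fibers is not needed also matches the paper's parenthetical ``or just a fibered category''.
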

\begin{proof}
Let $M$ be any object in $\Loc$. Then there exists a functor $\iota : \pi^{-1}(M) \to M\downarrow \pi$
which assigns to an object $S$ in $\pi^{-1}(M)$
the object $\iota(S) := (S, \id_M)$ and to a $\pi^{-1}(M)$-morphism $g : S\to S^\prime$
the $M\downarrow\pi$-morphism
$\iota(g) = g : (S,\id_M) \to (S^\prime,\id_M)$. Notice that
\begin{flalign}
\AA\circ\QQ^M\circ \iota = \AA\vert_{\pi^{-1}(M)}
\end{flalign}
as functors from $\pi^{-1}(M)$ to $\Alg$. Hence, there exists a canonical $\Alg$-morphism
\begin{flalign}
\xymatrix{
\Ranpi\AA(M) = \lim \Big(M\downarrow \pi \stackrel{\QQ^M}{\longrightarrow}  \Str\stackrel{\AA}{\longrightarrow} \Alg\Big) 
 ~\ar[r]& ~\lim\Big( \AA\vert_{\pi^{-1}(M)}^{} : \pi^{-1}(M)\to \Alg  \Big)
}~.
\end{flalign}
Our claim that this morphism is an isomorphism would follow if we can show that
the functor $\iota : \pi^{-1}(M)\to M\downarrow\pi$ is initial, i.e.\
the dual notion of final, cf.\ \cite[Chapter IX.3]{MacLane}.
This is the goal of the rest of this proof.
The functor $\iota: \pi^{-1}(M)\to M\downarrow\pi$ is by definition {\em initial} if the following properties hold true:
\begin{enumerate}
\item For all objects $(S,h)$ in $M\downarrow \pi$ there exists
an object $S^\prime$ in $\pi^{-1}(M)$ and an $M\downarrow \pi$-morphism $\iota(S^\prime) \to (S,h)$;
\item For any object $(S,h)$ in $M\downarrow \pi$,  any objects $S^\prime,S^{\prime\prime}$ in $\pi^{-1}(M)$
and any $M\downarrow \pi$-diagram $\iota(S^\prime) \to (S,h)\leftarrow \iota(S^{\prime\prime}) $,
there exists a zig-zag of $\pi^{-1}(M)$-morphisms
\begin{flalign}
S^\prime = S_0 \leftarrow S_1 \rightarrow S_2 \leftarrow S_3\rightarrow \cdots \rightarrow S_{2n} =S^{\prime\prime}
\end{flalign}
and $M\downarrow\pi$-morphisms $\iota(S_i) \to (S,h)$, for $i=0,\dots,2n$, such that the diagrams
\begin{flalign}
\xymatrix{
\ar[dr]\iota(S_{2k}) & \ar[r]\ar[d]\iota(S_{2k+1}) \ar[l]& \ar[dl]\iota(S_{2k+2})\\
&(S,h)&
}
\end{flalign}
commute, for all $k=0,\dots,n-1$.
\end{enumerate}
We  show the first property: Let $(S,h)$ be any object in $M\downarrow \pi$.
By Definition \ref{def:fiberedcategory}, there exists a (cartesian) $\Str$-morphism
$g : S^\prime \to S$ such that $\pi(g) = h : M\to \pi(S)$. It follows that 
$S^\prime$ is an object in $\pi^{-1}(M)$ and that $g : \iota(S^\prime) = (S^\prime,\id_M) \to (S,h)$
is an $M\downarrow \pi$-morphism.
\sk

We next show the second property: Let $(S,h)$ be any object in $M\downarrow \pi$,  
let $S^\prime,S^{\prime\prime}$ be any objects in $\pi^{-1}(M)$
and let $\iota(S^\prime) \to (S,h)\leftarrow \iota(S^{\prime\prime}) $
be any $M\downarrow \pi$-diagram. The latter is given by two
$\Str$-morphisms $g^\prime : S^\prime \to S$ and $g^{\prime\prime}: S^{\prime\prime}\to S$
satisfying $\pi(g^\prime) = \pi(g^{\prime\prime}) = h : M\to \pi(S)$.
Using Definition \ref{def:fiberedcategory}, we further can find a cartesian
$\Str$-morphism $g : S_2 \to S$ such that $\pi(g) = h : M\to\pi(S)$. 
Notice that $S_2$ is an object in $\pi^{-1}(M)$ and that
$g : \iota(S_2) \to (S,h)$ is an $M\downarrow\pi$-morphism.
Moreover, we obtain the commutative diagram
\begin{subequations}
\begin{flalign}
\xymatrix{
\ar[dr]_-{\id_M}\pi(S^\prime)=M \ar[rr]^-{\pi(g^\prime) =h } && \pi(S)\\
&\pi(S_2) =M\ar[ru]_-{\pi(g) =h}&
}
\end{flalign}
in $\Loc$ and the incomplete diagram
\begin{flalign}
\xymatrix{
S^\prime \ar[rr]^-{g^\prime} && S\\
&S_2\ar[ru]_-{g }&
}
\end{flalign}
\end{subequations}
in $\Str$. By Definition \ref{def:cartesian}, there exists a unique
$\Str$-morphism  $\widetilde{g^\prime} : S^\prime \to S_2$ completing this diagram
to a commutative diagram, such that $\pi(\widetilde{g^\prime}) =\id_M$.
Hence, $\widetilde{g^\prime}$ is a $\pi^{-1}(M)$-morphism. 
Replacing $S^\prime$ by $S^{\prime\prime}$ in this construction,
we obtain a $\pi^{-1}(M)$-morphism $\widetilde{g^{\prime\prime}} : S^{\prime\prime}\to S_2$.
This yields the following zig-zag of $\pi^{-1}(M)$-morphisms
\begin{flalign}
\xymatrix{
S^\prime & \ar[l]_-{\id_{S^\prime}} S^\prime  \ar[r]^-{\widetilde{g^\prime}} &S_{2} & \ar[l]_-{\widetilde{g^{\prime\prime}}}  S^{\prime\prime}\ar[r]^-{\id_{S^{\prime\prime}}} & S^{\prime\prime}
}~,
\end{flalign}
and the associated diagrams
\begin{flalign}
\xymatrix{
\ar[dr]_-{g^\prime}\iota(S^\prime) &\ar[d]_-{g^\prime} \ar[l]_-{\id_{S^\prime}} \iota(S^\prime)  \ar[r]^-{\widetilde{g^\prime}} 
& \ar[dl]^-{g}\iota(S_{2}) &\ar[dr]_-{g}\iota(S_{2})& \ar[d]_-{g^{\prime\prime}}\ar[l]_-{\widetilde{g^{\prime\prime}}}  \iota(S^{\prime\prime})\ar[r]^-{\id_{S^{\prime\prime}}} & \ar[dl]^-{g^{\prime\prime}} \iota(S^{\prime\prime})\\
&(S,h)&& &(S,h)&
}
\end{flalign}
in $M\downarrow \pi$ commute by construction.
\end{proof}

Let us make this isomorphism more explicit:
Given any object $M$ in $\Loc$,
the limit of the restricted functor $\AA\vert_{\pi^{-1}(M)} : \pi^{-1}(M) \to \Alg$
is given by
\begin{flalign}\label{eqn:limArestrexplicit}
\lim\AA\vert_{\pi^{-1}(M)} = \Big\{a \in \!\!\!\prod\limits_{S\in \pi^{-1}(M)_0} \!\!\!\AA(S) ~:~
\AA(g)\big(a(S)\big) = a(\widetilde{S})\,,~\forall g: S \to \widetilde{S}  \Big\}~,
\end{flalign}
where we again regard elements of the product as mappings 
\begin{flalign}
\pi^{-1}(M)_0\ni S\longmapsto a(S) \in \AA(S)
\end{flalign}
from the objects $\pi^{-1}(M)_0$ of the groupoid $\pi^{-1}(M)$ to the functor $\AA$.
Denoting the isomorphism established in Theorem \ref{theo:initial} by
\begin{subequations}\label{eqn:ordinarykappamap}
\begin{flalign}
\kappa_M^{} : \Ranpi\AA(M)\longrightarrow \lim\AA\vert_{\pi^{-1}(M)}~,
\end{flalign}
we explicitly have that
\begin{flalign}
\big(\kappa_M^{}(a)\big)(S) := a(S,\id_M)~,
\end{flalign}
\end{subequations}
for all $a\in\Ranpi\AA(M)$ and all objects $S$ in $\pi^{-1}(M)$. 
To find an explicit expression for the inverse of $\kappa_M^{}$, 
we shall fix for each object $S$ in $\Str$ and each $\Loc$-morphism
$f : M\to \pi(S)$ a choice of pullback $f_\ast : f^\ast S \to S$.
Technically, this is called a {\em cleavage} 
of the fibered category $\pi:\Str\to \Loc$, cf.\ \cite{Vistoli}.
We set
\begin{flalign}
\big(\kappa_M^{-1}(a)\big)(S,h) := \AA(h_\ast)\big(a(h^\ast S)\big)~,
\end{flalign}
for all $a\in\lim\AA\vert_{\pi^{-1}(M)}$ and all objects $(S,h)$ in $M\downarrow \pi$.
It is easy to verify that $\kappa_M^{-1}$ is the inverse of $\kappa_M^{}$ and hence that
it does not depend on the choice of cleavage: We have that
\begin{subequations}
\begin{flalign}
\big(\kappa_M^{}\circ \kappa_M^{-1}(a)\big)(S) = \big(\kappa_M^{-1}(a)\big)(S,\id_M) = 
\AA({\id_M}_\ast)\big(a({\id_M}^\ast S)\big) = a(S)~,
\end{flalign}
for all $a\in\lim\AA\vert_{\pi^{-1}(M)}$ and all objects $S$ in $\pi^{-1}(M)$,
where in the last equality we have used the compatibility condition in \eqref{eqn:limArestrexplicit}
for the $\pi^{-1}(M)$-morphism ${\id_M}_\ast : {\id_M}^\ast S\to S$.
Moreover, we have that
\begin{flalign}
\big(\kappa_M^{-1}\circ \kappa_M^{}(a)\big)(S,h) = \AA(h_\ast)\big(\big(\kappa_M^{}(a)\big)(h^\ast S)\big)
= \AA(h_\ast)\big(a(h^\ast S,\id_M)\big)  = a(S,h)~,
\end{flalign}
\end{subequations}
for all $a\in \Ranpi\AA(M)$ and all objects $(S,h)$ in $M\downarrow \pi$,
where in the last equality we have used the compatibility condition in \eqref{eqn:RanpiLimitexplicit}
for the $M\downarrow \pi$-morphism $h_\ast : (h^\ast S,\id_M) \to (S,h)$.
\sk

Notice that the assignment of algebras $M \mapsto \lim\AA\vert_{\pi^{-1}(M)}$ 
does {\em not} admit an obvious functorial structure because the assignment of
groupoids $M\mapsto \pi^{-1}(M)$, which determines the shape of the diagrams $\AA\vert_{\pi^{-1}(M)}$, 
is only pseudo-functorial (after choosing any cleavage), cf.\ \cite{Vistoli}. We may however make use of
our canonical isomorphisms $\kappa_M^{}$ given in \eqref{eqn:ordinarykappamap}
in order to {\em equip} the assignment of algebras  $M \mapsto \lim\AA\vert_{\pi^{-1}(M)}$ 
with the functorial structure induced by $\Ranpi\AA : \Loc\to\Alg$.
By construction, the $\kappa_M^{}$ will then become the components of a natural isomorphism
between $\Ranpi\AA : \Loc\to\Alg$ and a more convenient
and efficient model for the right Kan extension which is based on the assignment $M \mapsto \lim\AA\vert_{\pi^{-1}(M)}$. 
Concretely, the construction is as follows: We define the functor 
\begin{subequations}\label{eqn:UpiAAfunctor}
\begin{flalign}
\U\AA : \Loc \longrightarrow \Alg
\end{flalign}
by setting
\begin{flalign}
\U\AA (M) := \lim\AA\vert_{\pi^{-1}(M)} = \Big\{a \in \!\!\!\prod\limits_{S\in \pi^{-1}(M)_0} \!\!\!\AA(S) ~:~
\AA(g)\big(a(S)\big) = a(\widetilde{S})\,,~\forall g: S \to \widetilde{S}  \Big\}~,
\end{flalign}
for all objects $M$ in $\Loc$, and
\begin{flalign}
\U\AA(f) := \kappa_{M^\prime}^{}\circ \Ranpi\AA(f) \circ \kappa_{M}^{-1} : 
\U\AA (M) \longrightarrow \U\AA (M^\prime)~, 
\end{flalign}
\end{subequations}
for all $\Loc$-morphisms $f:M\to M^\prime$. Explicitly, the $\Alg$-morphism $\U\AA(f)$
acts as
\begin{flalign}\label{eqn:UpiAAmorphexplicit}
\big(\U\AA(f)(a)\big)(S^\prime) = \AA(f_\ast)\big(a(f^\ast S^\prime)\big)~,
\end{flalign}
for all $a\in \U\AA (M)$ and all objects $S^\prime$ in $\pi^{-1}(M^\prime)$.
By construction, we obtain
\begin{cor}\label{cor:effectiveRan}
The natural transformation $\kappa : \Ranpi\AA \Rightarrow \U\AA$ with components
given in  \eqref{eqn:ordinarykappamap} is a natural isomorphism. In particular,
the functor $\U\AA : \Loc\to \Alg$ given by \eqref{eqn:UpiAAfunctor} 
is a model for the right Kan extension of $\AA :\Str\to \Alg$ along $\pi:\Str\to \Loc$.
\end{cor}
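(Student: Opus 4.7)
The plan is to observe that essentially all the hard work has already been done. Theorem \ref{theo:initial} shows that each component $\kappa_M^{}$ is an isomorphism, and the explicit inverse $\kappa_M^{-1}(a)(S,h) = \AA(h_\ast)(a(h^\ast S))$ has already been written down and verified in the paragraphs immediately preceding the corollary. Only three routine items remain to be settled: functoriality of $\U\AA$, naturality of the family $\{\kappa_M^{}\}$, and descent of the right Kan extension property from $\Ranpi\AA$ to $\U\AA$.

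Functoriality of $\U\AA$ and naturality of $\kappa$ are both immediate from the definition \eqref{eqn:UpiAAfunctor}. Since $\U\AA(f)$ is \emph{defined} as $\kappa_{M^\prime}^{} \circ \Ranpi\AA(f) \circ \kappa_M^{-1}$, preservation of identities and composition is inherited automatically from the functor $\Ranpi\AA$ via the isomorphisms $\kappa_M^{}$, and the naturality square $\U\AA(f) \circ \kappa_M^{} = \kappa_{M^\prime}^{} \circ \Ranpi\AA(f)$ commutes tautologically. As a cross-check, one verifies that this transport formula reproduces the explicit expression \eqref{eqn:UpiAAmorphexplicit}: unfolding gives
\begin{flalign*}
\big(\kappa_{M^\prime}^{} \circ \Ranpi\AA(f) \circ \kappa_M^{-1}\big)(a)(S^\prime)
= \big(\Ranpi\AA(f)(\kappa_M^{-1}(a))\big)(S^\prime,\id_{M^\prime})
= \kappa_M^{-1}(a)(S^\prime, f)
= \AA(f_\ast)\big(a(f^\ast S^\prime)\big),
\end{flalign*}
using \eqref{eqn:Ranpimorphexplicit} and the two explicit formulas for $\kappa_M^{\pm 1}$. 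A small side-check, if desired, is that the right-hand side of \eqref{eqn:UpiAAmorphexplicit} indeed lands in $\lim\AA\vert_{\pi^{-1}(M^\prime)}^{}$, which reduces to functoriality of $\AA$ applied to the cartesian lifts of $\pi^{-1}(M^\prime)$-morphisms provided by Definition \ref{def:cartesian}.

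For the final assertion that $\U\AA$ is a model for the right Kan extension, I would invoke the standard principle that right Kan extensions are preserved under natural isomorphism of the extending functor. Concretely, given the universal pair $(\Ranpi\AA, \epsilon)$ supplied by Theorem \ref{theo:Kanextension}, the pair consisting of $\U\AA$ together with the natural transformation $\epsilon \circ (\kappa^{-1} \pi) : \U\AA \circ \pi \Rightarrow \AA$ (where $\kappa^{-1}\pi$ denotes the whiskering with $\pi$) is again universal: for any functor $\BB : \Loc \to \Alg$ and any natural transformation $\zeta : \BB \circ \pi \Rightarrow \AA$, the unique factorization $\alpha : \BB \Rightarrow \Ranpi\AA$ through $\epsilon$ furnishes $\kappa \circ \alpha : \BB \Rightarrow \U\AA$ as the corresponding unique factorization through the new counit, and uniqueness transports across $\kappa$ because $\kappa$ is invertible. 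There is no real obstacle in this argument; the only bookkeeping is a short diagram chase verifying the whiskering identity above.
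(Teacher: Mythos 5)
Your proposal is correct and follows exactly the paper's route: the paper proves the corollary "by construction," i.e.\ by the same transport-of-structure argument you spell out — $\U\AA(f)$ is defined as $\kappa_{M'}^{}\circ\Ranpi\AA(f)\circ\kappa_M^{-1}$ using the componentwise isomorphisms from Theorem \ref{theo:initial}, so functoriality, naturality of $\kappa$, and the inheritance of the universal property are all automatic. Your cross-check against \eqref{eqn:UpiAAmorphexplicit} and the explicit verification that the Kan extension property transports along a natural isomorphism are just the details the paper leaves implicit.
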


\begin{rem}\label{rem:physicalinterpretation}
The algebra $\U\AA(M)$ assigned by our model  \eqref{eqn:UpiAAfunctor} for the right Kan extension
to a spacetime $M$ has the following natural physical interpretation: Motivated by our examples
in Section \ref{sec:setup}, $\pi^{-1}(M)$ should be interpreted as the groupoid 
of geometric structures and their gauge transformations over $M$. Elements in $\U\AA(M)$ are
then functions on the space of geometric structures $\pi^{-1}(M)_0$ with values in the quantum field theory
$\AA : \Str \to \Alg$ in the sense that they assign to each structure $S\in\pi^{-1}(M)_0$ 
an element $a(S) \in \AA(S)$ of its corresponding algebra. 
The compatibility conditions in  $\U\AA(M)$ (cf.\ \eqref{eqn:UpiAAfunctor})
enforce gauge-equivariance of such $\AA$-valued functions.
In other words, $\U\AA(M)$ is an algebra of observables which describes 
gauge invariant combinations of classical (i.e.\ not quantized) observables for the 
geometric structures over $M$ and quantum observables for the quantum 
field theory $\AA : \Str\to\Alg$ for each structure $S\in\pi^{-1}(M)_0$ over $M$. 
In the spirit of Doplicher, Haag and Roberts \cite{DHR1,DHR2} 
and its generalization to Lorentzian manifolds \cite{Ruzzi,BR},
a major achievement would be to understand 
which aspects of the information encoded by the groupoids $\pi^{-1}(M)$
may be reconstructed from the right Kan extension $\U\AA :\Loc\to\Alg$.
Such questions however lie beyond the scope of this paper as they require 
a careful treatment of functional analytical aspects of quantum field theories, 
as well as their states and representation theory.
\sk

In the simplest scenario where the category fibered in groupoids 
$\pi: \Loc\times G\to \Loc$ is the one corresponding to a ``global gauge group'' 
$G$ (cf.\ Example \ref{ex:globalgauge}), the algebra $\U\AA(M)$
is just the fixed-point subalgebra of the $G$-action on $\AA(M)$. 
This agrees with Fewster's construction in \cite{FewsterAuto}.
For the particular case of Kaluza-Klein theories (cf.\ Remark \ref{rem:KK}),
$\U\AA(M)$ is the subalgebra of the algebra $\AA(M\times K)$ 
of an $m+k$-dimensional quantum field theory consisting of those
observables that are invariant under the isometries of $K$. If for example 
$K = \bbT^k$ is the usual flat $k$-torus, then the $\mathrm{Iso}^+(K)$-invariant
observables are those observables on $M\times K$ that
``have zero momentum along the extra-dimensions''. From the $m$-dimensional 
perspective, momentum along the extra-dimensions 
corresponds to additional mass terms, hence one may interpret 
such observables as ``low-energy'' observables arising from a Kaluza-Klein reduction.
\end{rem}

\section{\label{sec:properties}Properties}
A natural question to ask is whether the right Kan extension 
$\U\AA : \Loc\to \Alg$ of Corollary \ref{cor:effectiveRan} 
is an ordinary quantum field theory in the sense of \cite{Brunetti},
i.e.\ a quantum field theory on $\id_{\Loc}:\Loc\to\Loc$ according to Definition \ref{def:QFT}.
In general, the answer will be negative, unless the category fibered in groupoids $\pi : \Str\to\Loc$
satisfies suitable extension properties, mimicking for example the notion of a deterministic
time evolution. (See Remark \ref{rem:flabbyinterpretation} below for a more detailed 
physical interpretation of the following definition.) 
\begin{defi}\label{def:flabby}
\begin{itemize}
\item[a)] A category fibered in groupoids $\pi : \Str \to \Loc$ is called {\em flabby}
if for every object $S$ in $\Str$ and every $\Loc$-morphism $f : \pi(S)\to M^\prime$  
there exists a $\Str$-morphism $g : S\to S^\prime$ with the property $\pi(g) = f : \pi(S)\to M^\prime$.

\item[b)] A category fibered in groupoids $\pi : \Str \to \Loc$ is called {\em Cauchy flabby}
if the following two properties hold true: 1.)~For every object $S$ in $\Str$ and 
every Cauchy $\Loc$-morphism $f : \pi(S)\to M^\prime$  
there exists a $\Str$-morphism $g : S\to S^\prime$ such that $\pi(g) = f : \pi(S)\to M^\prime$.
2.)~Given two such $\Str$-morphisms $g:S\to S^\prime$ and $\widetilde{g}:S\to \widetilde{S^{\prime}}$,
there exists a $\pi^{-1}(M^\prime)$-morphism 
$g^\prime : S^\prime\to \widetilde{S^\prime}$ such that the $\Str$-diagram
\begin{flalign}\label{eqn:Cauchyflabbydiagram}
\xymatrix{
&\ar[ld]_-{g}S\ar[rd]^-{\widetilde{g}}&\\
S^\prime \ar[rr]_-{g^\prime}& & \widetilde{S}^\prime
}
\end{flalign}
commutes.
\end{itemize}
\end{defi}
\begin{rem}\label{rem:flabbyinterpretation}
Loosely speaking, the flabbiness condition formalizes the idea that the geometric structures
on $\Loc$ which are described by the category $\Str$ always 
admit an extension (possibly non-canonical) from smaller to larger spacetimes.
This is in analogy to the flabbiness condition in sheaf theory. As we shall see later,
flabbiness is a very restrictive condition that {\em is not} satisfied in our examples
of categories fibered in groupoids presented in Section \ref{sec:setup}. 
(It holds only for the very special case describing ``global gauge groups'', see Example
\ref{ex:globalgauge} and also Example \ref{ex:globalgauge2} below.)
On the other hand, the Cauchy flabbiness condition formalizes the idea
that the geometric structures on $\Loc$ which are described by the category $\Str$
admit a `time evolution' that is unique up to isomorphisms. As we shall clarify
in the examples at the end of this section, Cauchy flabbiness {\em is} satisfied
in many of our examples of categories fibered in groupoids presented in Section \ref{sec:setup}
after performing some obvious and well-motivated modifications. 
\end{rem}
\begin{theo}\label{theo:properties}
Let $\pi: \Str\to \Loc$ be a category fibered in groupoids
and $\AA :\Str \to\Alg$ a quantum field theory in the sense of Definition \ref{def:QFT}.
Then the right Kan extension $\U\AA : \Loc\to \Alg$ (cf.\ Corollary \ref{cor:effectiveRan}) 
satisfies the causality axiom. It satisfies the isotony axiom if and only if $\pi:\Str\to\Loc$ is flabby.
If $\pi:\Str\to\Loc$ is Cauchy flabby, then $\U\AA : \Loc\to \Alg$ satisfies the time-slice axiom.
The converse of this statement is not true, see Remark \ref{rem:converse} below.
\end{theo}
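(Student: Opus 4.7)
The plan is to unpack all three properties in terms of the explicit model \eqref{eqn:UpiAAfunctor} for $\U\AA$ and the formula \eqref{eqn:UpiAAmorphexplicit} for its action on $\Loc$-morphisms, and then reduce each to the corresponding axiom for $\AA$ combined with the fibered structure of $\pi$.

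\textbf{Causality.} Given a causally disjoint diagram $M_1 \xrightarrow{f_1} M \xleftarrow{f_2} M_2$ in $\Loc$, I want to show that the images of $\U\AA(f_1)$ and $\U\AA(f_2)$ commute in $\U\AA(M)$. Since $\U\AA(M)$ is a subalgebra of the pointwise product $\prod_{S\in \pi^{-1}(M)_0}\AA(S)$, it suffices to check, for each $S\in\pi^{-1}(M)_0$, that $\AA((f_1)_\ast)(a_1(f_1^\ast S))$ and $\AA((f_2)_\ast)(a_2(f_2^\ast S))$ commute in $\AA(S)$. Since $\pi((f_i)_\ast)=f_i$, the $\Str$-diagram $f_1^\ast S \to S \leftarrow f_2^\ast S$ projects to the given causally disjoint $\Loc$-diagram, so the causality axiom for $\AA$ yields the vanishing commutator.

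\textbf{Isotony.} The key preliminary observation is that for $f:M\to M'$ and $S\in\pi^{-1}(M)_0$, the flabbiness condition at $(S,f)$ is equivalent to the existence of $S'\in\pi^{-1}(M')_0$ and a $\pi^{-1}(M)$-isomorphism $S\simeq f^\ast S'$: if some $g:S\to S'$ with $\pi(g)=f$ exists, factoring through the cartesian $f_\ast: f^\ast S'\to S'$ produces $h:S\to f^\ast S'$ with $\pi(h)=\id_M$, automatically invertible since $\pi^{-1}(M)$ is a groupoid. For the ``if'' direction, assume flabbiness and $\U\AA(f)(a)=\U\AA(f)(a')$. Applying isotony of $\AA$ pointwise to \eqref{eqn:UpiAAmorphexplicit} gives $a(f^\ast S')=a'(f^\ast S')$ for every $S'\in \pi^{-1}(M')_0$, and the $\U\AA(M)$-compatibility then transports this equality along the isomorphisms $S\simeq f^\ast S'$ provided by flabbiness, yielding $a=a'$. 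For the ``only if'' direction, suppose flabbiness fails at $(S_0,f)$, so that $S_0$ lies in a connected component of $\pi^{-1}(M)$ disjoint from all pullbacks $f^\ast S'$. Define $a\in\U\AA(M)$ by $a(S):=\1_{\AA(S)}$ if $S$ lies in the component of $S_0$ and $a(S):=0$ otherwise. This is an element of the limit because in a groupoid morphisms only exist between isomorphic objects, and unital algebra maps preserve $\1$ and $0$; moreover $a\neq 0$ since $\1\neq 0$ in $\AA(S_0)$, while $\U\AA(f)(a)(S')=\AA(f_\ast)(a(f^\ast S'))=0$ for every $S'$, so $\U\AA(f)$ is not monic.

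\textbf{Time-slice.} Assume Cauchy flabbiness and let $f:M\to M'$ be a Cauchy $\Loc$-morphism. For any $g:S\to S'$ with $\pi(g)=f$, the algebra map $\AA(g)$ is invertible by the time-slice axiom of $\AA$. I will define a candidate inverse by $(\U\AA(f)^{-1}(b))(S):=\AA(g)^{-1}(b(S'))$, where $g:S\to S'$ is any morphism supplied by part 1.\ of Definition \ref{def:flabby}. Part 2.\ ensures that two such choices $g,\widetilde g$ are related by a $\pi^{-1}(M')$-morphism $g':S'\to\widetilde S'$ with $\widetilde g=g'\circ g$, and the compatibility of $b\in\U\AA(M')$ with $g'$ then gives $\AA(\widetilde g)^{-1}(b(\widetilde S'))=\AA(g)^{-1}(b(S'))$, so the definition is independent of the choice. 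The $\U\AA(M)$-compatibility of $\U\AA(f)^{-1}(b)$ follows by, for $h:S\to\widetilde S$ in $\pi^{-1}(M)$, choosing $\widetilde g := g\circ h^{-1}$ at $\widetilde S$. Finally, $\U\AA(f)\circ\U\AA(f)^{-1}=\id$ follows by taking $g=f_\ast$ in the definition, and $\U\AA(f)^{-1}\circ\U\AA(f)=\id$ follows from the cartesian factorization of any $g:S\to S'$ through $f_\ast:f^\ast S'\to S'$ combined with the $\U\AA(M)$-compatibility of $a$.

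The principal obstacle is the ``only if'' direction of isotony: constructing an explicit element in the kernel of $\U\AA(f)$ relies on the groupoid fact that connected components coincide with isomorphism classes, which is what makes the indicator-type element well-defined within the limit \eqref{eqn:UpiAAfunctor}. The remaining parts are essentially diagrammatic bookkeeping with cartesian morphisms and the explicit formula \eqref{eqn:UpiAAmorphexplicit}.
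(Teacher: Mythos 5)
Your proposal is correct and follows essentially the same route as the paper's proof: pointwise reduction of causality via the explicit formula \eqref{eqn:UpiAAmorphexplicit}, the observation that flabbiness at $(S,f)$ is equivalent to $S\simeq f^\ast S'$ for the isotony equivalence (with the same indicator element supported on the isomorphism class of $S_0$ for the contrapositive), and the choice-independent extension formula $\AA(g)^{-1}(b(S'))$ for time-slice. The only cosmetic difference is that you package time-slice as a two-sided inverse where the paper argues injectivity and surjectivity separately; the underlying computations are identical.
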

\begin{proof}
{\em Causality:} Let $M_1 \stackrel{f_1}{\longrightarrow}M \stackrel{f_2}{\longleftarrow} M_2$ be a
causally disjoint $\Loc$-diagram. Take any two elements
$a\in \U\AA(M_1)$ and $b\in \U\AA(M_2)$.
Using \eqref{eqn:UpiAAmorphexplicit}, we obtain
\begin{flalign}\label{eqn:commutatortmp}
\big[\U\AA(f_1)(a),\U\AA(f_2)(b)\big](S) = 
\big[\AA({f_1}_\ast) \big( a(f_1^\ast S) \big), \AA({f_2}_\ast)\big( b(f_2^\ast S) \big)\big]~,
\end{flalign}
for all objects $S$ in $\pi^{-1}(M)$, where the first commutator is in $\U\AA(M)$ and the second one is in $\AA(S)$. 
It then follows that \eqref{eqn:commutatortmp}
is equal to zero because $f_1^\ast S \stackrel{{f_1}_\ast}{\longrightarrow}S \stackrel{{f_2}_\ast}{\longleftarrow} f_2^\ast S$ 
projects via $\pi$ to our causally disjoint $\Loc$-diagram and $\AA:\Str\to\Alg$ satisfies 
by assumption the causality axiom. Hence, $\U\AA$ satisfies causality.
\sk

{\em Isotony:} 
Let us first prove the direction ``$\pi : \Str\to\Loc$ is flabby'' $\Rightarrow$ ``$\U\AA$ satisfies isotony'':
Let $f : M\to M^\prime$ be any $\Loc$-morphism. 
Let $a \in\U\AA(M)$ be any element such that $\U\AA(f)(a)=0$,
i.e.\ $\AA(f_\ast)(a(f^\ast S^\prime)) =0 $, for all objects $S^\prime$ in $\pi^{-1}(M^\prime)$. 
Using isotony of $\AA : \Str\to\Alg$, this implies that
$a(f^\ast S^\prime) =0 $, for all objects $S^\prime$ in $\pi^{-1}(M^\prime)$.
We have to show that $a(S) =0$, for all objects $S$ in $\pi^{-1}(M)$.
By the flabbiness assumption, there exists for every object $S$ in $\pi^{-1}(M)$ a
$\Str$-morphism $g : S\to S^\prime$  such that $\pi(g) = f: M\to M^\prime$.
As a consequence, $S$ is isomorphic (in $\pi^{-1}(M)$) to any pullback
$f^\ast S^\prime$ of $S^\prime$ to $M$ along $f$. 
(In fact, since $f_\ast: f^\ast S^\prime \to S^\prime$ is cartesian, 
we get a unique $\pi^{-1}(M)$-morphism $S \to f^\ast S^\prime$
that can be inverted as $\pi^{-1}(M)$ is a groupoid.) Choosing such a
$\pi^{-1}(M)$-morphism $\widetilde{g} : f^\ast S^\prime \to S$,
the compatibility conditions in \eqref{eqn:UpiAAfunctor} imply that
$\AA(\widetilde{g})(a(f^\ast S^\prime))=a(S)$ and thus
$a(S) =0$ because we have seen above that $a(f^\ast S^\prime) =0 $.
Hence, $\U\AA$ satisfies isotony.
\sk

We prove the opposite direction by contraposition, i.e.\ 
``$\pi:\Str \to\Loc$ is not flabby'' $\Rightarrow$ ``$\U\AA$ does not satisfy isotony'':
As $\pi:\Str\to\Loc$ is not flabby, we can find an object $S$ in $\Str$ and a $\Loc$-morphism
$f:M \to M^\prime$ (with $M=\pi(S)$) such that there exists {\em no} $\Str$-morphism $g: S\to S^\prime$
with the property $\pi(g) =f :M\to M^\prime$. Let $a\in \U\AA(M)$ be the element 
specified by
\begin{flalign}\label{eqn:adeftmp}
a(\widetilde{S}) = 
\begin{cases}
\1 & \,,~\text{if $\widetilde{S}\simeq S$ in $\pi^{-1}(M)$}~,\\
0 &\,,~\text{else}~,
\end{cases}
\end{flalign}
for all objects $\widetilde{S}$ in $\pi^{-1}(M)$, where $\1$ is the unit element in $\AA(\widetilde{S})$.
It follows that $a\neq 0$ and 
\begin{flalign}
\big(\U\AA(f)(a)\big)(S^\prime) = \AA(f_\ast)\big(a(f^\ast S^\prime)\big) =0~,
\end{flalign}
for all objects $S^\prime$ in $\pi^{-1}(M^\prime)$. The latter statement
is a consequence of $f^\ast S^\prime \not\simeq S$ 
(otherwise $S^\prime$ would be an extension of $S$ along $f$, which is against the hypothesis) 
and of our particular choice \eqref{eqn:adeftmp}
of the element $a\in  \U\AA(M)$.
Hence, isotony is violated.
\sk

{\em Time-slice:} Let $\pi : \Str\to\Loc$ be Cauchy flabby.
If $f: M\to M^\prime$ is a Cauchy $\Loc$-morphism,
a similar argument as in the proof of isotony above shows that $\U\AA(f) : 
\U\AA(M)\to \U\AA(M^\prime)$ is
injective. It hence remains to show that this $\Alg$-morphism is surjective. 
For this let $b\in \U\AA(M^\prime)$ be an arbitrary element. 
We have to find a preimage, i.e.\ an element $a\in \U\AA(M)$ 
such that $\U\AA(f)(a) =b$. 
Given any object $S$ in $\pi^{-1}(M)$, Cauchy flabbiness allows us to 
find an extension $g : S\to S^\prime$ such that $\pi(g) = f: M\to M^\prime$.
Making an arbitrary choice of such extensions, we set
\begin{flalign}\label{eqn:aSdeftmp}
a(S) := \AA(g)^{-1}\big(b(S^\prime)\big)~,
\end{flalign}
for all objects $S$ in $\pi^{-1}(M)$, where we also have
used that $\AA : \Str\to\Alg$ satisfies the
time-slice axiom in order to define the inverse $\AA(g)^{-1}$.
Notice that \eqref{eqn:aSdeftmp} does not depend on the choice of extension:
Given any other extension $\widetilde{g} : S\to\widetilde{ S^\prime}$, there exists by definition of
Cauchy flabbiness a $\pi^{-1}(M^\prime)$-morphism $g^\prime: S^\prime\to\widetilde{S^\prime}$
such that $\widetilde{g} = g^\prime\circ g$ and hence
\begin{flalign}
\AA(\widetilde{g})^{-1}\big(b(\widetilde{S^\prime})\big) = 
\AA(g)^{-1}\circ \AA(g^{\prime\,-1}) \big(b(\widetilde{S^\prime})\big)
=\AA(g)^{-1}\big(b(S^\prime)\big) = a(S)~.
\end{flalign}
It remains to prove that the family of elements defined in \eqref{eqn:aSdeftmp}
satisfies the compatibility conditions: For any $\pi^{-1}(M)$-morphism
$\widetilde{g} : S\to\widetilde{S}$ we have that
\begin{flalign}
\AA(\widetilde{g})\big(a(S)\big) = \AA(g\circ \widetilde{g}^{-1})^{-1}\big(b(S^\prime)\big)=a(\widetilde{S})~,
\end{flalign}
where in the last step we used that \eqref{eqn:aSdeftmp} does not depend on the choice of extension.
By a similar argument, we can confirm that $a$ is indeed a preimage of $b$,
\begin{flalign}
\U\AA(f)(a)(S^\prime) = \AA(f_\ast)\big(a(f^\ast S^\prime)\big) = 
\AA(f_\ast) \circ\AA(f_\ast)^{-1}\big(b(S^\prime)\big) = b(S^\prime)~,
\end{flalign}
where in the second step we used  \eqref{eqn:aSdeftmp} and that $f_\ast : f^\ast S^\prime\to S^{\prime}$ is an extension.
Hence, $\U\AA$ satisfies time-slice.
\end{proof}
\begin{rem}\label{rem:converse}
The converse implication ``$\U\AA$ satisfies time-slice'' $\Rightarrow$ ``$\pi:\Str\to\Loc$ is Cauchy flabby''
is not necessarily true. 
Consider for example $\AA = \BB \circ \pi : \Str\to\Alg$, where $\BB : \Loc \to \Alg$
is an ordinary locally covariant quantum field theory. (Physically, such theories
may be interpreted as quantum field theories that are insensitive, i.e.\ do not couple to, the
gauge theoretic structures captured by the groupoids $\pi^{-1}(M)$.) For any object $M$ in $\Loc$, we then have that
$\U\AA(M) \simeq \prod_{[S]\in \pi_0(\pi^{-1}(M))} \BB(M)$,
where $\pi_0(\pi^{-1}(M))$ denotes the set of connected components of the groupoid $\pi^{-1}(M)$, i.e.\
the quotient of the set of objects $\pi^{-1}(M)_0$ of the groupoid by the equivalence relation induced by
its morphisms. It is clear that for this particular example $\U\AA$ satisfies the time-slice axiom if and only if 
all equivalence classes $[S]\in \pi_0(\pi^{-1}(M))$ admit a unique extension along all
Cauchy $\Loc$-morphisms $f:M\to M^\prime$. This is however a weaker condition than Cauchy flabbiness,
because it does not require that for any two extensions there exists a commutative
diagram as in \eqref{eqn:Cauchyflabbydiagram}. The examples below further clarify the difference between
Cauchy flabbiness and the unique extendability of isomorphism classes along Cauchy $\Loc$-morphism.
\sk

As a side remark, it is easy to show (by contraposition) that for any quantum field theory
$\AA :\Str\to\Alg$ on a category fibered in groupoids $\pi:\Str\to\Loc$, the right Kan extension
$\U\AA:\Loc\to \Alg$ satisfies the time-slice axiom only if, for all objects $M$ in $\Loc$,
all equivalence classes $[S]\in \pi_0(\pi^{-1}(M))$ admit a unique extension along all
Cauchy $\Loc$-morphisms $f:M\to M^\prime$. 
In fact, $\U\AA(f)$ is not injective if there exists an equivalence class that cannot be extended, 
while surjectivity fails if such extension is not unique. 
\end{rem}

\begin{ex}\label{ex:spin2}
Recall the category fibered in groupoids $\pi: \mathsf{SLoc}\to\Loc$ presented in Example \ref{ex:spin}. 
Using the classification of spin structures, we will show that 
there are counterexamples to flabbiness, while Cauchy flabbiness holds true. 
Note that, for each object $M$ of $\Loc$ admitting a spin structure, 
i.e.\ such that the obstruction class in $H^2(M;\bbZ_2)$ vanishes, 
the set of isomorphism classes of the groupoid $\pi^{-1}(M)$ is an affine space over $H^1(M;\bbZ_2)$, 
the first cohomology group of $M$ with $\bbZ_2$-coefficients, see e.g.\ \cite{GreubPetry}. 
To exhibit a counterexample to flabbiness, let $M^\prime = \bbR^4$ be the 
$4$-dimensional Minkowski spacetime and consider its time zero
Cauchy surface $\{0\}\times\bbR^3$. Removing the $z$-axis of the Cauchy surface
$\Sigma:= \{0\}\times (\bbR^3 \setminus \{(0,0,z) : z\in\bbR\})$, we
define $M := D(\Sigma)\subseteq M^\prime$ as the Cauchy development $D(\Sigma)$ of $\Sigma$ 
in $M^\prime$. Notice that both $M$ and $M^\prime$ are objects of $\Loc$ 
and that the obvious inclusion of $M$ into $M^\prime$ 
provides a $\Loc$-morphism $f: M \to M^\prime$, see e.g.\ \cite[Lemma A.5.9]{BGP}. 
Observe further that $M$ is homotopic to $\mathbb{S}^1$
and hence the second cohomology group with $\bbZ_2$-coefficients vanishes for both $M$ and $M^\prime$.
As a consequence, both $M$ and $M^\prime$ admit a spin structure. 
Because $H^1(M^\prime;\bbZ_2) =0$, any two spin structures on 
$M^\prime$ are isomorphic. On the other hand, because $H^1(M;\bbZ_2)\simeq \bbZ_2$,
there exist non-isomorphic choices of spin structures on $M$.
As the pullback of any two spin structures on $M^\prime$
along $f:M\to M^\prime$ induces isomorphic spin structures on $M$,
we obtain that all spin structures on $M$ which are not isomorphic to
one that is obtained via pullback do not admit an extension
to $M^\prime$ along $f:M\to M^\prime$. Hence, flabbiness is violated.
\sk

We next show that Cauchy flabbiness holds true.
As Cauchy $\Loc$-morphisms are homotopy equivalences, they
induce isomorphisms in cohomology that allow us to extend any spin structure.
Given any two $\mathsf{SLoc}$-morphisms
$g : (M,P,\psi) \to (M^\prime,P^\prime,\psi^\prime)$ and
$\widetilde{g} : (M,P,\psi) \to (M^\prime, \widetilde{P^\prime},\widetilde{\psi^\prime})$
such that $\pi(g) = \pi(\widetilde{g}) = f:M\to M^\prime$
is a Cauchy $\Loc$-morphism, we have to show that there exists $g^\prime : (M^\prime,P^\prime,\psi^\prime)\to
(M^\prime, \widetilde{P^\prime},\widetilde{\psi^\prime})$ with $\pi(g^\prime)=\id_{M^\prime}^{}$
closing the commutative diagram in \eqref{eqn:Cauchyflabbydiagram}. From the cohomology isomorphism,
it follows that there exists indeed a $\pi^{-1}(M^\prime)$-morphism 
$g^{\prime\prime} : (M^\prime,P^\prime,\psi^\prime)\to (M^\prime, \widetilde{P^\prime},\widetilde{\psi^\prime})$,
however it is not guaranteed that  it closes the commutative diagram in \eqref{eqn:Cauchyflabbydiagram}. Fixing
any such $g^{\prime\prime}$, we consider the two parallel $\mathsf{SLoc}$-morphisms
\begin{flalign}
\xymatrix{
 (M,P,\psi) \ar@<1ex>[rr]^-{g}\ar@<-1ex>[rr]_-{\bar{g}:= g^{\prime\prime\,-1}\circ \widetilde{g}}&&  (M^\prime,P^\prime,\psi^\prime)~. 
}
\end{flalign}
Because $g$ and $\bar g$ are in particular principal $\mathrm{Spin}_0(1,m{-}1)$-bundle morphisms
(over the same $f:M \to M^\prime)$,
it is easy to confirm that there exists a unique function 
$s \in C^\infty(P,\mathrm{Spin}_0(1,m{-}1))^{\mathrm{eqv}}$ (equivariant under the adjoint action),
such that $\bar g (p) = g(p)\, s(p)$, for all $p\in P$. Regarding $s$ as
a principal bundle automorphism $s : P\to P$ over $\id_M$ (i.e.\ a gauge transformation), 
the above equality reads as $\bar g = g\circ s$.
If there exists $s^\prime \in C^\infty(P^\prime ,\mathrm{Spin}_0(1,m{-}1))^{\mathrm{eqv}}$
such that $g\circ s = s^\prime\circ g$, the above equality implies that
$\widetilde{g} = g^{\prime\prime}\circ s^\prime\circ g$, hence $g^\prime:= g^{\prime\prime}\circ s^\prime$
closes the commutative diagram in \eqref{eqn:Cauchyflabbydiagram}. In order to prove existence of
such $s^\prime$, we have to make use of the fact that $s : P\to P$ is not only a gauge transformation,
but also a $\mathsf{SLoc}$-automorphism. Its compatibility with the equivariant bundle
map $\psi: P\to FM$ to the frame bundle, i.e.\ $\psi\circ s = \psi$,
implies that the corresponding equivariant 
function $s\in C^\infty(P,\mathrm{Spin}_0(1,m{-}1))^{\mathrm{eqv}} $ takes values in the kernel
of the double covering group homomorphism $\rho: \mathrm{Spin}_0(1,m{-}1)\to \mathrm{SO}_0(1,m{-}1)$,
which is isomorphic to the group $\bbZ_2$. 
As the kernel of $\rho$ lies in the center of $\mathrm{Spin}_0(1,m{-}1)$, the adjoint action
becomes trivial, which provides us with a canonical 
isomorphism $C^\infty(P,\ker(\rho))^\mathrm{eqv} \simeq  C^\infty(M,\bbZ_2)$. 
Regarding $s$ as an element in $C^\infty(M,\bbZ_2)$, we can uniquely extend
it along $f:M\to M^\prime$ to an element 
$s^\prime\in C^\infty(M^\prime,\bbZ_2)$ because $\bbZ_2$ is a discrete group
and the image of every Cauchy morphism $f : M\to M^\prime$ 
intersects non-trivially all connected components of $M^\prime$.
Regarding this $s^\prime$ as an element in $C^\infty(P^\prime,\ker(\rho))^\mathrm{eqv} $
via the canonical isomorphism completes our proof of Cauchy flabbiness.
\end{ex}

\begin{ex}
The category fibered in groupoids $\pi: \mathsf{B}G\Loc\to\Loc$ 
presented in Example \ref{ex:gauge} is in general neither flabby nor Cauchy flabby.
Regarding flabbiness,  let us consider for example $G = U(1)$, 
$M^\prime=\bbR^4$ the $4$-dimensional Minkowski spacetime and $M:= M^\prime\setminus J_{M^\prime}(\{0\})$
the Minkowski spacetime with the closed light-cone of the origin removed. The obvious submanifold embedding
defines a $\Loc$-morphism $f : M\to M^\prime$. As $M$ is homotopic to the $2$-sphere $\mathbb{S}^2$,
principal $U(1)$-bundles over $M$ are classified up to isomorphism by the magnetic monopole charge in 
$H^2(M;\bbZ)\simeq H^2(\mathbb{S}^2;\bbZ)\simeq \bbZ$, while each principal $U(1)$-bundle over $M^\prime$
is isomorphic to the trivial one (i.e.\ charge $0$). If a principal $U(1)$-bundle $P$ over $M$
has an extension to $M^\prime$ along $f: M\to M^\prime$, i.e.\ there is a principal $U(1)$-bundle
morphism $g : P\to P^\prime$ covering $f$, then $P$ has to be isomorphic to the pullback bundle
$f^\ast P^\prime$ and in particular its monopole charge has to be $0$. This shows that 
only topologically trivial principal $U(1)$-bundles
extend for our choice of $\Loc$-morphism and hence that $\pi: \mathsf{B}G\Loc\to\Loc$ is in general not flabby.
\sk

Regarding Cauchy flabbiness, notice that
principal $G$-bundles may be extended along Cauchy $\Loc$-morphisms 
because these are homotopy equivalences
and the classification of principal $G$-bundles up to isomorphism
only depends on the homotopy type of the base manifold.
However, there exist extensions $g: (M,P)\to (M^\prime,P^\prime)$
and $\widetilde{g} : (M,P) \to (M^\prime,\widetilde{P^\prime}) $ along Cauchy $\Loc$-morphisms $f:M\to M^\prime$
for which one cannot close the commutative diagram in \eqref{eqn:Cauchyflabbydiagram}.
For example, take $G$ any Lie group of dimension $\mathrm{dim}(G) \geq 1$, 
$M^\prime = \bbR^m$ the $m$-dimensional Minkowski spacetime
and  $M= (-1,1)\times\bbR^{m-1} \subsetneqq M^\prime$. Consider the two $\mathsf{B}G\Loc$-morphisms
(between trivial principal $G$-bundles)
\begin{subequations}
\begin{flalign}
&g : M\times G \longrightarrow M^\prime\times G~,~~(x,q)\longmapsto (x,q)~,\\
&\widetilde{g} : M\times G \longrightarrow M^\prime\times G~,~~(x,q)\longmapsto (x,s(x)\,q)~,
\end{flalign}
\end{subequations}
where $s\in C^\infty(M,G)$ is any $G$-valued smooth function on $M$ which does not admit an extension to $M^\prime$.
(Because $G$ is by assumption of dimension $\geq 1$,
examples of such $s$ are functions which wildly oscillate whenever the time coordinate approaches the boundaries 
of $(-1,1)$.) Closing the commutative diagram in \eqref{eqn:Cauchyflabbydiagram} requires
an extension of $s$ to $M^\prime$, which by construction does not exist. 
Hence, $\pi: \mathsf{B}G\Loc\to\Loc$ is in general not Cauchy flabby.
As a side remark, notice that
in the case where $G$ is a discrete group (i.e.\ a $0$-dimensional Lie group), the 
category fibered in groupoids $\pi: \mathsf{B}G\Loc\to\Loc$ is Cauchy flabby. (This is similar to the
the previous example of spin structures.)
\end{ex}

\begin{ex}\label{ex:gaugecon2}
The category fibered in groupoids $\pi: \mathsf{B}G^{\mathrm{con}}\Loc\to\Loc$  presented in
Example \ref{ex:gauge} is in general neither flabby nor Cauchy flabby. The violation of flabbiness 
is because, similarly to the case of $\mathsf{B}G\Loc$ before,  principal $G$-bundles do not always 
extend along $\Loc$-morphisms and moreover, even if they would do, 
arbitrary connections do not extend as well. Cauchy flabbiness is violated because
arbitrary connections do not extend along Cauchy $\Loc$-morphisms
and, even if they would do, this extension is not unique up to isomorphism.
\sk

Let us define the full subcategory $\mathsf{B}G^{\mathrm{con}}\Loc_{\mathrm{YM}}^{}$
of $\mathsf{B}G^{\mathrm{con}}\Loc$ whose objects $(M,P,A)$ satisfy the 
Yang-Mills equation on $M$. The corresponding
category fibered in groupoids $\pi: \mathsf{B}G^{\mathrm{con}}\Loc_{\mathrm{YM}}^{} \to
\Loc$ is not flabby (by the same arguments as above) however, under certain conditions to be explained below,
it is Cauchy flabby. Assuming that the global Yang-Mills Cauchy problem is well-posed
for gauge equivalence classes, as it is the case in dimension $m=2,3,4$ and for the usual choices of  
structure group $G$ \cite{CBYM,CSYM}, any object $(M,P,A)$ in 
$\mathsf{B}G^{\mathrm{con}}\Loc_{\mathrm{YM}}^{}$ admits an extension
along any Cauchy $\Loc$-morphism $f:M\to M^\prime$ by solving the Cauchy problem.
It remains to study if, given two $\mathsf{B}G^{\mathrm{con}}\Loc_{\mathrm{YM}}^{}$-morphisms
$g : (M,P,A) \to (M^\prime,P^\prime,A^\prime)$ and
$\widetilde{g} : (M,P,A)\to(M^\prime,\widetilde{P^\prime},\widetilde{A^\prime})$ such that
$\pi(g) = \pi(\widetilde{g}) =f:M\to M^\prime$ is a Cauchy $\Loc$-morphism,
there exists a $\pi^{-1}(M^\prime)$-morphism
$g^\prime : (M^\prime,P^\prime,A^\prime) \to (M^\prime,\widetilde{P^\prime},\widetilde{A^\prime})$
closing the commutative diagram in \eqref{eqn:Cauchyflabbydiagram}.
From the well-posed Cauchy problem for equivalence classes, it follows that
there exists indeed a $\pi^{-1}(M^\prime)$-morphism 
$g^{\prime\prime} : (M^\prime,P^\prime,A^\prime) \to (M^\prime,\widetilde{P^\prime},\widetilde{A^\prime})$,
however it is not guaranteed that it closes the commutative diagram in \eqref{eqn:Cauchyflabbydiagram}.
The next steps are similar to Example \ref{ex:spin2}. We fix any such $g^{\prime\prime}$ and consider
the two parallel $\mathsf{B}G^{\mathrm{con}}\Loc_{\mathrm{YM}}^{}$-morphisms
\begin{flalign}
\xymatrix{
 (M,P,A) \ar@<1ex>[rr]^-{g}\ar@<-1ex>[rr]_-{\bar{g}:= g^{\prime\prime\,-1}\circ \widetilde{g}}&&  (M^\prime,P^\prime,A^\prime)~. 
}
\end{flalign}
There exists a unique $s\in C^\infty(P,G)^{\mathrm{eqv}}$ (i.e.\ a gauge transformation)
such that $\bar g = g\circ s$. We can close the commutative diagram 
in \eqref{eqn:Cauchyflabbydiagram} if we can find $s^\prime \in C^\infty(P^\prime,G)^{\mathrm{eqv}}$
satisfying $s^\prime\circ g = g\circ s$.
Because $s$ and $s^\prime$ are by assumption $\mathsf{B}G^{\mathrm{con}}\Loc_{\mathrm{YM}}^{}$-automorphisms,
they have to stabilize the corresponding connections $A$ and $A^\prime$. Assuming in the following
that $G$ is a matrix Lie group, the stabilizing property is equivalent to the partial differential equations (PDEs)
$\dd_{A^{(\prime)}} s^{(\prime)} := \dd s^{(\prime)} + [A^{(\prime)},s^{(\prime)}]=0$, where we
regard $s^{(\prime)}$ as an element in $\Gamma^\infty(M^{(\prime)},P^{(\prime)}\times_{\mathrm{Ad}} G)$
via the standard canonical isomorphism. This is an initial-value 
constraint for the sigma-model-type hyperbolic PDE 
$\delta_{A^{(\prime)}} \dd_{A^{(\prime)}} s^{(\prime)} =0$, where $\delta_{A^{(\prime)}}$ is the covariant 
codifferential, i.e.\ $\delta_{A^{(\prime)}} := \ast^{-1}\,\dd_{A^{(\prime)}}\,\ast$ with $\ast$ the Hodge operator.
Assuming that the Cauchy problem for this equation is well-posed, we obtain a unique $s^\prime$
from $s$ by solving the Cauchy problem, which implies  that 
$\pi: \mathsf{B}G^{\mathrm{con}}\Loc_{\mathrm{YM}}^{} \to \Loc$ is Cauchy flabby. 
By \cite{CBsigma}, this is a reasonable
assumption at least for low spacetime dimensions. Notice that in the special 
case where $G=U(1)$ is Abelian, solutions to $\dd_{A} s = \dd s =0$ are locally constant $U(1)$-valued
functions on $M$ and hence they admit a unique extension along any Cauchy $\Loc$-morphism
$f:M\to M^\prime$. Hence, $\pi: \mathsf{B}U(1)^{\mathrm{con}}\Loc_{\mathrm{YM}}^{} \to
\Loc$ is Cauchy flabby for any choice of spacetime dimension $m$.
\end{ex}

\begin{ex}\label{ex:frame2}
The category fibered in groupoids $\pi: \mathsf{FLoc}\to\Loc$ 
presented in Example \ref{ex:frame} is in general neither flabby nor Cauchy flabby.
A simple counterexample to flabbiness is as follows: Let $M \subsetneqq M^\prime= \bbR^2$ be 
any globally hyperbolic proper open subset of the $2$-dimensional Minkowski spacetime $M^\prime$, e.g.\ a diamond
or a Rindler wedge. Then there is an associated $\Loc$-morphism $f : M\to M^\prime$.
We take global coordinates $(t,x)$ on $\bbR^2$ in which the metric of both $M$ and $M^\prime$ 
takes the standard form $\dd t\otimes \dd t - \dd x\otimes\dd x$. (We also assume that 
the orientations and time-orientations are represented by $\dd t\wedge \dd x$ and $\dd t$.)
On $M$ we choose a global coframe $e$ of the form $e^0 =\dd t\,  \cosh q  + \dd x\, \sinh q$
and $e^1 = \dd t\, \sinh q +\dd x\, \cosh q$, where $q\in C^\infty(M)$ is any smooth function
on $M$ which goes to infinity towards the boundary of $M\subset M^\prime$. As a consequence, 
$q$ and therefore $e$ does not admit an extension to $M^\prime$ along $f:M\to M^\prime$ and
hence $\pi: \mathsf{FLoc}\to\Loc$  is not flabby. The same argument also shows that 
$\pi: \mathsf{FLoc}\to\Loc$ is not Cauchy flabby, as we could take for example 
$M= (-1,1)\times\bbR\subsetneqq  M^\prime$ as our globally hyperbolic proper open subset,
in which case $f:M\to M^\prime$ is a Cauchy $\Loc$-morphism. Obstructions of this kind
could be mildened by redefining the morphisms $g : (M,e)\to (M^\prime,e^\prime)$ 
in $\mathsf{FLoc}$ to be pairs $g=(f,\Lambda)$, where $f: M\to M^\prime$ is a smooth map
and $\Lambda\in C^\infty(M,\mathrm{SO}_0(1,m-1))$, such that
$f^\ast e^\prime = \Lambda\,e$, i.e.\ the coframes are preserved 
only up to a local Lorentz transformation. This is however against 
the perspective taken by Fewster in his study of the spin-statistics connection
\cite{Fewster1,Fewster2}.
\end{ex}

\begin{ex}\label{ex:source2}
The category fibered in groupoids $\pi: \mathsf{LocSrc}\to\Loc$ 
presented in Example \ref{ex:source} is in general neither flabby nor Cauchy flabby
because arbitrary functions $J\in C^\infty(M)$ do not admit an extension 
to $M^\prime$ along $f:M\to M^\prime$. Choosing any natural normally hyperbolic differential operator
$D : C^\infty\Rightarrow C^\infty$ (e.g.\ a Klein-Gordon operator)
and defining $\mathsf{LocSrc}_{D}^{}$ to be the full subcategory of
$\mathsf{LocSrc}$ whose objects $(M,J)$ satisfy the equation of motion $D_M J =0$,
then the category fibered in groupoids $\pi : \mathsf{LocSrc}_D^{}\to\Loc$ is Cauchy flabby
with a unique extension given by solving the Cauchy problem.
\end{ex}

\begin{ex}\label{ex:globalgauge2}
The category fibered in groupoids $\pi : \Loc\times G \to \Loc$
presented in Example \ref{ex:globalgauge} is both flabby and Cauchy flabby
because each fiber $\pi^{-1}(M) \simeq G$ is a groupoid with only one object.
Notice further that the morphism $g^\prime$ closing the
diagram \eqref{eqn:Cauchyflabbydiagram} defining Cauchy flabbiness 
is uniquely specified by this diagram. The same statements hold true for Kaluza-Klein
theories (cf.\ Remark \ref{rem:KK}) as these are special instances of the present example.
\end{ex}

\begin{rem}
As a consequence of Theorem \ref{theo:properties} and our examples presented 
above, we observe that the right Kan extension $\U\AA : \Loc\to \Alg$ 
satisfies always the causality axiom and very often also the time-slice axiom. 
However, it almost never satisfies the isotony axiom,
which by Theorem \ref{theo:properties} is equivalent to the very restrictive condition of
$\pi: \Str\to \Loc$ being flabby that is satisfied only by the very special and non-representative
case of ``global gauge transformations'', cf.\ Example \ref{ex:globalgauge2}.
A similar violation of isotony has been observed before in models of quantum gauge theories,
see e.g.\ \cite{DL,SDH,BDS,BDHS,BSS,BBSS,Benini}. Thus, our present results
provide additional motivation and justification to exclude isotony from the list of 
axioms of locally covariant quantum field theory \cite{Brunetti}.
\end{rem}

\section{\label{sec:homotopyKan}Homotopy Kan extension}
The goal of this section is to construct toy-models of homotopical quantum 
field theories by using a homotopical generalization of the right Kan extension
\cite{hoKan1,hoKan2,hoKan3}. Instead of ordinary observable algebras,
these theories assign higher algebraic structures to spacetimes, which we shall model concretely
by {\em differential graded algebras}. As a consequence,
homotopical quantum field theories are more flexible and in particular they 
are able to capture detailed aspects of gauge theories that become invisible 
at the level of gauge invariant observables \cite{BSShomotopy}. 
By Remark \ref{rem:physicalinterpretation}, gauge symmetries also 
play an important role in our present work because one may loosely think of $\U\AA(M)$ 
as an algebra of gauge invariant observables, where the gauge symmetries
are the morphisms in $\pi^{-1}(M)$.
\sk

Let $\pi : \Str\to \Loc $ be a category fibered in groupoids over $\Loc$
and $\AA : \Str\to \Alg$ a functor. 
In practice, $\AA$ will satisfy the quantum field theory axioms of Definition \ref{def:QFT},
but these are not needed for the present section.
We may regard $\AA$ as a functor $\AA : \Str\to \dgAlg$ 
with values in the model category of differential graded algebras
by regarding algebras as differential graded algebras concentrated in degree zero (with trivial differential).
For a brief introduction to differential graded algebras and their homotopy theory we 
refer the reader to Appendix \ref{app:dgAlg}.
The homotopy right Kan extension of $\AA : \Str\to \dgAlg$ along
$\pi: \Str\to \Loc$ provides us with a functor 
\begin{flalign}
\hoRanpi\AA : \Loc\longrightarrow\dgAlg~
\end{flalign}
that may be computed ``pointwise''
by homotopy limits, see \cite{hoKan1,hoKan2,hoKan3}. Explicitly,
to any object $M$ in $\Loc$ this functor assigns the differential graded algebra
given by the homotopy limit
\begin{flalign}
\hoRanpi\AA(M) := \holim_{\dgAlg}^{}\Big(M\downarrow \pi \stackrel{\QQ^M}{\longrightarrow} \Str \stackrel{\AA}{\longrightarrow} \dgAlg\Big)
\end{flalign}
in the category $\dgAlg$. Using the explicit description of $\holim_{\dgAlg}^{}$ presented in
Appendix \ref{app:dgAlg}, we obtain that the graded vector space underlying $\hoRanpi\AA(M)$ is
\begin{subequations}\label{eqn:hoKanobject}
\begin{flalign}
\big(\hoRanpi\AA(M)\big)^0 &= \prod_{(S,h)\in (M\downarrow\pi)_0} \AA(S)~,\\
\big(\hoRanpi\AA(M)\big)^n &= \prod_{\mycom{(g_1,\dots,g_n)\in (M\downarrow\pi)_n}{g_i\neq\id} } \AA\big(\QQ^M(\mathrm{t}(g_1))\big)~,
\end{flalign}
\end{subequations}
for all $n\in\bbZ_{\geq 1}$, where $(M\downarrow\pi)_n$ denotes the degree $n$ component
of the nerve of the category  $M\downarrow\pi$, i.e.\
elements $(g_1,\dots,g_n)\in (M\downarrow\pi)_n$ are composable $n$-arrows
in $M\downarrow \pi$, and $\mathrm{t}(g)$ denotes the target of the $M\downarrow \pi$-morphism $g$. 
It is convenient to regard elements $a\in \hoRanpi\AA(M)^0$ as mappings
\begin{flalign}
(M\downarrow\pi)_0 \ni (S,h) \longmapsto a(S,h)\in \AA(S)
\end{flalign}
and elements $a\in \hoRanpi\AA(M)^n$, for $n\geq 1$, as mappings
\begin{subequations}
\begin{flalign}
(M\downarrow\pi)_n \ni (g_1,\dots,g_n) \longmapsto a(g_1,\dots,g_n)\in \AA\big(\QQ^M(\mathrm{t}(g_1))\big)
\end{flalign}
on all of $(M\downarrow\pi)_n$, which satisfy the normalization condition
\begin{flalign}
a(g_1,\dots,g_{i-1},\id,g_{i+1},\dots g_n)=0~,
\end{flalign}
\end{subequations}
for all $i=1,\dots,n$.
Notice that, compared to the general procedure to compute homotopy limits in $\dgAlg$ 
(cf.\ Appendix \ref{app:dgAlg}), some major simplifications occur in the present situation 
because the functor $\AA : \Str\to \dgAlg$ assigns differential graded algebras that are 
concentrated in degree $0$. As a consequence, the horizontal part of the double 
cochain complex \eqref{eqn:doublecomplex} is trivial. 
This is reflected also by the definitions of differential and product displayed below. 
The differential $\dd : \hoRanpi\AA(M)^n\to \hoRanpi\AA(M)^{n+1}$ is given by
\begin{multline}
\dd a(g_1,\dots,g_{n+1}) =  \AA(g_1)\big(a(g_2,\dots,g_{n+1})\big) \\
+ \sum_{i=1}^n (-1)^i\, a(g_1,\dots,g_i\circ g_{i+1},\dots,g_{n+1})
+ (-1)^{n+1}\, a(g_1,\dots,g_n)~,
\end{multline}
for all $a\in \hoRanpi\AA(M)^n$. The product on $\hoRanpi\AA(M)$ reads as
\begin{flalign}
(a\,a^\prime)(g_1,\dots,g_{n+n^\prime}) = a(g_1,\dots,g_n)\, \AA(g_1\circ\cdots\circ g_n)\big(a^\prime(g_{n+1},\dots, g_{n+n^\prime})\big)~,
\end{flalign}
for all $a\in \hoRanpi\AA(M)^n $ and $a^\prime\in \hoRanpi\AA(M)^{n^\prime}$,
and the unit element is given by
\begin{flalign}
\1(S,h) = \1 \in \AA(S)~.
\end{flalign}
\begin{rem}\label{rem:hoRantoRan}
Notice that the zeroth cohomology of the differential graded algebra $\hoRanpi\AA(M)$
given by \eqref{eqn:hoKanobject} is the algebra $\Ranpi\AA(M)$ that is assigned by the
ordinary right Kan extension, cf.\ \eqref{eqn:RanpiLimitexplicit}.
In fact, an element $a\in \hoRanpi\AA(M)^{0}$ is
specified by an {\em arbitrary} sequence of elements $a(S,h)\in \AA(S)$, for all objects $(S,h)$ in $M\downarrow \pi$,
and $\dd a =0$ is equivalent to
\begin{flalign}
\dd a(g) = \AA(g)\big(a(S,h)\big) - a(\widetilde{S},\widetilde{h}) =0~,
\end{flalign}
for all $M\downarrow \pi$-morphisms $g: (S,h)\to (\widetilde{S},\widetilde{h})$,
which is precisely the compatibility condition in \eqref{eqn:RanpiLimitexplicit}.
It thus follows that
\begin{flalign}
H^0\big(\hoRanpi\AA(M)\big) = \Ranpi\AA(M)~,
\end{flalign}
for each object $M$ in $\Loc$.
\end{rem}

It remains to explain how the functor $\hoRanpi\AA: \Loc\to \dgAlg$ 
acts on morphisms: Given any $\Loc$-morphism $f: M\to M^\prime$,
the $\dgAlg$-morphism $\hoRanpi\AA(f) : \hoRanpi\AA(M) \to 
\hoRanpi\AA(M^\prime)$ is specified in degree zero by
\begin{subequations}
\begin{flalign}
\big(\hoRanpi\AA(f)(a)\big)(S^\prime,h^\prime) = a(S^\prime,h^\prime\circ f)~,
\end{flalign}
for all $a\in \hoRanpi\AA(M)^0$ and $(S^\prime,h^\prime)\in (M^\prime\downarrow \pi)_0$,
and in degree $n\in \bbZ_{\geq 1}$ by
\begin{flalign}
\big(\hoRanpi\AA(f)(a)\big)(g_1^\prime,\dots,g_n^{\prime})= a\big(f^\ast(g_1^\prime,\dots,g_n^{\prime})\big)~,
\end{flalign}
\end{subequations}
for all $a\in \hoRanpi\AA(M)^n$ and composable $n$-arrows 
\begin{subequations}
\begin{flalign}
\xymatrix{
(S_0^\prime,h_0^\prime) & \ar[l]_-{g^\prime_1} (S_1^\prime,h_1^\prime) &\ar[l]_-{g^\prime_2} ~~\cdots ~~& \ar[l]_-{g^\prime_n}(S_n^\prime,h_n^\prime)
}
\end{flalign}
 in $M^\prime\downarrow \pi$,
where the composable $n$-arrow $f^\ast(g_1^\prime,\dots,g_n^{\prime})$ in $M\downarrow \pi$ explicitly reads as
\begin{flalign}
\xymatrix{
(S_0^\prime,h_0^\prime\circ f) & \ar[l]_-{g^\prime_1} (S_1^\prime,h_1^\prime\circ f) &\ar[l]_-{g^\prime_2} ~~\cdots ~~& \ar[l]_-{g^\prime_n}(S_n^\prime,h_n^\prime\circ f)
}~.
\end{flalign}
\end{subequations}
It is easy to check that $\hoRanpi\AA(f) : \hoRanpi\AA(M) \to \hoRanpi\AA(M^\prime)$ preserves
the differentials, products and units. Using Remark \ref{rem:hoRantoRan}, we further obtain
\begin{propo}
The composition of the homotopy right Kan extension $\hoRanpi\AA : \Loc\to\dgAlg$
with the zeroth cohomology functor $H^0 : \dgAlg\to\Alg$ yields the ordinary right Kan extension, i.e.\
$\Ranpi \AA= H^0\circ \hoRanpi\AA : \Loc \to \Alg$.
\end{propo}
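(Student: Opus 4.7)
The plan is to establish the claimed equality $\Ranpi\AA = H^0 \circ \hoRanpi\AA$ as functors $\Loc \to \Alg$ by unwinding both sides on objects and morphisms, essentially reusing Remark \ref{rem:hoRantoRan} and supplementing it with the trivial checks on algebra structure and morphism action.

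On objects, the identification is already present in Remark \ref{rem:hoRantoRan}: since the differential graded algebra $\hoRanpi\AA(M)$ is concentrated in non-negative degrees, one has $H^0(\hoRanpi\AA(M)) = \ker\big(\dd : \hoRanpi\AA(M)^0 \to \hoRanpi\AA(M)^1\big)$, and the explicit description \eqref{eqn:hoKanobject} together with the formula for $\dd$ in degree zero reproduces exactly the compatibility condition defining $\Ranpi\AA(M)$ in \eqref{eqn:RanpiLimitexplicit}. To promote this bijection to an algebra isomorphism, I would observe that on $\hoRanpi\AA(M)^0 = \prod_{(S,h)} \AA(S)$ the product formula specializes to componentwise multiplication $(a\,a')(S,h) = a(S,h)\,a'(S,h)$ (the empty composition acting as the identity on $\AA(S)$), and the unit is the componentwise unit $\1(S,h) = \1 \in \AA(S)$; these coincide with the product and unit on the limit algebra $\Ranpi\AA(M) \subseteq \prod_{(S,h)} \AA(S)$, so the kernel of $\dd$ inherits the correct subalgebra structure.

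On morphisms, given a $\Loc$-morphism $f : M \to M'$, the degree-zero component of $\hoRanpi\AA(f)$ is by definition the map $a \mapsto \big((S',h') \mapsto a(S',h'\circ f)\big)$, which is literally the formula \eqref{eqn:Ranpimorphexplicit} defining $\Ranpi\AA(f)$. Since $H^0$ is computed by restriction to cocycles (there are no $(-1)$-cochains to quotient out) and since this map clearly sends $\ker\dd$ to $\ker\dd$, the induced algebra map $H^0(\hoRanpi\AA(f))$ agrees with $\Ranpi\AA(f)$ after the identification established above.

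Combining these two observations gives the equality of functors $H^0 \circ \hoRanpi\AA = \Ranpi\AA$. I do not anticipate any genuine obstacle: the statement reduces to unwinding the explicit formulas \eqref{eqn:hoKanobject} and comparing them with \eqref{eqn:RanpiLimitexplicit} and \eqref{eqn:Ranpimorphexplicit}. The only mildly subtle point is the interpretation of the product and differential formulas when some indices are $0$, which must be read as the obvious conventions (empty composition equals identity, empty tuple equals the object $(S,h)$).
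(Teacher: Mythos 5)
Your proposal is correct and follows essentially the same route as the paper: the object-level identification is exactly Remark \ref{rem:hoRantoRan}, and the morphism-level agreement is read off by comparing the degree-zero component of $\hoRanpi\AA(f)$ with \eqref{eqn:Ranpimorphexplicit}. Your explicit check that the degree-zero product and unit reduce to the componentwise structure of the limit algebra is a detail the paper leaves implicit, and it is carried out correctly (the empty composition indeed acts as the identity).
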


We now shall prove a generalization of Theorem \ref{theo:initial},
which allows us describe (up to weak equivalence) the differential graded algebras $\hoRanpi\AA(M)$
arising from the homotopy right Kan extension 
in terms of the homotopy limit $\holim_{\dgAlg}^{}\AA\vert_{\pi^{-1}(M)}$ 
of the restricted functor $\AA\vert_{\pi^{-1}(M)} : \pi^{-1}(M)\to\dgAlg$.
Similarly to the non-homotopic case (cf.\ Section \ref{sec:properties}),
this reformulation will later be used in order to simplify the study of
properties of the homotopy right Kan extension.
\sk

Let us start with working out explicitly the homotopy limit $\holim_{\dgAlg}^{}\AA\vert_{\pi^{-1}(M)}$.
Using again Appendix \ref{app:dgAlg}, we obtain that the graded vector space 
underlying $\holim_{\dgAlg}^{}\AA\vert_{\pi^{-1}(M)}$ is
\begin{subequations}\label{eqn:holimobject}
\begin{flalign}
\big(\holim_{\dgAlg}^{}\AA\vert_{\pi^{-1}(M)}\big)^0 &= \prod_{S\in \pi^{-1}(M)_0} \AA(S)~,\\
\big(\holim_{\dgAlg}^{}\AA\vert_{\pi^{-1}(M)}\big)^n &= \prod_{\mycom{(g_1,\dots,g_n)\in \pi^{-1}(M)_n}{g_i\neq\id} } \AA\big(\mathrm{t}(g_1)\big)~.
\end{flalign}
\end{subequations}
Again, it is convenient to regard elements $a\in (\holim_{\dgAlg}^{}\AA\vert_{\pi^{-1}(M)})^0$ as mappings
\begin{flalign}
\pi^{-1}(M)_0 \ni S \longmapsto a(S)\in \AA(S)
\end{flalign}
and elements $a\in (\holim_{\dgAlg}^{}\AA\vert_{\pi^{-1}(M)})^n$, for $n\geq 1$, as mappings
\begin{subequations}
\begin{flalign}
\pi^{-1}(M)_n \ni (g_1,\dots,g_n) \longmapsto a(g_1,\dots,g_n)\in \AA\big(\mathrm{t}(g_1)\big)
\end{flalign}
on all of $\pi^{-1}(M)_n$, which satisfy the normalization condition
\begin{flalign}
a(g_1,\dots,g_{i-1},\id,g_{i+1},\dots g_n)=0~,
\end{flalign}
\end{subequations}
for all $i=1,\dots,n$.
The differential $\dd : (\holim_{\dgAlg}^{}\AA\vert_{\pi^{-1}(M)})^n\to 
(\holim_{\dgAlg}^{}\AA\vert_{\pi^{-1}(M)})^{n+1}$ is given by
\begin{multline}
\dd a(g_1,\dots,g_{n+1}) =  \AA(g_1)\big(a(g_2,\dots,g_{n+1})\big) \\
+ \sum_{i=1}^n (-1)^i\, a(g_1,\dots,g_i\circ g_{i+1},\dots,g_{n+1})
+ (-1)^{n+1}\, a(g_1,\dots,g_n)~,
\end{multline}
for all $a\in (\holim_{\dgAlg}^{}\AA\vert_{\pi^{-1}(M)})^n$. The product on $\holim_{\dgAlg}^{}\AA\vert_{\pi^{-1}(M)}$ 
reads as
\begin{flalign}\label{eqn:holimproduct}
(a\,a^\prime)(g_1,\dots,g_{n+n^\prime}) = a(g_1,\dots,g_n)\, \AA(g_1\circ\cdots\circ g_n)\big(a^\prime(g_{n+1},\dots, g_{n+n^\prime})\big)~,
\end{flalign}
for all $a\in (\holim_{\dgAlg}^{}\AA\vert_{\pi^{-1}(M)})^n$ and 
$a^\prime\in (\holim_{\dgAlg}^{}\AA\vert_{\pi^{-1}(M)})^{n^\prime}$,
and the unit element is given by
\begin{flalign}
\1(S) = \1 \in \AA(S)~.
\end{flalign}
The canonical mapping 
\begin{subequations}\label{eqn:hoRanpitoholimmapping}
\begin{flalign}
\kappa_M^{} : \hoRanpi\AA(M)\longrightarrow \holim_{\dgAlg}^{}\AA\vert_{\pi^{-1}(M)}
\end{flalign}
specified by
\begin{flalign}
\big(\kappa_M^0(a)\big)(S) := a(S,\id_M)~,
\end{flalign}
for all $a\in \hoRanpi\AA(M)^0$ and $S\in \pi^{-1}(M)_0$, and
\begin{flalign}\label{eqn:kappantmp}
\big(\kappa_M^n(a)\big)(g_1,\dots,g_n):= a(g_1,\dots,g_n)~,
\end{flalign}
\end{subequations}
for all $n\in\bbZ_{\geq 1}$, $a\in \hoRanpi\AA(M)^n$ and $(g_1,\dots,g_n)\in \pi^{-1}(M)_n$,
is a $\dgAlg$-morphism.
\begin{theo}\label{theo:homotopyinitial}
Let $\pi : \Str\to\Loc$ be a category fibered in groupoids (or just a fibered category) and $\AA :\Str\to\Alg$ a functor.
Then, for each object $M$ in $\Loc$, the $\dgAlg$-morphism \eqref{eqn:hoRanpitoholimmapping} is a 
weak equivalence in the model category $\dgAlg$.
\end{theo}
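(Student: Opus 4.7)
The plan is to recognise $\kappa_M^{}$ as the restriction $\dgAlg$-morphism associated to the functor $\iota : \pi^{-1}(M) \to M\downarrow\pi$ that already appeared in the proof of Theorem~\ref{theo:initial}, and then to deduce that it is a weak equivalence by a homotopical upgrade of the ``initial functor'' argument used there. Comparing the explicit Bousfield--Kan-type descriptions \eqref{eqn:hoKanobject}, \eqref{eqn:holimobject} with the formulas \eqref{eqn:hoRanpitoholimmapping} shows that $\kappa_M^{}$ is literally the $\dgAlg$-morphism
\[
\holim_{\dgAlg}^{}(\AA\circ \QQ^M)\;\longrightarrow\;\holim_{\dgAlg}^{}(\AA\circ \QQ^M\circ \iota) = \holim_{\dgAlg}^{}\AA\vert_{\pi^{-1}(M)}
\]
obtained by restricting the indexing category along $\iota$.

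One then invokes a standard homotopy cofinality result (see \cite{hoKan1,hoKan2,hoKan3}): restriction along $\iota$ is a weak equivalence of homotopy limits provided that $\iota$ is \emph{homotopy initial}, i.e.\ that for every object $(S,h)$ of $M\downarrow\pi$ the comma category $\iota\downarrow(S,h)$ has contractible nerve. Since weak equivalences in $\dgAlg$ are detected on the underlying differential graded vector spaces (cf.\ Appendix~\ref{app:dgAlg}), this reduces the claim to a statement about cochain complexes where the classical homotopy cofinality theorem applies directly.

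The core geometric input is therefore the contractibility of $\iota\downarrow(S,h)$, which I would establish by exhibiting a \emph{terminal object}. Its objects are pairs $(T,g)$ with $T\in\pi^{-1}(M)$ and $g:T\to S$ a $\Str$-morphism such that $\pi(g)=h$, and its morphisms $(T,g)\to(T',g')$ are $\pi^{-1}(M)$-morphisms $\gamma:T\to T'$ satisfying $g'\circ\gamma=g$. Using that $\pi:\Str\to\Loc$ is a fibered category (Definition~\ref{def:fiberedcategory}), choose a cartesian lift $h_\ast : h^\ast S \to S$ of $h$, yielding an object $(h^\ast S, h_\ast)$ of $\iota\downarrow(S,h)$. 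For any other object $(T,g)$, the universal property of cartesian morphisms (Definition~\ref{def:cartesian}), applied to the commuting $\Loc$-triangle formed by $\id_M:M\to M$, $h:M\to \pi(S)$ and $\pi(g)=h$, produces a unique $\Str$-morphism $\widetilde{\gamma}:T\to h^\ast S$ with $\pi(\widetilde{\gamma})=\id_M$ and $h_\ast\circ\widetilde{\gamma}=g$; equivalently, a unique morphism $(T,g)\to(h^\ast S, h_\ast)$ in $\iota\downarrow(S,h)$. Hence $(h^\ast S, h_\ast)$ is terminal and the nerve of $\iota\downarrow(S,h)$ is contractible.

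The main obstacle, in my view, is the careful matching of the paper's normalized Bousfield--Kan-type formulas for $\holim_{\dgAlg}^{}$ with the abstract restriction-along-$\iota$ picture to which the homotopy cofinality theorem applies; once this bookkeeping is done, the terminal-object argument is a direct homotopical enhancement of the proof of Theorem~\ref{theo:initial}, upgrading the conditions ``nonempty and connected'' appearing there to the stronger conclusion of contractibility demanded by the homotopical setting.
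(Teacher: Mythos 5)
Your proposal is correct and follows essentially the same route as the paper: reduce the claim to homotopy initiality of $\iota:\pi^{-1}(M)\to M\downarrow\pi$ (the paper cites Hirschhorn, Theorem 19.6.7, where you cite the homotopy Kan extension literature, but the content is the same), and then establish contractibility of the nerve of each $\iota\downarrow(S,h)$ by exhibiting the terminal object $(h^\ast S,h_\ast)$ via the universal property of the cartesian lift. The paper's own proof is exactly this terminal-object argument, so no further comparison is needed.
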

\begin{proof}
By \cite[Theorem 19.6.7]{Hirschhorn}, it is sufficient to prove that our functor $\iota : \pi^{-1}(M)\to M\downarrow \pi$
(cf.\ proof of Theorem \ref{theo:initial}) is homotopy initial. According to \cite[Definition 19.6.1]{Hirschhorn},
the functor  $\iota : \pi^{-1}(M)\to M\downarrow \pi$ is homotopy initial if for every object
$(S,h)$ in $M\downarrow \pi$ the nerve of its over-category $\iota\downarrow (S,h)$ is contractible as a simplicial set.
We shall now show that the category $\iota\downarrow (S,h)$ has a terminal object, which by
\cite[Proposition 14.3.14]{Hirschhorn} implies that its nerve is contractible and hence completes the proof.
\sk

Recall that objects in $\iota\downarrow (S,h)$ are pairs $(S^\prime, g^\prime)$, where $S^\prime$ is an object
in $\pi^{-1}(M)$ and $g^\prime : \iota(S^\prime) = (S^\prime,\id_M) \to (S,h)$ is an $M\downarrow\pi$-morphism.
We may visualize objects in $\iota\downarrow (S,h)$ by morphisms of the form
\begin{flalign}
\xymatrix{
(S^\prime,\id_M) \ar[r]^-{g^\prime} &(S,h)
}
\end{flalign}
in $M\downarrow\pi$.
Using that $\iota : \pi^{-1}(M)\to M\downarrow \pi$ is fully faithful,
a morphism $g : (S^\prime,g^\prime)\to (S^{\prime\prime},g^{\prime\prime})$ in  $\iota\downarrow (S,h)$
is given by a commutative triangle
\begin{flalign}
\xymatrix{
\ar[dr]_-{g^\prime }(S^\prime,\id_M)  \ar[rr]^-{g}&& (S^{\prime\prime},\id_M) \ar[dl]^-{g^{\prime\prime}}\\
&(S,h)&
}
\end{flalign}
in $M\downarrow\pi$. By Definition \ref{def:fiberedcategory}, there exists a pullback of $S$ to $M$
along the $\Loc$-morphism $h:M\to\pi(S)$ and we make an arbitrary choice 
$h_\ast :  h^\ast S \to S $ of such cartesian $\Str$-morphism. By definition,
we have that $\pi(h_\ast ) = h : M\to \pi(S)$ and hence we obtain an $M\downarrow\pi$-morphism
of the form $h_\ast : (h^\ast S,\id_M)\to (S,h)$ which defines an object in $\iota\downarrow (S,h)$.
Our claim is that this object is a terminal object in $\iota\downarrow (S,h)$. To prove this claim, we have to show that 
given any other object $(S^\prime,g^\prime)$ in $\iota\downarrow (S,h)$, there exists a unique
way to complete the diagram
\begin{flalign}
\xymatrix{
\ar[dr]_-{g^\prime }(S^\prime,\id_M)  \ar@{-->}[rr]^{\exists ! g}&& (h^\ast S ,\id_M) \ar[dl]^-{h_\ast}\\
&(S,h)&
}
\end{flalign}
in $M\downarrow\pi$. Using that $\pi(g^\prime) = \pi(h_\ast) =h : M\to\pi(S)$,
this is equivalent to completing the $\Str$-diagram
\begin{flalign}
\xymatrix{
\ar[dr]_-{g^\prime }S^\prime  \ar@{-->}[rr]^{\exists ! g}&& h^\ast S \ar[dl]^-{h_\ast}\\
&S&
}
\end{flalign}
by a unique $\Str$-morphism $g: S^\prime \to h^\ast S $ satisfying $\pi(g) =\id_M:M\to M$. 
Since $h_\ast : h^\ast S\to S$ is by construction a cartesian $\Str$-morphism,
existence and uniqueness of the sought $g$ are ensured, see Definition \ref{def:cartesian}. 
This shows that $h_\ast : (h^\ast S,\id_M)\to (S,h)$ 
is a terminal object in $\iota\downarrow (S,h)$ and completes the proof.
\end{proof}

We will now show that the weak equivalences \eqref{eqn:hoRanpitoholimmapping} 
may be inverted up to cochain homotopies.
Let us fix a cleavage on $\pi : \Str\to \Loc$, i.e.\ for each object $S$ in $\Str$ and 
each $\Loc$-morphism $f : M\to \pi(S)$ we make a choice of cartesian $\Str$-morphism
$f_\ast : f^\ast S \to S$ satisfying $\pi(f_\ast) = f: M\to \pi(S)$. 
In order to simplify some of the formulas below, we will choose all
${\id_M}_\ast : \id_M^\ast S \to S$ to be the identity $\Str$-morphisms
$\id_{S} : S\to S$. Given a choice of cleavage, we can assign to each element
$(g_1,\dots,g_n)\in (M\downarrow \pi)_n$, with $n\geq 1$,
an $n$-arrow $(g_1^h,\dots,g_n^h) \in \pi^{-1}(M)_n$
via the commutative diagram
\begin{flalign}\label{eqn:pullbackmorph}
\xymatrix{
(S_0,h_0) & \ar[l]_-{g_1} (S_1,h_1) & \ar[l]_-{g_2}~~ \cdots ~~ & \ar[l]_-{g_n} (S_n,h_n)\\
\ar[u]^-{{h_0}_\ast} (h_0^\ast S_0,\id_M) & \ar[u]_-{{h_1}_\ast}\ar[l]^-{g^h_1} (h_1^\ast S_1,\id_M) & \ar[l]^-{g^h_2}~~ \cdots ~~ & \ar[u]_-{{h_n}_\ast} \ar[l]^-{g^h_n} (h_n^\ast S_n,\id_M)
}
\end{flalign}
in $M\downarrow \pi$.
With these preparations, we define a $\dgAlg$-morphism
\begin{subequations}\label{eqn:holimtohoRanpimapping}
\begin{flalign}
\zeta_M^{} : \holim_{\dgAlg}^{} \AA\vert_{\pi^{-1}(M)}^{} \longrightarrow \hoRanpi\AA(M)
\end{flalign}
by setting
\begin{flalign}
\big(\zeta_M^0(a)\big)(S,h) := \AA(h_\ast)\big(a(h^\ast S)\big)~,
\end{flalign}
for all $a\in (\holim_{\dgAlg}^{} \AA\vert_{\pi^{-1}(M)})^0$ and $(S,h)\in (M\downarrow\pi)_0$,
and
\begin{flalign}
\big(\zeta_M^n(a)\big)(g_1,\dots,g_n) := \AA({h_0}_\ast)\big(a(g_1^h,\dots,g_n^h)\big)~,
\end{flalign}
\end{subequations}
for all $n \in \bbZ_{\geq 1}$, $a\in (\holim_{\dgAlg}^{} \AA\vert_{\pi^{-1}(M)})^n$
and $(g_1,\dots,g_n)\in (M\downarrow \pi)_n$.
Recalling the definition of $\kappa_M^{} : \hoRanpi\AA(M) \to \holim_{\dgAlg}^{}\AA\vert_{\pi^{-1}(M)}$, 
see \eqref{eqn:hoRanpitoholimmapping}, one easily confirms that
\begin{flalign}\label{eqn:kappacirczeta}
\kappa_M^{}\circ \zeta_M^{} = \id_{\holim_{\dgAlg}\AA\vert_{\pi^{-1}(M)}}~.
\end{flalign}
Here it is essential to use that for our choice of cleavage the pullbacks
${\id_M}_\ast : \id_M^\ast S \to S$ along the identity morphisms $\id_M$ 
are the identities $\id_S$. For an arbitrary choice of cleavage, equation 
\eqref{eqn:kappacirczeta} just holds true up to cochain homotopy.
\sk

The other composition $\zeta_M^{}\circ\kappa_M^{}$ is just cochain homotopic 
to the identity, i.e.\
\begin{flalign}\label{eqn:zetakappahomotopy}
\zeta_M^{}\circ\kappa_M^{} - \id_{\hoRanpi\AA(M)} = \eta_M^{}\circ \dd + \dd \circ \eta_M^{}~.
\end{flalign}
The cochain homotopy
\begin{subequations}
\begin{flalign}
\eta_M^{} : \hoRanpi\AA(M)^{\bullet+1} \longrightarrow \hoRanpi\AA(M)^\bullet
\end{flalign}
explicitly reads as
\begin{flalign}
\big(\eta_M^1(a)\big)(S,h) := a(h_\ast)~,
\end{flalign}
for all $a\in \hoRanpi\AA(M)^{1}$ and $(S,h)\in (M\downarrow \pi)_0$,
and
\begin{flalign}
\big(\eta_M^{n+1}(a)\big)(g_1,\dots,g_n) := \sum_{i=0}^{n} (-1)^i\, a(g_1,\dots,g_i, {h_{i}}_\ast, g_{i+1}^h,\dots,g_{n}^{h})~,
\end{flalign}
\end{subequations}
for all $n\in\bbZ_{\geq 1}$, $a\in \hoRanpi\AA(M)^{n+1}$ and $(g_1,\dots,g_n)\in (M\downarrow \pi)_n$.
The verification of \eqref{eqn:zetakappahomotopy} is a straightforward, but slightly lengthy, computation. 
We suggest the reader to explore the pattern in low degree before attacking the full calculation. 
\sk

Similarly to \eqref{eqn:UpiAAfunctor}, we may now define a more convenient and 
efficient model for the homotopy right Kan extension by using the weak equivalences
$\kappa_M^{}$ \eqref{eqn:hoRanpitoholimmapping} and their inverses (up to homotopy) 
$\zeta_M^{}$ \eqref{eqn:holimtohoRanpimapping}. 
In contrast to \eqref{eqn:UpiAAfunctor}, our present 
construction does {\em not} equip the assignment of differential 
graded algebras $M\mapsto \holim_{\dgAlg}^{} \AA\vert_{\pi^{-1}(M)}^{} $
with a strict functorial structure, but only with a functorial structure
`up to homotopy'. The reason for this is that $\kappa_M^{}$ and $\zeta_M^{}$
are inverse to each other only up to homotopy.
Concretely, the construction is as follows: 
We define the assignment 
\begin{subequations}\label{eqn:hoUpiAAfunctor}
\begin{flalign}
\hoU \AA : \Loc\longrightarrow \dgAlg
\end{flalign}
by setting
\begin{flalign}
\hoU\AA(M) := \holim_{\dgAlg}^{} \AA\vert_{\pi^{-1}(M)}^{}~,
\end{flalign}
for all objects $M$ in $\Loc$, and
\begin{flalign}
\hoU\AA(f) := \kappa_{M^\prime}^{} \circ \hoRanpi\AA(f) \circ \zeta_M^{} : \hoU\AA(M)\longrightarrow \hoU\AA(M^\prime)~,
\end{flalign}
\end{subequations}
for all $\Loc$-morphisms $f:M\to M^\prime$. Explicitly, the $\dgAlg$-morphism $\hoU\AA(f)$
acts in degree $0$ as
\begin{subequations}\label{eqn:hoUpiAAmorphexplicit}
\begin{flalign}
\big(\hoU\AA(f)(a)\big)(S^\prime) = \AA(f_\ast)\big(a(f^\ast S^\prime)\big)~,
\end{flalign}
for all $a\in \hoU\AA(M)^0$ and $S^\prime\in\pi^{-1}(M^\prime)_0$,
and in degree $n\geq 1$ as
\begin{flalign}
\big(\hoU\AA(f)(a)\big)(g_1^\prime,\dots,g_n^\prime) = \AA(f_\ast)\big(a(g_1^{\prime\,f},\dots,g_n^{\prime\,f})\big)~,
\end{flalign}
for all $a\in \hoU\AA(M)^n$ and $(g_1^\prime,\dots,g_n^\prime)\in\pi^{-1}(M^\prime)_n$,
where similarly to \eqref{eqn:pullbackmorph} the $n$-arrow $(g_1^{\prime\,f},\dots,g_n^{\prime\,f})\in\pi^{-1}(M)_n$
is defined by pullback along $f$ of the $n$-arrow $(g_1^\prime,\dots,g_n^\prime)\in\pi^{-1}(M^\prime)_n$. Concretely, 
$(g_1^{\prime\,f},\dots,g_n^{\prime\,f})\in\pi^{-1}(M)_n$ is defined by the commutative diagram
\begin{flalign}\label{eqn:pullbackmorphallsame}
\xymatrix{
(S^\prime_0 ,f) & \ar[l]_-{g^\prime_1} (S^\prime_1,f) & \ar[l]_-{g^\prime_2}~~ \cdots ~~ & \ar[l]_-{g^\prime_n} (S^\prime_n,f)\\
\ar[u]^-{{f}_\ast} (f^\ast S^\prime_0,\id_M) & \ar[u]_-{{f}_\ast}\ar[l]^-{g^{\prime\,f}_1} (f^\ast S^\prime_1,\id_M) & \ar[l]^-{g^{\prime f}_2}~~ \cdots ~~ & \ar[u]_-{{f}_\ast} \ar[l]^-{g^{\prime\,f}_n} (f^\ast S^\prime_n,\id_M)
}
\end{flalign}
\end{subequations}
in  $M\downarrow \pi$.
Due to our special choice of cleavage, 
\eqref{eqn:hoUpiAAfunctor} preserves identities, i.e.\
\begin{flalign}
\hoU\AA(\id_M) = \kappa_M^{}\circ \hoRanpi\AA(\id_M)\circ \zeta_M^{}= \kappa_M^{}\circ \zeta_M^{} =\id_{\hoU\AA(M)}~,
\end{flalign}
for all objects $M$ in $\Loc$. However, compositions are only preserved up to homotopy, i.e.\
there exists a cochain homotopy $\hoU\AA(f^\prime) \circ \hoU\AA(f) \sim \hoU\AA(f^\prime\circ f)$. 
Indeed, using \eqref{eqn:zetakappahomotopy}, one finds that
\begin{subequations}\label{eqn:compositionhomotopies}
\begin{flalign}
\hoU\AA(f^\prime) \circ \hoU\AA(f) = \hoU\AA(f^\prime\circ f) 
+ \gamma_{f^\prime,f}^{}\circ \dd + \dd \circ \gamma_{f^\prime,f}^{}~,
\end{flalign}
for all composable $\Loc$-morphisms 
$f : M\to M^\prime$ and $f^\prime: M^\prime\to M^{\prime\prime}$,
where 
\begin{flalign}
\gamma_{f^\prime,f}^{} := \kappa_{M^{\prime\prime}}^{}\circ \hoRanpi\AA(f^\prime) 
\circ \eta_{M^\prime}^{} \circ \hoRanpi\AA(f)\circ \zeta_{M}^{}~.
\end{flalign}
\end{subequations}
In the following we shall use the `up to homotopy' functor \eqref{eqn:hoUpiAAfunctor}
as a model for the homotopy right Kan extension.
\begin{rem}\label{rem:homphysicalinterpretation}
The differential graded algebra $\hoU\AA(M)$ assigned to a spacetime $M$
by our model \eqref{eqn:hoUpiAAfunctor} for the homotopy
right Kan extension has an interpretation in terms of
groupoid cohomology: From the explicit description 
of the homotopy limit, see below \eqref{eqn:holimobject}, 
we observe that $\hoU\AA(M) = C^\bullet(\pi^{-1}(M);\AA)$
is the differential graded algebra underlying the groupoid cohomology
of $\pi^{-1}(M)$ with values in the functor $\AA: \Str\to\Alg$. 
(See e.g.\ \cite{Crainic}  for some background material on groupoid cohomology.) 
Taking cohomologies, we obtain a graded algebra 
$H^\bullet (\hoU\AA(M)) = H^\bullet(\pi^{-1}(M);\AA)$, whose zeroth degree
is the algebra $\U\AA(M)$ assigned by the ordinary right Kan extension 
\eqref{eqn:UpiAAfunctor} to the spacetime $M$.
This observation suggests that the information about the action of gauge transformations on the
quantum field theory $\AA :\Str\to\Alg$ that is captured by the homotopy right Kan extension 
is more detailed than the one available in the ordinary right Kan extension $\U\AA :\Loc\to \Alg$.
The extra information is encoded in the higher cohomologies
$H^n (\hoU\AA(M)) = H^n(\pi^{-1}(M);\AA)$, for $n\geq 1$. Unfortunately, 
the physical interpretation of such higher-order information is currently not fully clear to us.
\sk

As a side remark, notice that the assignment of cohomologies $M\mapsto H^\bullet (\hoU\AA(M))$
is a strict functor, even though $\hoU\AA$ is just a functor `up to homotopy'. 
In fact, by construction the cohomology functors send cochain homotopies to identities.
Observe, moreover, that in the special case of a ``global gauge group'' $G$ (cf.\ Example \ref{ex:globalgauge})
groupoid cohomology reduces to group cohomology, i.e.\ $\hoU\AA(M) = C^\bullet(G;\AA)$
and $H^\bullet (\hoU\AA(M)) = H^\bullet(G;\AA)$.
\end{rem}

\begin{rem}\label{rem:uptohomotopyfunctor}
From a homotopical perspective, it would be natural to refine 
our concept of `up to homotopy' functor $\hoU\AA :\Loc\to \dgAlg$ 
by adding coherence conditions: Instead of just demanding that 
there exists a cochain homotopy 
$\hoU\AA(f^\prime) \circ \hoU\AA(f) \sim \hoU\AA(f^\prime\circ f)$
controlling compositions, one should make a particular choice 
for every pair of composable morphisms $f$ and $f^\prime$ 
(e.g.\ $\gamma_{f^\prime,f}^{}$ given in \eqref{eqn:compositionhomotopies})
and add this choice to the data defining an `up to homotopy' functor.
This is however just the first step towards a homotopically coherent description:
Given three composable $\Loc$-morphisms $f: M\to M^\prime$, $f^\prime: M^\prime\to M^{\prime\prime}$
and $f^{\prime\prime} : M^{\prime\prime}\to M^{\prime\prime\prime}$, we may compare the 
two cochain homotopies corresponding to compositions in different orders, i.e.\
\begin{subequations}
\begin{flalign}
\hoU\AA(f^{\prime\prime})\circ \big(\hoU\AA(f^\prime)\circ \hoU\AA(f)\big)
\end{flalign}
and 
\begin{flalign}
\big(\hoU\AA(f^{\prime\prime})\circ \hoU\AA(f^\prime) \big) \circ \hoU\AA(f)~.
\end{flalign}
\end{subequations}
It turns out these these two cochain homotopies are homotopic by a higher cochain homotopy.
Explicitly,  the difference of the two cochain homotopies is given by
\begin{subequations}
\begin{flalign}\label{eqn:coherence}
\gamma_{f^{\prime\prime}, f^{\prime} \circ f}^{} + \hoU\AA(f^{\prime\prime})\circ \gamma_{f^\prime,f}^{}
-\big( \gamma_{f^{\prime\prime}\circ f^{\prime},f}^{} + \gamma_{f^{\prime\prime},f^{\prime}}^{}\circ \hoU\AA(f)\big)
= \dd\circ \gamma_{f^{\prime\prime},f^{\prime},f}^{} - \gamma_{f^{\prime\prime},f^{\prime},f}^{}\circ\dd~,
\end{flalign}
where
\begin{flalign}\label{eqn:gamma3}
\gamma_{f^{\prime\prime},f^{\prime},f}^{} := \kappa_{M^{\prime\prime\prime}}^{}\circ \hoRanpi\AA(f^{\prime\prime})\circ \eta_{M^{\prime\prime}}^{}\circ 
\hoRanpi\AA(f^\prime)\circ\eta_{M^{\prime}}^{} \circ\hoRanpi\AA(f)\circ \zeta_M^{}~.
\end{flalign}
\end{subequations}
A particular choice of such higher cochain homotopies 
(e.g.\ $\gamma_{f^{\prime\prime},f^{\prime},f}^{}$ given in \eqref{eqn:gamma3})
should be added to the data defining an `up to homotopy' functor. It is crucial to 
notice that the cochain homotopies $\gamma_{f^\prime,f}^{}$ and 
higher cochain homotopies $\gamma_{f^{\prime\prime},f^{\prime},f}^{}$ have to 
satisfy the coherence conditions \eqref{eqn:coherence}.
Considering compositions of four and more morphisms introduces
additional higher cochain homotopies and coherence conditions, which all should be added to the definition
of `up to homotopy' functor.
\sk

From the above description it becomes evident that adding coherent homotopies
to our definition of `up to homotopy' functor $\hoU\AA :\Loc\to \dgAlg$ is
a very cumbersome task if we restrict ourselves to
elementary categorical techniques.
The right framework to systematically address these issues lies in the theory of 
colored operads. (We are very grateful to Ulrich Bunke for suggesting this operadic picture to us.)
In this framework, coherent `up to homotopy' functors may be naturally defined
as homotopy coherent diagrams, which are algebras (in the operadic sense) over 
the cofibrant replacement of the diagram operad, see e.g.\ \cite{Operad}. 
By such operadic techniques, in particular the homotopy transfer theorem,
we can already infer that our `up to homotopy' functor $\hoU\AA :\Loc\to \dgAlg$
is a homotopy coherent diagram in the sense of \cite{Operad} because
its `up to homotopy' functoriality is transferred from the the strict functoriality of $\hoRanpi\AA :\Loc\to \dgAlg$ 
via the weak equivalences $\kappa_M^{}$ \eqref{eqn:hoRanpitoholimmapping} 
and $\zeta_M^{}$ \eqref{eqn:holimtohoRanpimapping}, see  \eqref{eqn:hoUpiAAfunctor}. 
\sk

In the next section, we shall observe that dealing with coherent
versions of commutativity (in the sense of the causality axiom) 
`up to homotopy' leads to similar technical issues as above. 
See in particular Remark \ref{rem:uptohomotopycommutativity} 
for further comments. This suggest the development of an operadic
framework for locally covariant quantum field theory and its 
homotopical generalization. In this way all higher-order coherences
would be automatically encoded in the framework. This is similar to the recent factorization algebra 
approach to quantum field theory by Costello and Gwilliam \cite{Costello}, 
however using a different colored operad that captures the causal 
structure of {\em Lorentzian} spacetime manifolds.
Developing such an operadic framework for locally covariant quantum field theory
is beyond the scope of the present paper, but we plan to come back to this
in future works. 
\end{rem}

\section{\label{sec:homproperties}Homotopical properties}
Let  $\AA : \Str \to \Alg$ be a quantum field theory on a category fibered in groupoids
$\pi : \Str\to \Loc$ in the sense of Definition \ref{def:QFT}. 
In the previous section we obtained
a convenient description of its homotopy right Kan extension in terms of an `up to homotopy'
functor $\hoU\AA :\Loc\to \dgAlg$ with values in the category of differential graded algebras, 
cf.\  \eqref{eqn:hoUpiAAfunctor}.
We will now address the question whether $\hoU\AA :\Loc\to \dgAlg$ is a {\em homotopical}
quantum field theory, i.e.\ whether it satisfies homotopically meaningful generalizations
of the axioms proposed in \cite{Brunetti}. We shall focus only on the causality and
the time-slice axiom because, as we have seen in Section \ref{sec:properties},
isotony is violated for almost all of our examples of interest.
\sk

We start with the causality axiom. Given a causally disjoint $\Loc$-diagram 
$M_1\stackrel{f_1}{\longrightarrow} M \stackrel{f_2}{\longleftarrow} M_2$,
we consider the induced $\bbZ_{\geq 0}$-graded commutator
\begin{flalign}
[\,\cdot\,,\,\cdot\,] \circ \big(\hoU\AA(f_1)\otimes \hoU\AA(f_2)\big) \,:\, \hoU\AA(M_1)\otimes \hoU\AA(M_2)
 \longrightarrow \hoU\AA(M)~,
\end{flalign}
which is a $\dgVec$-morphism. (Here $\otimes$ denotes the tensor product of differential graded vector spaces.)
Using the explicit expression \eqref{eqn:holimproduct} for the product on $\hoU\AA(M)$, we observe that
the graded commutator $[\hoU\AA(f_1)(a), \hoU\AA(f_2)(a^\prime) ]$ vanishes only if
both $a$ and $a^\prime$ are of degree $0$. However, if $a$ or $a^\prime$ (or both) are of degree $\geq 1$,
then the graded commutator in general does not vanish. Hence, 
our `up to homotopy' functor $\hoU\AA :\Loc\to \dgAlg$ does not 
satisfy the original form of the causality axiom.
\sk

The fact that $\hoU\AA :\Loc\to \dgAlg$ does not satisfy the original 
form of the causality axiom is not problematic at all (even more, it 
should be expected), because strict commutativity is not a homotopically 
meaningful concept, i.e.\ it is not preserved under weak equivalences in $\dgVec$.
The homotopically meaningful replacement of the causality axiom is
commutativity `up to homotopy', i.e.\ for each 
causally disjoint $\Loc$-diagram 
$M_1\stackrel{f_1}{\longrightarrow} M \stackrel{f_2}{\longleftarrow} M_2$
there should exist a cochain homotopy $[\,\cdot\,,\,\cdot\,]  \sim 0$ to the zero map.
Hence, in order to prove that $\hoU\AA :\Loc\to \dgAlg$ satisfies causality `up to homotopy', 
we must find 
\begin{flalign}
\lambda_{f_1,f_2}^{} : \big(\hoU\AA(M_1)\otimes \hoU\AA(M_2)\big)^{\bullet+1} \longrightarrow  \hoU\AA(M)^\bullet~,
\end{flalign}
such that
\begin{flalign}\label{eqn:uptohomotopycommutator}
[\,\cdot\,,\,\cdot\,] \circ \big(\hoU\AA(f_1)\otimes \hoU\AA(f_2)\big) = 
\lambda_{f_1,f_2}^{} \circ\dd + \dd\circ \lambda_{f_1,f_2}^{}~,
\end{flalign}
for each causally disjoint $\Loc$-diagram 
$M_1\stackrel{f_1}{\longrightarrow} M \stackrel{f_2}{\longleftarrow} M_2$.
Our goal is to construct such cochain homotopies and thereby to show
that  $\hoU\AA :\Loc\to \dgAlg$ satisfies the causality axiom `up to homotopy'.
\begin{rem}\label{rem:uptohomotopycommutativity}
Similarly to Remark \ref{rem:uptohomotopyfunctor}, it would be natural from a homotopical perspective
to refine this notation of causality `up to homotopy' by adding the (higher) homotopies 
and their coherence conditions to the data of a homotopical quantum field theory. 
These additional structures would capture the homotopical information encoded in the 
commutation of more than two observables arising from
families of mutually causally disjoint embeddings $f_i : M_i \to M$, with $i=1,2,\dots, N$.
For example, given $N=3$ 
and observables $a_i$ localized in $M_i$, these structures
relate the cochain homotopy for the 2-step 
commutation $a_1\,a_2\,a_3 \to a_1\,a_3\,a_2\to a_3\,a_1\,a_2$ (i.e.\ first commuting
$a_2$ with $a_3$ and then commuting $a_1$ with $a_3$)
and the cochain homotopy for the 1-step commutation $a_1\,a_2\,a_3 \to a_3\,a_1\,a_2$
(i.e.\ immediately commuting $a_1\,a_2$ with $a_3$) by a higher cochain homotopy.
An operadic point of view on locally covariant quantum field theory would be capable to
systematically address such a refined notion of causality `up to homotopy'
by using cofibrant replacements of colored operads \cite{Operad}.
Interestingly, the algebraic structures we are looking for resemble a colored operad
that interpolates between the $A_\infty$ and the $E_{\infty}$-operad,
depending on the causal relations between subspacetimes.
This will be studied and clarified in future works. 
\sk

As a side remark, notice that our present (non-coherent) notion of causality `up to homotopy' 
is sufficiently strong to imply that the (strictly functorial) assignment of cohomologies 
$M\mapsto H^\bullet(\hoU\AA(M))$ 
satisfies strict causality (in the sense of graded algebras). Hence, all information about the homotopical
quantum field theory $\hoU\AA(M)$ that is contained in its cohomologies behaves in a strictly causal way.
\end{rem}

Our method for establishing the cochain homotopies in 
\eqref{eqn:uptohomotopycommutator} is inspired by the treatment 
of the cup product in singular cohomology, see e.g.\ \cite[Proof of Theorem 3.11]{Hatcher}.
Let us first define a $\dgVec$-morphism
\begin{subequations}\label{eqn:reversemap}
\begin{flalign}
\rho_M^{} : \hoU\AA(M) \longrightarrow \hoU\AA(M)~,
\end{flalign}
which reverses the direction of an $n$-arrow in $\pi^{-1}(M)$ (notice that for this it is crucial
that $\pi^{-1}(M)$ is a groupoid). Explicitly, we set
\begin{flalign}
\big(\rho_M^0(a)\big)(S):= a(S)~,
\end{flalign}
for all $a\in \hoU\AA(M)^0$ and $S\in \pi^{-1}(M)_0$,
and
\begin{flalign}
\big(\rho_M^n(a)\big)(g_1,\dots,g_n):= 
(-1)^{\frac{n (n+1)}{2}}~\AA(g_1\circ\cdots\circ g_n)\big(a(g_n^{-1},\dots,g_1^{-1})\big)~,
\end{flalign}
\end{subequations}
for all $n\in\bbZ_{\geq 1}$, $a \in \hoU\AA(M)^n$ and $(g_1,\dots,g_n)\in \pi^{-1}(M)_n$.
The sign factor is motivated by the fact that $n(n+1)/2$ is the number of transpositions of
adjacent elements taking the string $(1,2,\dots,n)$ to the string $(n,n-1,\dots, 1)$.
Notice that reversing twice gives the identity, i.e.\ $\rho_M^{}\circ \rho_M^{} = \id_{\hoU\AA(M)}$.
A crucial property of $\rho_M^{}$ is that it is cochain homotopic to the identity $\id_{\hoU\AA(M)}$.
Let us define
\begin{subequations}\label{eqn:reversehomotopy}
\begin{flalign}
\beta_M^{} : \hoU\AA(M)^{\bullet+1} \longrightarrow \hoU\AA(M)^\bullet
\end{flalign}
by setting
\begin{flalign}
\big(\beta_M^{1}(a)\big)(S):=0~,
\end{flalign}
for all $a\in \hoU\AA(M)^1$ and $S\in\pi^{-1}(M)_0$, and
\begin{flalign}
\big(\beta_M^{n+1}(a)\big)(g_1,\dots,g_n) := (-1)^n \sum_{i=1}^n (-1)^{\frac{(n-i)(n-i+1)}{2}}\, a(g_1,\dots,g_{i-1}, g_{i}\circ\cdots\,\circ g_n,g_n^{-1},\dots,g_i^{-1} ) ~,
\end{flalign}
\end{subequations}
for all $n\in\bbZ_{\geq 1}$, $a\in \hoU\AA(M)^{n+1}$ and $(g_1,\dots,g_n)\in\pi^{-1}(M)_n$.
\begin{lem}\label{lem:reversehomotopy}
The equality
\begin{flalign}\label{eqn:reversehomotopyequality}
\rho_M^{} - \id_{\hoU\AA(M)} = \beta_M^{}\circ\dd + \dd\circ \beta_M^{}
\end{flalign}
holds true.
\end{lem}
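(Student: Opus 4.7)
The plan is to verify \eqref{eqn:reversehomotopyequality} by evaluating both sides on an arbitrary $n$-arrow $(g_1,\dots,g_n) \in \pi^{-1}(M)_n$ for each $n \geq 0$, splitting into the trivial base case $n=0$ and the combinatorial case $n \geq 1$.

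For $n=0$, by definition $\rho_M^0 = \id$, so the left-hand side vanishes. On the right-hand side only $\beta_M^1 \circ \dd$ contributes (there is no $\beta_M^0$, as it would land in negative degree), and $\beta_M^1$ is zero identically by definition. This handles the base case.

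For $n \geq 1$, I would expand each of the two terms on the right-hand side. The term $\dd \circ \beta_M^n$ applied to $a \in \hoU\AA(M)^n$ produces a double sum indexed by a fold position $i \in \{1,\dots,n-1\}$ coming from $\beta_M^n$ and a face position $j \in \{0,1,\dots,n\}$ coming from $\dd$; similarly $\beta_M^{n+1} \circ \dd$ yields a double sum indexed by a fold position $i \in \{1,\dots,n\}$ and a face position $j \in \{0,1,\dots,n+1\}$, each summand being $\AA$ applied to a value of $a$ on a \emph{folded} $(n{+}1)$-arrow of the form $(g_1,\dots,g_{i-1},g_i\circ\cdots\circ g_n, g_n^{-1},\dots,g_i^{-1})$ with one entry composed, inserted, or omitted according to the face map. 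I expect these contributions to organise into four classes: (i) pairwise cancellations between a face contraction landing on the composite factor $g_i\circ\cdots\circ g_n$ and a neighbouring fold index; (ii) terms whose arguments contain a pair $g_j \circ g_j^{-1} = \id$ or an inserted $\id$, all of which vanish by the normalization condition built into elements of $\hoU\AA(M)$; (iii) one surviving term equal to $-a(g_1,\dots,g_n)$, arising from the face map that truncates the fold at $i=1$; (iv) one surviving term, coming from the maximally folded summand at $i=1$, which reassembles to $(-1)^{n(n+1)/2}\,\AA(g_1\circ\cdots\circ g_n)\bigl(a(g_n^{-1},\dots,g_1^{-1})\bigr) = \rho_M^n(a)(g_1,\dots,g_n)$.

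The main obstacle is the sign bookkeeping: the triangular-number factors $(-1)^{(n-i)(n-i+1)/2}$ appearing in $\beta_M$ must interact correctly with the alternating signs $(-1)^j$ from $\dd$ and with the overall factor $(-1)^n$ in $\beta_M^{n+1}$ so that the pairs in class (i) cancel. I would manage this by reparametrising the two double sums so that each cancelling pair is indexed by the same variables and then checking that the relative sign is $-1$; a convenient mnemonic is that incrementing the fold index $i$ by one shifts a $g_i$ from the reversed tail to the unreversed head, producing exactly the triangular-number increment $(n-i)$ in the exponent that matches the alternation in $\dd$. As a reliability check, I would first carry out the full calculation by hand for $n=1$ and $n=2$, where only a handful of terms appear and the cancellation pattern is fully transparent, and then extract from these cases the indexing that drives the general argument.
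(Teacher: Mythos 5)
Your proposal is correct and takes essentially the same route as the paper, which likewise verifies the identity degree by degree (checking $n=0$ and $n=1$ explicitly and declaring the case $n\geq 2$ a straightforward but lengthy computation that it does not spell out); your outline of the cancellation pattern is a sensible way to organize that omitted computation. One small calibration your $n=1,2$ check will reveal: the surviving term $-a(g_1,\dots,g_n)$ arises from the \emph{minimally} folded summand $i=n$ via the last face map (for $n=1$ this coincides with $i=1$, so the lowest degree is misleading on this point), while the $\rho_M^{}$ term does come from the zeroth face of the maximally folded summand $i=1$, with sign $(-1)^n(-1)^{(n-1)n/2}=(-1)^{n(n+1)/2}$ as required.
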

\begin{proof}
In degree $n=0$ the equality holds true because $\rho_M^0 = \id_{\hoU\AA(M)^0}$ and $\beta_M^{1}=0$.
Degree $n=1$ already requires a short calculation: For all $a\in \hoU\AA(M)^1$ and $g\in \pi^{-1}(M)_1$,
\begin{flalign}
\nn \big((\beta_M^{2}\circ\dd + \dd\circ \beta_M^{1})(a)\big)(g) &= \big(\beta_M^2(\dd a)\big)(g) = -\dd a(g,g^{-1})\\
&= - \AA(g)\big(a(g^{-1})\big) + a(g\circ g^{-1}) - a(g)  = \big(\rho_M^{1}(a)\big)(g) - a(g)~,
\end{flalign}
where we also have used the normalization condition $a(\id) =0$. In degree $n\geq 2$, the equality
\eqref{eqn:reversehomotopyequality} can be proven with a straightforward but rather lengthy calculation using also the 
normalization conditions $a(g_1,\dots,g_{i-1},\id,g_{i+1},\dots,g_n)=0$.
As this calculation is not instructive, we shall not spell it out in detail.
\end{proof}

The role of $\rho_M^{}$ is to reverse the order of the product $\mu$ on $\hoU\AA(M)$ when evaluated on elements
associated to causally disjoint subsets in $M$.
\begin{lem}\label{lem:mumuop}
For any causally disjoint $\Loc$-diagram $M_1\stackrel{f_1}{\longrightarrow} M \stackrel{f_2}{\longleftarrow} M_2$,
the equality
\begin{flalign}\label{eqn:rhomuop}
\rho_M^{}\circ \mu\circ \big(\hoU\AA(f_1)\otimes \hoU\AA(f_2)\big) 
= \mu^\op \circ (\rho_M^{}\otimes\rho_M^{})\circ  \big(\hoU\AA(f_1)\otimes \hoU\AA(f_2)\big) 
\end{flalign}
holds true, where $\mu^\op$ is the opposite product on $\hoU\AA(M)$, i.e.\
$\mu^\op(a\otimes a^\prime) := a\cdot^\op a^\prime := (-1)^{n\, n^\prime}\, a^\prime \, a$, 
for all $a\in \hoU\AA(M)^n$ and $a^\prime \in \hoU\AA(M)^{n^\prime}$.
\end{lem}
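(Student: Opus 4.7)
The plan is to evaluate both sides of \eqref{eqn:rhomuop} on an arbitrary composable $(n+n')$-arrow $(g_1,\dots,g_{n+n'})$ in $\pi^{-1}(M)$ (for $a_1\in\hoU\AA(M_1)^n$ and $a_2\in\hoU\AA(M_2)^{n'}$), unfold using the explicit formulas for the product \eqref{eqn:holimproduct}, the reverse map \eqref{eqn:reversemap}, and the action of $\hoU\AA(f_i)$ on morphisms \eqref{eqn:hoUpiAAmorphexplicit}, and compare. Writing $b_i:=\hoU\AA(f_i)(a_i)$ and setting $h_1:=g_1\circ\cdots\circ g_{n'}$, $h_2:=g_{n'+1}\circ\cdots\circ g_{n+n'}$, the left-hand side reduces to
\begin{flalign*}
(-1)^{\frac{(n+n')(n+n'+1)}{2}}\,\AA(h_1\circ h_2)\big(b_1(g_{n+n'}^{-1},\dots,g_{n'+1}^{-1})\big)\cdot \AA(h_1)\big(b_2(g_{n'}^{-1},\dots,g_1^{-1})\big)
\end{flalign*}
(after using that $\AA$ is a functor to turn $\AA(h_1\circ h_2)\circ\AA(h_2^{-1})$ into $\AA(h_1)$), while the right-hand side reduces to the same pair of factors written in the opposite order, multiplied by $(-1)^{nn'+\frac{n(n+1)}{2}+\frac{n'(n'+1)}{2}}$.

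As a warm-up I would verify the elementary sign identity $\frac{(n+n')(n+n'+1)}{2}=nn'+\frac{n(n+1)}{2}+\frac{n'(n'+1)}{2}\pmod 2$, so that the scalar prefactors match on the two sides. This reduces the lemma to the statement that the two elements
\begin{flalign*}
X:=\AA(h_1\circ h_2)\big(b_1(g_{n+n'}^{-1},\dots,g_{n'+1}^{-1})\big)\qquad\text{and}\qquad Y:=\AA(h_1)\big(b_2(g_{n'}^{-1},\dots,g_1^{-1})\big)
\end{flalign*}
commute in $\AA(\mathrm{t}(g_1))$.

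The core argument is then to identify $X$ and $Y$ with images under $\AA$ of $\Str$-morphisms projecting via $\pi$ to $f_1$ and $f_2$, respectively. Indeed, unfolding \eqref{eqn:hoUpiAAmorphexplicit} expresses $b_1(g_{n+n'}^{-1},\dots,g_{n'+1}^{-1})=\AA((f_1)_\ast)\big(a_1(\cdot)\big)$ with $(f_1)_\ast: f_1^\ast\mathrm{t}(g_{n+n'}^{-1})\to \mathrm{t}(g_{n+n'}^{-1})$, and similarly for $b_2$. Hence $X=\AA(h_1\circ h_2\circ (f_1)_\ast)(a_1(\cdot))$ and $Y=\AA(h_1\circ (f_2)_\ast)(a_2(\cdot))$. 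Since every $g_i$ is a $\pi^{-1}(M)$-morphism, $\pi(h_1)=\pi(h_2)=\id_M$, and $\pi((f_i)_\ast)=f_i$ by construction of the cleavage, so the $\Str$-diagram whose commutator we are computing projects via $\pi$ to exactly the original causally disjoint $\Loc$-diagram $M_1\xrightarrow{f_1}M\xleftarrow{f_2}M_2$.

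The main (and really only) obstacle is bookkeeping the signs in the initial expansion; once that is done, the commutativity $[X,Y]=0$ follows directly from the causality axiom of $\AA:\Str\to\Alg$ (Definition \ref{def:QFT}), applied to the $\Str$-diagram $f_1^\ast\mathrm{t}(g_{n+n'}^{-1})\xrightarrow{h_1\circ h_2\circ (f_1)_\ast}\mathrm{t}(g_1)\xleftarrow{h_1\circ (f_2)_\ast} f_2^\ast\mathrm{t}(g_{n'}^{-1})$. This completes the proof; the degrees $n=0$ or $n'=0$ are handled identically and are in fact strictly simpler since the sign factors trivialise.
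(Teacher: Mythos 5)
Your proposal is correct and follows essentially the same route as the paper's proof: expand both sides on a composable $(n+n')$-arrow using \eqref{eqn:holimproduct}, \eqref{eqn:reversemap} and \eqref{eqn:hoUpiAAmorphexplicit}, match the sign prefactors via $\tfrac{(n+n')(n+n'+1)}{2}=nn'+\tfrac{n(n+1)}{2}+\tfrac{n'(n'+1)}{2}$, and reduce to the commutativity of the two factors, which follows from the causality axiom of $\AA$ applied to the $\Str$-morphisms $g_1\circ\cdots\circ g_{n+n'}\circ{f_1}_\ast$ and $g_1\circ\cdots\circ g_{n'}\circ{f_2}_\ast$ projecting to $f_1$ and $f_2$. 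No gaps.
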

\begin{proof}
Let $a\in\hoU\AA(M_1)^n$ and $a^\prime\in \hoU\AA(M_2)^{n^\prime}$ be arbitrary.
Using \eqref{eqn:hoUpiAAmorphexplicit}, we obtain for the left-hand side of
\eqref{eqn:rhomuop}
\begin{flalign}
\nn &\rho_M^{}\big(\hoU\AA(f_1)(a)~ \hoU\AA(f_2)(a^\prime)\big)(g_1,\dots,g_{n+n^\prime})\\
\nn &~\qquad~ =(-1)^{\frac{(n+n^\prime)(n+n^\prime+1)}{2}}~\AA(g_1\circ \cdots\circ g_{n+n^\prime})
\left(\big(\hoU\AA(f_1)(a)~ \hoU\AA(f_2)(a^\prime)\big)(g_{n+n^\prime}^{-1},\dots,g_1^{-1})\right)\\
\nn &~\qquad~ =(-1)^{\frac{(n+n^\prime)(n+n^\prime+1)}{2}}~\AA(g_1\circ \cdots\circ g_{n+n^\prime})
\big(\hoU\AA(f_1)(a)(g_{n+n^\prime}^{-1},\dots,g_{n^\prime+1}^{-1}) \big)\\
\nn &~\qquad~ \qquad~\qquad~\qquad~\qquad~\qquad~\qquad~\times \AA(g_1\circ \cdots\circ g_{n^\prime})
\big(\hoU\AA(f_2)(a^\prime)(g_{n^\prime}^{-1},\dots,g_{1}^{-1}) \big)\\
\nn &~\qquad~ =(-1)^{\frac{(n+n^\prime)(n+n^\prime+1)}{2}}~\AA(g_1\circ \cdots\circ g_{n+n^\prime}\circ {f_1}_\ast)
\big(a(g_{n+n^\prime}^{-1\,f_1},\dots,g_{n^\prime+1}^{-1\,f_1}) \big)\\
 &~\qquad~ \qquad~\qquad~\qquad~\qquad~\qquad~\qquad~\times \AA(g_1\circ \cdots\circ g_{n^\prime}\circ {f_{2}}_\ast)
\big(a^\prime(g_{n^\prime}^{-1\,f_2},\dots,g_{1}^{-1\,f_2}) \big)~.\label{eqn:tmpcausal}
\end{flalign}
For the right-hand side of \eqref{eqn:rhomuop} we obtain
\begin{flalign}
\nn &\big(\rho_M^{}(\hoU\AA(f_1)(a))\cdot^\op \rho_M^{}(\hoU\AA(f_2)(a^\prime))\big)(g_1,\dots,g_{n+n^\prime})\\
\nn &~\qquad~ =(-1)^{n\,n^\prime}~\big(\rho_M^{}(\hoU\AA(f_2)(a^\prime))~ \rho_M^{}(\hoU\AA(f_1)(a))\big)(g_1,\dots,g_{n+n^\prime})\\
\nn &~\qquad~ =(-1)^{n\,n^\prime}~\rho_M^{}(\hoU\AA(f_2)(a^\prime))(g_1,\dots,g_{n^\prime})\\
\nn &~\qquad~ \qquad~\qquad~\qquad~\qquad~\qquad~\qquad~\times \AA(g_1\circ\cdots\circ g_{n^\prime})\big( \rho_M^{}(\hoU\AA(f_1)(a))(g_{n^\prime+1},\dots,g_{n+n^\prime})\big)\\
\nn &~\qquad~ =(-1)^{n\,n^\prime}\,(-1)^{\frac{n^\prime(n^\prime + 1)}{2}}\,(-1)^{\frac{n(n+ 1)}{2}}~
\AA(g_1\circ \cdots\circ g_{n^\prime})\big(\hoU\AA(f_2)(a^\prime)(g_{n^\prime}^{-1},\dots,g_{1}^{-1})\big)\\
\nn &~\qquad~ \qquad~\qquad~\qquad~\qquad~\qquad~\qquad~\times \AA(g_1\circ\cdots\circ g_{n+n^\prime})\big( \hoU\AA(f_1)(a)(g_{n+n^\prime}^{-1},\dots,g_{n^\prime+1}^{-1})\big)\\
\nn &~\qquad~ = (-1)^{\frac{(n+n^\prime)(n+n^\prime + 1)}{2}}~
\AA(g_1\circ \cdots\circ g_{n^\prime}\circ {f_2}_\ast)\big(a^\prime(g_{n^\prime}^{-1\,f_2},\dots,g_{1}^{-1\,f_2})\big)\\
&~\qquad~ \qquad~\qquad~\qquad~\qquad~\qquad~\qquad~\times \AA(g_1\circ\cdots\circ g_{n+n^\prime}\circ{f_1}_\ast)\big( a(g_{n+n^\prime}^{-1\,f_{1}},\dots,g_{n^\prime+1}^{-1\,f_1})\big)~.\label{eqn:tmpcausal2}
\end{flalign}
Notice that the $\Str$-morphism
$g_1\circ\cdots\circ g_{n+n^\prime}\circ{f_1}_\ast : f_1^\ast S_{n+n^\prime}\to S_0$ projects down via $\pi$
to the $\Loc$-morphism $f_1: M_1\to M$ and that 
$g_1\circ \cdots\circ g_{n^\prime}\circ {f_2}_\ast : f_2^\ast S_{n^\prime}\to S_0$ projects down to
$f_2: M_2\to M$. By hypothesis, $f_1$ and $f_2$ are causally disjoint and $\AA : \Str\to \Alg$ satisfies the causality axiom, 
hence we can commute the two factors in the last step of \eqref{eqn:tmpcausal2} 
and thereby show that \eqref{eqn:tmpcausal} is equal to \eqref{eqn:tmpcausal2}. 
\end{proof}

With these preparations we can now verify the `up to homotopy' causality axiom.
\begin{theo}\label{theo:homotopicalcausality}
Let $\pi: \Str\to \Loc$ be a category fibered in groupoids
and $\AA :\Str \to\Alg$ a quantum field theory in the sense of Definition \ref{def:QFT}.
Then the homotopy right Kan extension $\hoU\AA : \Loc\to \dgAlg$ (cf.\ \eqref{eqn:hoUpiAAfunctor}) 
satisfies the causality axiom `up to homotopy'. Explicitly, given any causally disjoint $\Loc$-diagram
$M_1\stackrel{f_1}{\longrightarrow} M \stackrel{f_2}{\longleftarrow} M_2$, 
a cochain homotopy between the induced 
$\bbZ_{\geq 0}$-graded commutator and zero \eqref{eqn:uptohomotopycommutator}  is given by
\begin{flalign}
\lambda_{f_1,f_2}^{}  := \Big(\mu^\op\circ  \big(\rho_M^{}\otimes\beta_M^{}+ \beta_M^{}\otimes\id_{\hoU\AA(M)}\big) -\beta_{M}^{}\circ \mu \Big)\circ \big(\hoU\AA(f_1)\otimes\hoU\AA(f_2)\big)~,
\end{flalign}
where $\beta_M^{}$ is defined in \eqref{eqn:reversehomotopy} and $\mu^{(\op)}$ is the (opposite) 
product on $\hoU\AA(M)$.
\end{theo}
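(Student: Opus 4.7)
The plan is to verify the proposed $\lambda_{f_1,f_2}^{}$ by direct computation, showing that $\dd \circ \lambda_{f_1,f_2}^{} + \lambda_{f_1,f_2}^{} \circ \dd$ equals $(\mu - \mu^\op) \circ \Phi$, where I write $\Phi := \hoU\AA(f_1) \otimes \hoU\AA(f_2)$ for brevity and note that the $\bbZ_{\geq 0}$-graded commutator is by definition $\mu - \mu^\op$. The two essential ingredients are Lemma \ref{lem:reversehomotopy}, in the form $\rho_M^{} - \id = \dd \circ \beta_M^{} + \beta_M^{} \circ \dd$, and Lemma \ref{lem:mumuop}, in the form $\mu^\op \circ (\rho_M^{} \otimes \rho_M^{}) \circ \Phi = \rho_M^{} \circ \mu \circ \Phi$.

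I would first record that $\mu$, $\mu^\op$, $\Phi$ and $\rho_M^{}$ are all chain maps of degree zero (the chain-map property of $\rho_M^{}$ being immediate from Lemma \ref{lem:reversehomotopy}, as any operator of the form $\id + [\dd,\beta_M^{}]$ commutes with $\dd$), while $\beta_M^{}$ has degree $-1$. This allows free use of the graded Leibniz rule for the derived bracket $[\dd,-]$, which in particular vanishes on each of the four chain maps just listed.

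Next I would split the computation according to the two summands of $\lambda_{f_1,f_2}^{}$. For the piece $-\beta_M^{} \circ \mu \circ \Phi$, moving $[\dd,-]$ through the chain map $\mu \circ \Phi$ and applying Lemma \ref{lem:reversehomotopy} gives $(\id - \rho_M^{}) \circ \mu \circ \Phi$. For the piece $\mu^\op \circ (\rho_M^{} \otimes \beta_M^{} + \beta_M^{} \otimes \id) \circ \Phi$, pulling $[\dd,-]$ through the chain maps $\mu^\op$ and $\Phi$ reduces the task to computing $[\dd_\otimes,\,\rho_M^{} \otimes \beta_M^{} + \beta_M^{} \otimes \id]$ on the tensor product complex. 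A careful but routine application of the graded Leibniz rule, using $[\dd,\rho_M^{}] = 0$ and $[\dd,\beta_M^{}] = \rho_M^{} - \id$, yields $[\dd_\otimes,\,\rho_M^{} \otimes \beta_M^{}] = \rho_M^{} \otimes (\rho_M^{} - \id)$ and $[\dd_\otimes,\,\beta_M^{} \otimes \id] = (\rho_M^{} - \id) \otimes \id$, whose sum telescopes to $\rho_M^{} \otimes \rho_M^{} - \id \otimes \id$. This second piece therefore contributes $\mu^\op \circ (\rho_M^{} \otimes \rho_M^{}) \circ \Phi - \mu^\op \circ \Phi$, and Lemma \ref{lem:mumuop} rewrites the first summand as $\rho_M^{} \circ \mu \circ \Phi$.

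Summing the two contributions, the $\pm\, \rho_M^{} \circ \mu \circ \Phi$ terms cancel and what remains is $\mu \circ \Phi - \mu^\op \circ \Phi = [\,\cdot\,,\,\cdot\,] \circ \Phi$, as desired. The main obstacle I anticipate is purely bookkeeping: keeping the Koszul signs straight in the graded Leibniz rule for $[\dd_\otimes,-]$ on tensor products of mixed-degree operators, together with the convention $(f \otimes g)(a \otimes b) = (-1)^{|g||a|} f(a) \otimes g(b)$, and taking due care of the normalization conditions built into the explicit model for the homotopy limit. The conceptual content of the argument is carried entirely by Lemmas \ref{lem:reversehomotopy} and \ref{lem:mumuop}, which respectively package the homotopical triviality of reversal and the obstruction to strict commutativity induced by causal disjointness.
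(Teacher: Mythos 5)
Your proposal is correct and follows essentially the same route as the paper's proof: both rest on the identity $[\,\cdot\,,\,\cdot\,]=\mu-\mu^\op$, on Lemma \ref{lem:reversehomotopy} applied once to $\beta_M^{}\circ\mu$ and once (via the telescoping $\rho_M^{}\otimes\rho_M^{}-\id\otimes\id=\rho_M^{}\otimes(\rho_M^{}-\id)+(\rho_M^{}-\id)\otimes\id$, with the same Koszul-sign convention for $\rho_M^{}\otimes\beta_M^{}$) to produce the homotopy $\rho_M^{}\otimes\beta_M^{}+\beta_M^{}\otimes\id$, and on Lemma \ref{lem:mumuop} to convert $\mu^\op\circ(\rho_M^{}\otimes\rho_M^{})\circ\Phi$ into $\rho_M^{}\circ\mu\circ\Phi$ so that these terms cancel. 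The only difference is cosmetic: you verify $[\dd,\lambda_{f_1,f_2}^{}]=[\,\cdot\,,\,\cdot\,]\circ\Phi$ directly, whereas the paper starts from the commutator and rewrites it, but the computation is identical.
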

\begin{proof}
Notice that the $\bbZ_{\geq 0}$-graded commutator 
is the difference between the product and the opposite product,
i.e.\ $[\,\cdot\,,\,\cdot\,] = \mu - \mu^\op$. By Lemma \ref{lem:reversehomotopy}, we can
replace $\mu$ in this expression by
\begin{flalign}
\mu = \rho_M^{}\circ \mu - \beta_M^{}\circ \mu\circ\dd - \dd\circ \beta_M^{}\circ \mu~.
\end{flalign}
Composing $[\,\cdot\,,\,\cdot\,] $ from the right with $L := \hoU\AA(f_1)\otimes\hoU\AA(f_2)$
and using Lemma \ref{lem:mumuop}, we obtain
\begin{flalign}
\nn [\,\cdot\,,\,\cdot\,] \circ L
&=  \big(\rho_M^{}\circ \mu - \mu^\op \big)\circ L- \beta_M^{}\circ\mu\circ L \circ \dd 
- \dd\circ \beta_M^{}\circ \mu\circ L\\
&= \mu^\op\circ \big(\rho_M^{}\otimes\rho_M^{}-\id\otimes\id\big)\circ L - \beta_M^{}\circ\mu\circ L \circ \dd 
- \dd\circ \beta_M^{}\circ \mu\circ L~.\label{eqn:tmpcausalhom}
\end{flalign}
Using Lemma \ref{lem:reversehomotopy}, we obtain that 
$\rho_M^{}\otimes\rho_M^{}-\id\otimes\id$ is cochain homotopic to zero, i.e.\
\begin{flalign}
\nn \rho_M^{}\otimes\rho_M^{}-\id\otimes\id  &= \rho_M^{}\otimes(\rho_M^{}-\id) + (\rho_M^{}-\id)\otimes\id\\
\nn &=(\rho_M^{}\otimes\beta_M^{}) \circ\dd + \dd\circ (\rho_M^{}\otimes\beta_M^{}) + (\beta_M^{}\otimes\id)\circ\dd
+ \dd\circ (\beta_M^{}\otimes\id)\\
&=\big(\rho_M^{}\otimes\beta_M^{}+ \beta_M^{}\otimes\id\big)\circ \dd + \dd\circ \big(\rho_M^{}\otimes\beta_M^{}+ \beta_M^{}\otimes\id\big)~.\label{eqn:tmpcausalhom2}
\end{flalign}
In this expression $\rho_M^{}\otimes\beta_M^{}$ and $\beta_M^{}\otimes\id$ are understood 
in terms of the tensor product of internal homomorphisms in $\dgVec$. Explicitly,
for $a\in\hoU\AA(M)^n$ and $a^\prime\in\hoU\AA(M)^{n^\prime}$, we have that
$\rho_M^{}\otimes\beta_M^{}(a\otimes a^\prime) = (-1)^{n}\, \rho_M^{}(a)\otimes \beta_M^{}(a^\prime)$
and $\beta_M^{}\otimes\id(a\otimes a^\prime) = \beta_M^{}(a)\otimes a^\prime$.
(These signs are crucial for verifying \eqref{eqn:tmpcausalhom2}.)
Inserting \eqref{eqn:tmpcausalhom2} into \eqref{eqn:tmpcausalhom} completes the proof.
\end{proof}

We next focus on the time-slice axiom. In the following we will always assume the
category fibered in groupoids $\pi:\Str\to\Loc$ to be Cauchy flabby, see Definition
\ref{def:flabby}. Let $f: M\to M^\prime$ be any Cauchy $\Loc$-morphism.
Due to Cauchy flabbiness, we may define an extension map 
\begin{subequations}
\begin{flalign}
\ext_f^{} : \pi^{-1}(M)_0 \longrightarrow \pi^{-1}(M^\prime)_0 ~,~~S\longmapsto \ext_f^{}S~,
\end{flalign}
such that for all $S\in \pi^{-1}(M)_0$ there exists a $\Str$-morphism
\begin{flalign}
f_{\sharp} : S\longrightarrow \ext_f^{}S~
\end{flalign}
\end{subequations}
with the property $\pi(f_{\sharp})=f:M\to M^\prime$.
\begin{lem}\label{lem:extensioncauchyflab}
Let $\pi:\Str\to\Loc$ be a Cauchy flabby category fibered in groupoids. Then:
\begin{itemize}
\item[(i)] For all Cauchy $\Loc$-morphisms $f:M\to M^\prime$ and 
all $S\in\pi^{-1}(M)_0$, there exists a unique $\pi^{-1}(M)$-morphism
$g_{(S,f)}^{} : S\to f^\ast\ext_f^{} S$ such that the diagram
\begin{flalign}
\xymatrix{
\ar[dr]_-{f_{\sharp}}S\ar[rr]^-{g_{(S,f)}^{}} && f^\ast\ext_f^{}S\ar[dl]^-{f_\ast}\\
&\ext_f^{} S&
}
\end{flalign}
in $\Str$ commutes.

\item[(ii)] For all Cauchy $\Loc$-morphisms $f:M\to M^\prime$ and 
all $S^\prime \in \pi^{-1}(M^\prime)_0$, there exists a (not necessarily unique) 
$\pi^{-1}(M^\prime)$-morphism $g^\prime_{(S^\prime,f)} : S^\prime\to \ext_f^{} f^\ast S^\prime$ 
such that the diagram
\begin{flalign}
\xymatrix{
S^\prime \ar[rr]^-{g^\prime_{(S^\prime,f)}} && \ext_f^{}f^\ast S^\prime\\
&\ar[ul]^-{f_\ast} f^\ast S^\prime \ar[ur]_-{f_\sharp}&
}
\end{flalign}
in $\Str$ commutes.
\end{itemize}
\end{lem}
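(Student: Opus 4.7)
Both parts of the lemma should follow almost immediately from the two structures already isolated: the universal property of cartesian morphisms in Definition \ref{def:cartesian} handles (i), while the second condition in the definition of Cauchy flabbiness (Definition \ref{def:flabby}) handles (ii). The idea is to recognise the diagrams in (i) and (ii) as, respectively, a pullback problem and a comparison between two extensions along the same Cauchy morphism.

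For part (i), my plan is to unpack the data. The extension morphism $f_\sharp : S \to \ext_f^{} S$ satisfies $\pi(f_\sharp) = f$, and the chosen cleavage cartesian morphism $f_\ast : f^\ast\ext_f^{} S \to \ext_f^{} S$ also satisfies $\pi(f_\ast) = f$. Thus the $\Loc$-diagram
\begin{flalign*}
\xymatrix{
\ar[dr]_-{\id_M}\pi(S)=M \ar[rr]^-{\pi(f_\sharp)=f} && \pi(\ext_f^{}S)=M^\prime\\
&\pi(f^\ast\ext_f^{}S)=M\ar[ru]_-{\pi(f_\ast)=f}&
}
\end{flalign*}
commutes, and $f_\ast$ is cartesian. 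Definition \ref{def:cartesian} then produces a unique $\Str$-morphism $g_{(S,f)}^{} : S \to f^\ast\ext_f^{} S$ with $\pi(g_{(S,f)}^{})=\id_M$ and $f_\ast\circ g_{(S,f)}^{}=f_\sharp$. The condition $\pi(g_{(S,f)}^{})=\id_M$ is precisely the statement that $g_{(S,f)}^{}$ is a $\pi^{-1}(M)$-morphism.

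For part (ii), my plan is to observe that we are in exactly the situation addressed by the second clause of Cauchy flabbiness. Starting from $S^\prime$ in $\pi^{-1}(M^\prime)$ and choosing a pullback $f_\ast : f^\ast S^\prime \to S^\prime$ together with an extension $f_\sharp : f^\ast S^\prime \to \ext_f^{} f^\ast S^\prime$, we have two $\Str$-morphisms out of the common source $f^\ast S^\prime$, both projecting via $\pi$ to the Cauchy $\Loc$-morphism $f : M \to M^\prime$. Applying Definition \ref{def:flabby}(b), clause~2, to these two extensions yields a $\pi^{-1}(M^\prime)$-morphism $g^\prime_{(S^\prime,f)} : S^\prime \to \ext_f^{} f^\ast S^\prime$ fitting into the required commutative triangle $g^\prime_{(S^\prime,f)}\circ f_\ast = f_\sharp$.

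There is essentially no obstacle to overcome; both parts are immediate applications of the axioms, and it is worth noting only that uniqueness is asserted in (i) (via the universality of cartesian morphisms) but not in (ii) (since Cauchy flabbiness guarantees only existence of the comparison morphism). This asymmetry is expected and is already reflected in the statement of the lemma.
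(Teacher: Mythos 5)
Your proposal is correct and follows exactly the same route as the paper: part (i) is the universal property of the cartesian morphism $f_\ast : f^\ast\ext_f^{}S\to\ext_f^{}S$ applied to $f_\sharp$ over the factorization $f = f\circ\id_M$, and part (ii) is the second clause of Cauchy flabbiness applied to the two extensions $f_\ast$ and $f_\sharp$ of $f^\ast S^\prime$ along $f$. The only difference is that you spell out the $\Loc$-level diagram explicitly, which the paper leaves implicit.
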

\begin{proof}
Item (i) is a direct consequence of $f_\ast : f^\ast \ext_f^{} S \to \ext_f^{} S$ being cartesian,
cf.\ Definition \ref{def:cartesian}. Item (ii) follows from Cauchy flabbiness (cf.\ Definition \ref{def:flabby})
and the fact that $\pi(f_\ast) = \pi(f_\sharp) =f:M\to M^\prime$.
\end{proof}

Given any $\pi^{-1}(M)$-morphism $g : S\to\widetilde{S}$,
Cauchy flabbiness ensures existence of a
$\pi^{-1}(M^\prime)$-morphism $\ext_f^{} g : \ext_f^{} S \to \ext_f^{} \widetilde{S}$
such that the diagram
\begin{flalign}\label{eqn:extmorphism}
\xymatrix{
\ext_f^{} S\ar[rr]^-{\ext_f^{} g } && \ext_f^{} \widetilde{S}\\
\ar[u]^-{f_{\sharp}}S \ar[rr]_-{g} && \widetilde{S} \ar[u]_-{f_\sharp}
}
\end{flalign}
in $\Str$ commutes. (This follows from noticing that due to  $\pi(f_\sharp) = \pi(f_\sharp \circ g) =f:M\to M^\prime$
both  $f_\sharp : S\to\ext_f^{}S$ and $f_\sharp \circ g: S\to \ext_f^{}\widetilde{S}$ are extensions
of $S$ to $M^\prime$ along $f$.) However, the extended morphisms
$\ext_f^{} g$ in \eqref{eqn:extmorphism} in general will not be uniquely defined
by this diagram. For proving a homotopical generalization of the time-slice axiom,
we introduce a stronger version of the Cauchy flabbiness condition ensuring that the extended
morphisms are uniquely defined. (Examples are provided at the end of this section.)
\begin{defi}\label{def:strongCauchyflabby}
A category fibered in groupoids $\pi:\Str\to\Loc$ is called {\em strongly Cauchy flabby}
if it is Cauchy flabby (cf.\ Definition \ref{def:flabby}) and the $\pi^{-1}(M^\prime)$-morphisms
$g^\prime : S^\prime \to \widetilde{S^\prime}$ in \eqref{eqn:Cauchyflabbydiagram}
are uniquely specified by this diagram, for all Cauchy $\Loc$-morphisms
$f:M\to M^\prime$ and all $\Str$-morphisms $g: S\to S^\prime$
and $\widetilde{g}: S\to\widetilde{S^\prime}$ with the property
$\pi(g) = \pi(\widetilde{g}) = f:M\to M^\prime$.
\end{defi}
An immediate consequence of this definition is
\begin{cor}\label{cor:strongCauchy}
Let  $\pi:\Str\to\Loc$ be a strongly Cauchy flabby category fibered in groupoids.
Then the $\pi^{-1}(M^\prime)$-morphism $\ext_f^{} g : \ext_f^{} S \to \ext_f^{} \widetilde{S}$ 
is uniquely specified by \eqref{eqn:extmorphism}, for all Cauchy $\Loc$-morphisms
$f:M\to M^\prime$ and all $\pi^{-1}(M)$-morphisms $g : S\to\widetilde{S}$.
As a consequence, we obtain
\begin{flalign}
\ext_f^{} \id_S = \id_{\ext_f^{} S} \quad,\qquad \ext_f^{} g_1 \circ \ext_f^{} g_2 = \ext_f^{}(g_1\circ g_2)~,
\end{flalign}
for all objects $S$ in $\pi^{-1}(M)$ and all composable $2$-arrows
$S_0 \stackrel{g_1}{\longleftarrow} S_1 \stackrel{g_2}{\longleftarrow} S_2$
in $\pi^{-1}(M)$.
\end{cor}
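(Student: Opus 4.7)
The plan is to deduce everything from the uniqueness clause in Definition \ref{def:strongCauchyflabby} applied to the diagram \eqref{eqn:extmorphism}. First I would observe that diagram \eqref{eqn:extmorphism} is an instance of the configuration in \eqref{eqn:Cauchyflabbydiagram}: the two $\Str$-morphisms out of the common source $S$ are $f_\sharp : S \to \ext_f^{} S$ and $f_\sharp \circ g : S \to \ext_f^{} \widetilde{S}$, and both project via $\pi$ to the Cauchy $\Loc$-morphism $f: M\to M^\prime$ (the second one because $\pi(g)=\id_M$). Hence strong Cauchy flabbiness guarantees the existence of a unique $\pi^{-1}(M^\prime)$-morphism $\ext_f^{}g : \ext_f^{} S \to \ext_f^{} \widetilde{S}$ with $\ext_f^{}g \circ f_\sharp = f_\sharp \circ g$, which is precisely the content of the first claim.

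Next, for the preservation of identities, I would note that $\id_{\ext_f^{} S}$ trivially satisfies $\id_{\ext_f^{} S} \circ f_\sharp = f_\sharp = f_\sharp \circ \id_S$, i.e.\ it closes the diagram \eqref{eqn:extmorphism} corresponding to $g=\id_S$. By the uniqueness just established, $\ext_f^{}\id_S = \id_{\ext_f^{} S}$.

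For compatibility with composition, given a composable $2$-arrow $S_0 \stackrel{g_1}{\longleftarrow} S_1 \stackrel{g_2}{\longleftarrow} S_2$ in $\pi^{-1}(M)$, I would check that $\ext_f^{}g_1 \circ \ext_f^{}g_2 : \ext_f^{} S_2 \to \ext_f^{} S_0$ closes the diagram \eqref{eqn:extmorphism} for the composite $g_1\circ g_2$. This is a direct computation using the defining equations $\ext_f^{}g_i \circ f_\sharp = f_\sharp \circ g_i$ ($i=1,2$):
\begin{flalign*}
(\ext_f^{}g_1 \circ \ext_f^{}g_2) \circ f_\sharp
= \ext_f^{}g_1 \circ (f_\sharp \circ g_2)
= (f_\sharp \circ g_1) \circ g_2
= f_\sharp \circ (g_1 \circ g_2)~.
\end{flalign*}
Since $\ext_f^{}(g_1\circ g_2)$ is, by the first part of the corollary, the unique $\pi^{-1}(M^\prime)$-morphism with this property, we conclude $\ext_f^{}g_1 \circ \ext_f^{}g_2 = \ext_f^{}(g_1\circ g_2)$.

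There is no real obstacle here: the corollary is essentially a formal consequence of the uniqueness clause in Definition \ref{def:strongCauchyflabby}, and the only thing to verify is that identities and compositions satisfy the universal diagram selecting $\ext_f^{}$ on morphisms. The slight subtlety is keeping track of the direction of the arrows in $\pi^{-1}(M)$ (the $g_i$ point opposite to the $n$-arrow notation $S_0 \leftarrow S_1 \leftarrow S_2$), but this does not affect the argument.
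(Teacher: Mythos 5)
Your proposal is correct and follows exactly the route the paper intends: the paper states the corollary as "an immediate consequence" of Definition \ref{def:strongCauchyflabby}, and your write-up simply makes explicit the observation (already noted in the text around \eqref{eqn:extmorphism}) that $f_\sharp$ and $f_\sharp\circ g$ are both extensions of $S$ along $f$, so uniqueness applies, after which identities and composition follow by checking that the obvious candidates close the diagram. No gaps.
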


We assume from now on the strong Cauchy flabbiness condition
of Definition \ref{def:strongCauchyflabby}. Given any Cauchy $\Loc$-morphism 
$f:M\to M^\prime$, Corollary \ref{cor:strongCauchy} allows us to define a $\dgAlg$-morphism
\begin{subequations}\label{eqn:extfast}
\begin{flalign}
\ext_f^\ast : \hoU\AA(M^\prime) \longrightarrow \hoU\AA(M)
\end{flalign}
going in the opposite direction of $\hoU\AA(f) : \hoU\AA(M)\to \hoU\AA(M^\prime)$ (cf.\ \eqref{eqn:hoUpiAAmorphexplicit}).
Explicitly, we set
\begin{flalign}
\big(\ext_f^\ast(a^\prime)\big)(S):= \AA(f_\sharp)^{-1}\big(a^\prime(\ext_f^{} S)\big)~,
\end{flalign}
for all $a^\prime\in\hoU\AA(M^\prime)^0 $ and $S\in\pi^{-1}(M)_0$, and
\begin{flalign}
\big(\ext_f^\ast(a^\prime)\big)(g_1,\dots,g_n) := \AA(f_\sharp)^{-1}\big(a^\prime(\ext_f^{} g_1,\dots,\ext_f^{} g_n)\big)~,
\end{flalign} 
\end{subequations}
for all $n\in \bbZ_{\geq 1}$, $a^\prime\in \hoU\AA(M^\prime)^n$ and $(g_1,\dots,g_n)\in\pi^{-1}(M)_n$.
(Here we also have used that $\AA : \Str\to\Alg$ satisfies the time-slice axiom in the sense of 
Definition \ref{def:QFT} in order to define the inverse $\AA(f_\sharp)^{-1}$.)
Using similar techniques as in \eqref{eqn:zetakappahomotopy}, we find that both 
compositions $\ext_f^{\ast}\circ\hoU\AA(f)$ and $\hoU\AA(f)\circ \ext_f^{\ast}$ are cochain homotopic
to the identity, i.e.\
\begin{subequations}\label{eqn:timeslicehomotopy}
\begin{flalign}
\ext_f^{\ast}\circ\hoU\AA(f) - \id_{\hoU\AA(M)} &= \dd\circ \phi_{f} + \phi_{f}\circ \dd~,\\
\hoU\AA(f)\circ \ext_f^{\ast} - \id_{\hoU\AA(M^\prime)} &= \dd\circ \bar \phi_{f} + \bar \phi_{f}\circ \dd~.
\end{flalign}
The cochain homotopies are obtained by using Lemma \ref{lem:extensioncauchyflab}. 
(Notice that for strongly Cauchy flabby $\pi:\Str \to \Loc$ 
the morphisms $g^\prime_{S^\prime,f}$ in Lemma \ref{lem:extensioncauchyflab} (ii) are unique.)
Explicitly, the first cochain homotopy is given by
\begin{flalign}
\big(\phi^{1}_f(a)\big)(S):= a(g_{(S,f)}^{-1})~,
\end{flalign}
for all $a\in \hoU\AA(M)^1$ and $S\in\pi^{-1}(M)_0$, and
\begin{flalign}
\big(\phi^{n+1}_f(a)\big)(g_1,\dots,g_n):= \sum_{i=0}^n (-1)^i~a\big(g_1,\dots, g_i, g_{(S_i,f)}^{-1}, (\ext_f^{}g_{i+1})^f,
\dots, (\ext_f^{} g_{n})^f\big)~,
\end{flalign}
for all $n\in\bbZ_{\geq 1}$, $a\in \hoU\AA(M)^{n+1}$ and $(g_1,\dots,g_n)\in\pi^{-1}(M)_n$. (The arrows
$(\ext_f^{}g)^f$ are defined as in \eqref{eqn:pullbackmorphallsame}.)
The second cochain homotopy explicitly reads as
\begin{flalign}
\big(\bar \phi_f^1(a^\prime)\big)(S^\prime):= a^\prime( g^{\prime\,-1}_{(S^\prime,f)})~,
\end{flalign}
for all $a^\prime\in \hoU\AA(M^\prime)^1$ and $S^\prime\in\pi^{-1}(M^\prime)_0$, and
\begin{flalign}
\big(\bar \phi_f^{n+1}(a^\prime)\big)(g_1^\prime,\dots,g_{n}^\prime) := \sum_{i=1}^n
(-1)^i~a^\prime\big(g_1^\prime,\dots,g_i^{\prime}, g_{(S_i^\prime,f)}^{\prime\,-1}, \ext_f^{}(g_{i+1}^{\prime\,f}),
\dots,\ext_f^{}(g_n^{\prime\,f})\big)~,
\end{flalign}
\end{subequations}
for all $n\in\bbZ_{\geq 1}$, 
$a^\prime\in \hoU\AA(M^\prime)^{n+1}$ and $(g_1^\prime,\dots,g_n^{\prime})\in\pi^{-1}(M^\prime)_n$.
This proves the following
\begin{theo}\label{theo:homotopicaltimeslice}
Let $\pi: \Str\to \Loc$ be a strongly Cauchy flabby category fibered in groupoids
and $\AA :\Str \to\Alg$ a quantum field theory in the sense of Definition \ref{def:QFT}.
Then the homotopy right Kan extension $\hoU\AA : \Loc\to \dgAlg$ (cf.\ \eqref{eqn:hoUpiAAfunctor}) 
satisfies the time-slice axiom `up to homotopy'. Explicitly, for any Cauchy $\Loc$-morphism
$f: M\to M^\prime$, the $\dgAlg$-morphism $\hoU\AA(f) : \hoU\AA(M)\to\AA(M^\prime)$
is inverted by the $\dgAlg$-morphism \eqref{eqn:extfast} up to the cochain homotopies in \eqref{eqn:timeslicehomotopy}.
\end{theo}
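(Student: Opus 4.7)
The plan is to proceed in three stages: first verify that $\ext_f^\ast$ of \eqref{eqn:extfast} is a well-defined $\dgAlg$-morphism, then establish each of the two cochain homotopies in \eqref{eqn:timeslicehomotopy} by direct computation. The inverse $\AA(f_\sharp)^{-1}$ in the definition exists because $\pi(f_\sharp)=f$ is a Cauchy $\Loc$-morphism and $\AA$ satisfies the strict time-slice axiom of Definition \ref{def:QFT}. Functoriality of $\AA$, together with the identities $\ext_f\,\id_S=\id_{\ext_f S}$ and $\ext_f(g_1\circ g_2)=\ext_f g_1\circ \ext_f g_2$ from Corollary \ref{cor:strongCauchy}, then makes $\ext_f^\ast$ compatible with the differential and with the normalization conditions; preservation of the unit is immediate. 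For preservation of the product one uses the commuting square \eqref{eqn:extmorphism}: it gives the key identity $\AA(f_\sharp^{S_0})^{-1}\circ \AA(\ext_f g)=\AA(g)\circ \AA(f_\sharp^{S_1})^{-1}$ for every $\pi^{-1}(M)$-morphism $g:S_1\to S_0$, which is exactly what is needed to commute $\AA(f_\sharp)^{-1}$ past the middle factor $\AA(\ext_f g_1\circ\cdots\circ\ext_f g_n)=\AA(\ext_f(g_1\circ\cdots\circ g_n))$ in the product formula \eqref{eqn:holimproduct}.

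For the first homotopy relation I would verify the identity degree by degree. In degree $0$ the decomposition $f_\sharp=f_\ast\circ g_{(S,f)}$ from Lemma \ref{lem:extensioncauchyflab}(i) gives $(\ext_f^\ast\circ\hoU\AA(f))(a)(S)=\AA(g_{(S,f)}^{-1})(a(f^\ast\ext_f S))$, and its difference with $a(S)$ is exactly the coboundary $\dd a(g_{(S,f)}^{-1})=(\phi_f\circ\dd)(a)(S)$, matching the claim since $\phi_f$ vanishes on degree zero by convention. In degree $n\geq 1$, expanding the right-hand side of the homotopy identity using the definitions of $\dd$ and $\phi_f^{n+1}$ produces a telescoping alternating sum in which, for each intermediate index $i\in\{1,\dots,n\}$, the contribution from $\dd\circ\phi_f$ cancels against the pair of contributions from $\phi_f\circ\dd$ by the usual simplicial identities. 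Only the two boundary terms survive: the leading one, involving $\AA(g_{(S_0,f)}^{-1})$, reproduces $(\ext_f^\ast\circ\hoU\AA(f))(a)(g_1,\dots,g_n)$ via the same decomposition $f_\sharp=f_\ast\circ g_{(S_0,f)}$ used in degree $0$, and the trailing one reproduces $-a(g_1,\dots,g_n)$.

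The second homotopy relation is established in the same manner, with Lemma \ref{lem:extensioncauchyflab}(ii) replacing (i); the uniqueness of $g^\prime_{(S^\prime,f)}$ provided by strong Cauchy flabbiness ensures that $\bar\phi_f$ is unambiguously defined and, through Corollary \ref{cor:strongCauchy}, that the analogue of the cancellation argument goes through. The main technical obstacle throughout will be the sign bookkeeping in the alternating sums and tracking the cleavage-dependent pullback arrows $(\ext_f g_i)^f$ and $\ext_f(g_i^{\prime\,f})$ through the formulas; these manipulations are lengthy but entirely routine, following the same pattern that underlies \eqref{eqn:zetakappahomotopy}. The conceptual content sits in stage (1): it is strong Cauchy flabbiness that promotes $\ext_f$ to a genuine functor $\pi^{-1}(M^\prime)\to\pi^{-1}(M)$ on morphisms and thereby turns the degree-wise formula \eqref{eqn:extfast} into a morphism of differential graded algebras rather than merely of graded vector spaces.
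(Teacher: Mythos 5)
Your proposal is correct and takes essentially the same route as the paper, which likewise introduces $\ext_f^\ast$ via the unique extensions guaranteed by strong Cauchy flabbiness (Corollary \ref{cor:strongCauchy} and Lemma \ref{lem:extensioncauchyflab}) and then exhibits the explicit cochain homotopies $\phi_f$ and $\bar\phi_f$ of \eqref{eqn:timeslicehomotopy}, leaving the degree-by-degree telescoping verification as a computation "similar to \eqref{eqn:zetakappahomotopy}". Your additional detail on why $\ext_f^\ast$ preserves products and differentials (via the intertwining identity coming from \eqref{eqn:extmorphism} and the functoriality of $\ext_f$) fills in a step the paper leaves implicit, but does not change the argument.
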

\begin{rem}
In analogy to Remarks \ref{rem:uptohomotopyfunctor} and \ref{rem:uptohomotopycommutativity},
a natural refinement of the time-slice axiom `up to homotopy' would be to promote also
the cochain homotopies \eqref{eqn:timeslicehomotopy} and their coherences to the data defining
a homotopical quantum field theory. Again, the systematic way to address this aspect is to use colored operads.
One of the main uses of the time-slice axiom in locally covariant quantum field theory
is to define the relative Cauchy evolution \cite{Brunetti,FewsterVerch}.
Notice that already our present non-operadic framework allows us to define a notion of relative Cauchy evolution 
for homotopical quantum field theories because we can invert up to homotopy 
all $\dgAlg$-morphisms $\hoU\AA(f) : \hoU\AA(M)\to\AA(M^\prime)$ corresponding to Cauchy $\Loc$-morphisms.
The homotopical relative Cauchy evolutions are then $\dgAlg$-endomorphism of $\hoU\AA(M)$ 
that are invertible `up to homotopy'. In particular, on the level of cohomologies
$H^\bullet(\hoU\AA(M))$ we obtain strict automorphism of graded algebras.
\end{rem}

We conclude this section by providing some examples of strongly 
Cauchy flabby categories fibered in groupoids.
\begin{ex}
Let $\pi : \Str\to \Loc$ be a Cauchy flabby category fibered in groupoids
such that for all objects $M$ in $\Loc$ the groupoid $\pi^{-1}(M)$ is discrete (i.e.\ the only morphisms are identities).
Then $\pi:\Str\to\Loc$ is also a strongly Cauchy flabby category fibered in groupoids.
An example of this situation is given by $\pi: \mathsf{LocSrc}_{D}\to\Loc$, see Example \ref{ex:source2}.
\end{ex}

\begin{ex}
Recall from Example \ref{ex:spin2} that the category fibered in groupoids
$\pi:\mathsf{SLoc}\to\Loc$ which describes spin structures is Cauchy flabby. 
It is also strongly Cauchy flabby: The extension $s^\prime$ of $s$
constructed in Example \ref{ex:spin2} is unique,
because both $s$ and $s^\prime$ are $\bbZ_2$-valued functions (hence locally constant)
and the image $f(M)$ is homotopic to $M^\prime$ for any Cauchy $\Loc$-morphism $f:M\to M^\prime$.
\end{ex}

\begin{ex}
Under the PDE-theoretic assumptions detailed in Example \ref{ex:gaugecon2},
the category fibered in groupoids 
$\pi: \mathsf{B}G^{\mathrm{con}}\Loc_{\mathrm{YM}}^{} \to\Loc$ is strongly Cauchy flabby.
The assumptions are in particular satisfied for $G=U(1)$ and any spacetime dimension $m$,
implying that $\pi: \mathsf{B}U(1)^{\mathrm{con}}\Loc_{\mathrm{YM}}^{} \to \Loc$
is strongly Cauchy flabby.
\end{ex}

\begin{ex}
The category fibered in groupoids $\pi: \Loc \times G \to\Loc$ (cf.\ Examples \ref{ex:globalgauge} and \ref{ex:globalgauge2})
corresponding to a ``global gauge group'' $G$  is strongly Cauchy flabby.
\end{ex}

\section*{Acknowledgments}
We would like to thank Ulrich Bunke and Urs Schreiber for useful comments.
We also would like to thank the anonymous referees for
their comments and suggestions which helped us to clarify and improve parts of
the paper. The work of M.B.\ is supported by a Postdoctoral Fellowship 
of the Alexander von Humboldt Foundation (Germany). 
A.S.\ gratefully acknowledges the financial support of 
the Royal Society (UK) through a Royal Society University 
Research Fellowship.

\appendix

\section{\label{app:dgAlg}Differential graded vector spaces and algebras}
\subsection{Basics}
We briefly recall some basic notions of differential graded vector spaces (i.e.\ cochain complexes of vector spaces)
and differential graded algebras. We denote the underlying field by $\bbK$. 

\begin{defi}
A differential graded vector space (in non-negative degrees) is a pair $(V^\bullet,\dd_V^{})$, where
$V^\bullet = \{V^n\}_{n\in\bbZ_{\geq 0}}$ is a family of vector spaces 
and $\dd_{V}^{} = \{\dd_V^n : V^n \to V^{n+1}\}_{n\in\bbZ_{\geq 0}}$ 
is a family of linear maps satisfying $\dd_V^{n+1} \circ \dd_V^{n} =0$, for all $n\in\bbZ_{\geq 0}$.
A morphism $L : (V^\bullet,\dd_{V}^{}) \to (W^\bullet,\dd_{W}^{})$ of differential graded vector spaces
is a family of linear maps $L = \{L^n : V^n \to W^n\}_{n\in\bbZ_{\geq 0}}$ satisfying
$\dd_{W}^{n}\circ L^n= L^{n+1}\circ \dd_V^{n}$, for all $n\in\bbZ_{\geq 0}$.
We denote by $\dgVec$ the category of differential graded vector spaces (in non-negative degrees).
\end{defi}
\begin{rem}
In order to simplify notations, we shall denote objects in $\dgVec$ simply by symbols
like $V^\bullet$ suppressing the differentials $\dd_V^{} : V^\bullet \to V^{\bullet+1}$ from the notation. 
Moreover, we shall denote all differentials simply by $\dd$ (without subscript and superscript)
as it will be clear from the context on which graded vector space and degree they act. 
\end{rem}

Recall that the category $\dgVec$ is monoidal: The tensor product of two objects 
$V^\bullet$ and $W^\bullet$ in $\dgVec$ is given by 
\begin{subequations}
\begin{flalign}
V^\bullet \otimes W^\bullet := \Big\{\bigoplus_{i+j=n} V^i\otimes W^j\Big\}_{n\in\bbZ_{\geq 0}}~,
\end{flalign}
together with the differential specified by
\begin{flalign}
\dd (v \otimes w) = \dd v \otimes w + (-1)^{i} \, v\otimes\dd w~,
\end{flalign}
\end{subequations}
for all $v\in V^i$ and $w\in W^j$. The monoidal unit is the object $I^\bullet$
in $\dgVec$ with $I^0=\bbK$, $I^n=0$, for all $n\geq 1$, and trivial differential $\dd =0$. 

\begin{defi}\label{def:dgAlg}
A differential graded algebra (in non-negative degrees) is a monoid object in $\dgVec$. More explicitly,
it is an object $A^\bullet$ in $\dgVec$ together with two $\dgVec$-morphisms
$\mu_{A^\bullet}^{} : A^\bullet\otimes A^\bullet \to A^\bullet$ (called product)
and $\eta_{A^\bullet}^{} : I^\bullet \to A^\bullet$ (called unit), such that the diagrams
\begin{subequations}
\begin{flalign}
\xymatrix{
\ar[d]_-{\id_{A^\bullet}\otimes\mu_{A^\bullet}^{}} A^\bullet\otimes A^\bullet\otimes A^\bullet \ar[rr]^-{\mu_{A^\bullet}^{}\otimes \id_{A^\bullet}}&& A^\bullet\otimes A^\bullet \ar[d]^-{\mu_{A^\bullet}^{}}\\
A^\bullet\otimes A^\bullet  \ar[rr]_-{\mu_{A^\bullet}^{}}&& A^\bullet
}
\end{flalign}
\begin{flalign}
\xymatrix{
\ar[drr]_-{\simeq} I^\bullet \otimes A^\bullet \ar[rr]^-{\eta_{A^\bullet}^{}\otimes\id_{A^\bullet}} && \ar[d]_-{\mu_{A^\bullet}^{}}A^\bullet\otimes A^\bullet && A^\bullet\otimes I^\bullet\ar[ll]_-{\id_{A^\bullet}\otimes\eta_{A^\bullet}^{}}\ar[dll]^-{\simeq}\\
&& A^\bullet &&
}
\end{flalign}
\end{subequations}
in $\dgVec$ commute. A morphism
$\kappa : (A^\bullet,\mu_{A^\bullet}^{},\eta_{A^\bullet}^{})\to (B^\bullet,\mu_{B^\bullet}^{},\eta_{B^\bullet}^{})$ 
of differential graded algebras
is a $\dgVec$-morphism $\kappa : A^\bullet\to B^\bullet$ that preserves products and units, i.e.\
$\mu_{B^\bullet}^{} \circ(\kappa\otimes\kappa) = \kappa\circ \mu_{A^\bullet}^{}$ and 
$\eta_{B^\bullet}^{} = \kappa \circ \eta_{A^\bullet}^{}$.
We denote by $\dgAlg$ the category of differential graded algebras (in non-negative degrees).
\end{defi}
\begin{rem}
Notice that our differential graded algebras are not assumed to be (graded) commutative. 
In order to simplify notations, we shall denote objects in $\dgAlg$ simply by symbols
like $A^\bullet$ suppressing the product $\mu_{A^\bullet}^{}$
and unit $\eta_{A^\bullet}^{}$ from the notation. We shall often use juxtaposition 
$a\,a^\prime := \mu_{A^\bullet}^{} (a\otimes a^\prime) $
to denote products and the symbol $\1 := \eta_{A^\bullet}^{}(1) \in A^0$ to denote the unit element.
By definition, we have the following properties
\begin{flalign}
\dd (a\,a^\prime) = (\dd a)\, a^\prime + (-1)^{i}\,a\,(\dd a^\prime) \quad,\qquad \dd\1=0~,
\end{flalign}
for all $a\in A^i$ and $a^\prime\in A^j$.
\end{rem}

We obviously have a forgetful functor
\begin{subequations}
\begin{flalign}
\mathrm{Forget} : \dgAlg\longrightarrow \dgVec
\end{flalign}
that assigns to an object $A^\bullet$ in $\dgAlg$ its underlying
differential graded vector space, i.e.\ forgets the product $\mu_{A^\bullet}^{}$ 
and unit $\eta_{A^\bullet}^{}$. The forgetful functor has a left adjoint
\begin{flalign}
\mathrm{Free} : \dgVec\longrightarrow \dgAlg
\end{flalign}
\end{subequations}
given by the free $\dgAlg$-construction. Explicitly, given any object 
$V^\bullet$ in $\dgVec$,  we have
\begin{flalign}\label{eqn:freedgAlg}
\mathrm{Free}(V^\bullet) := \bigoplus_{k=0}^{\infty} {V^\bullet}^{\otimes k}~,
\end{flalign}
where $\bigoplus$ denotes the coproduct in $\dgVec$ and by convention ${V^\bullet}^{\otimes 0} = I^\bullet$.
The product $\mu_{\mathrm{Free}(V^\bullet)}^{}$ is simply given by the identification
${V^\bullet}^{\otimes k}\otimes {V^\bullet}^{\otimes l}\simeq {V^\bullet}^{\otimes(k+l)}$
and the unit $\eta_{\mathrm{Free}(V^\bullet)}^{}$ is given by mapping $I^\bullet$ via the 
identity to the component corresponding to $k=0$ in \eqref{eqn:freedgAlg}.
From this explicit description, it is easy to show that we have constructed an adjunction
\begin{flalign}\label{eqn:adjunction}
\xymatrix{
\mathrm{Free} : \dgVec ~\ar@<0.5ex>[r]&\ar@<0.5ex>[l]  ~\dgAlg : \mathrm{Forget}
}
\end{flalign}
with $\mathrm{Forget}$ being the right adjoint.

\subsection{Model category structures}
Both of our categories $\dgVec$ and $\dgAlg$ can be equipped with model category structures,
see e.g.\ \cite{Dwyer} for a concise introduction to model categories.
\begin{theo}[\cite{Dwyer}]\label{theo:dgVecmodel}
Define a morphism $L : V^\bullet\to W^\bullet$ in $\dgVec$ to be
\begin{itemize}
\item[(i)] a weak equivalence if $L$ induces a cohomology isomorphism
$H^\bullet (L) : H^\bullet(V^\bullet) \to H^\bullet(W^\bullet)$;
\item[(ii)] a fibration if $L^n : V^n\to W^n$ is surjective, for all $n\in\bbZ_{\geq 0}$;
\item[(iii)] a cofibration if $L$ has the left lifting property with respect to all morphisms
which are fibrations and weak equivalences (i.e.\ all acyclic fibrations).
\end{itemize}
Then with these choices $\dgVec$ is a model category.
\end{theo}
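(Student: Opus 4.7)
The plan is to verify Quillen's five axioms for the model category $\dgVec$. The easy ones are immediate: $\dgVec$ is bicomplete because (co)limits are computed degreewise in the bicomplete category $\Vec$ (MC1); the two-out-of-three property for weak equivalences follows from functoriality of the cohomology functor combined with the two-out-of-three property for isomorphisms (MC2); retract-closure of weak equivalences and of fibrations is inherited from the corresponding properties of isomorphisms and surjections in $\Vec$, while retract-closure of cofibrations is formal for any class defined by a left lifting property (MC3); and the lifting of cofibrations against acyclic fibrations holds by definition (first half of MC4).

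The two remaining axioms, existence of the two functorial factorizations (MC5) and lifting of acyclic cofibrations against fibrations (second half of MC4), I would establish via the small object argument applied to explicit sets of generators. A standard choice is
\[
I \,=\, \{S^n \hookrightarrow D^{n+1}\}_{n \geq 0} \,\cup\, \{0 \to S^0\}, \qquad J \,=\, \{0 \hookrightarrow D^n\}_{n \geq 1},
\]
where $S^n$ denotes the complex consisting of $\bbK$ concentrated in degree $n$ with trivial differential, and $D^n$ denotes the contractible complex with $\bbK$ in degrees $n-1$ and $n$ joined by the identity. Since every cochain complex is small with respect to the relevant classes of monomorphisms in $\dgVec$, the small object argument produces functorial factorizations of any morphism as a cofibration followed by an acyclic fibration, and as an acyclic cofibration followed by a fibration. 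The second half of MC4 then follows from the standard retract argument: any acyclic cofibration $L$ factors as $p \circ j$ with $j$ a relative $J$-cell complex and $p$ a fibration; two-out-of-three forces $p$ to be an acyclic fibration, so $L$ has the left lifting property against $p$ and is therefore a retract of $j$, inheriting the lifting property against all fibrations.

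The main obstacle in carrying out this plan is the verification that the lifting classes against $I$ and $J$ correctly recover the prescribed notions: namely, a morphism has the right lifting property against $I$ precisely when it is a surjective quasi-isomorphism, and has the right lifting property against $J$ precisely when it is a degreewise surjection. Both claims reduce to elementary but careful diagram chases on explicit cochain components, ultimately exploiting the contractibility of each $D^n$ and the fact that $\bbK$ is a field, so that every short exact sequence of $\bbK$-vector spaces splits and every degreewise surjection admits degreewise sections. The detailed calculations are carried out in \cite{Dwyer}.
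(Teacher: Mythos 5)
The overall strategy is the standard one and is essentially what the cited reference carries out: verify MC1--MC3 and the first half of MC4 directly, and obtain the factorizations and the remaining lifting axiom from the small object argument plus the retract argument. (Note the paper itself offers no proof; it defers to \cite{Dwyer}, which treats non-negatively graded \emph{chain} complexes with differential lowering degree and fibrations surjective only in \emph{positive} degrees, so some dualization is unavoidable when adapting it to the paper's cochain conventions. That dualization is exactly where your proposal slips.)

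The concrete gap is that your generating cofibrations are not morphisms of $\dgVec$. With the differential raising degree, the one-dimensional subspace of $D^{n+1}$ sitting in degree $n$ is the \emph{source} of the identity differential, so it is not a subcomplex: the cochain-map condition for $S^n\hookrightarrow D^{n+1}$ in degree $n$ reads $\id = d_{D^{n+1}}\circ \phi^n = \phi^{n+1}\circ d_{S^n} = 0$. The sphere must instead be included as the \emph{top} degree of the disk, i.e.\ the correct set is $I=\{S^n\hookrightarrow D^n\}_{n\geq 1}\cup\{0\to S^0\}$ (with $D^n$ concentrated in degrees $n-1,n$ as you define it). This is not a cosmetic re-indexing: the asymmetry between top and bottom of the disk is precisely what makes the non-negatively graded cochain case behave differently from the chain case in \cite{Dwyer} --- it is why here $\Hom(D^n,X)\cong X^{n-1}$ for all $n\geq 1$, so that your $J=\{0\to D^n\}_{n\geq 1}$ correctly detects surjectivity in \emph{all} degrees including $0$ (matching item (ii) of the theorem), whereas in the homological setting no acyclic non-negatively graded complex can surject onto $S_0$ in degree $0$ and the fibrations must be redefined. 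With the corrected $I$ the characterization ``RLP against $I$ iff surjective quasi-isomorphism'' does hold, by the diagram chase you indicate (surjectivity on cocycles in every degree follows from the $x=0$ instances of the lifting problems, and then surjectivity and injectivity on cohomology follow). One further step you should not leave implicit: in your retract argument you apply two-out-of-three to $L=p\circ j$, which requires knowing that the relative $J$-cell complex $j$ is itself a weak equivalence; this needs the (easy, over a field) verification that pushouts and transfinite compositions of the maps $0\to D^n$ are degreewise injections with contractible cokernel, hence quasi-isomorphisms.
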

\begin{theo}[\cite{Jardine}]\label{theo:dgAlgmodel}
Define a morphism $\kappa : A^\bullet\to B^\bullet$ in $\dgAlg$ to be
\begin{itemize}
\item[(i)] a weak equivalence if $\kappa$ induces a cohomology isomorphism
$H^\bullet (\kappa) : H^\bullet(A^\bullet) \to H^\bullet(B^\bullet)$;
\item[(ii)] a fibration if $\kappa^n : A^n\to B^n$ is surjective, for all $n\in\bbZ_{\geq 0}$;
\item[(iii)] a cofibration if $\kappa$ has the left lifting property with respect to all morphisms
which are fibrations and weak equivalences (i.e.\ all acyclic fibrations).
\end{itemize}
Then with these choices $\dgAlg$ is a model category.
\end{theo}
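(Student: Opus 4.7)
The plan is to obtain the model structure on $\dgAlg$ by transferring the model structure on $\dgVec$ (Theorem \ref{theo:dgVecmodel}) along the free/forgetful adjunction \eqref{eqn:adjunction}. The tool is the standard lifting theorem for cofibrantly generated model categories (e.g.\ Hirschhorn's Theorem 11.3.2, or the direct argument in \cite{Jardine}): given an adjunction $\mathrm{Free}\dashv\mathrm{Forget}$ with $\dgVec$ cofibrantly generated, one obtains a model structure on $\dgAlg$ in which weak equivalences and fibrations are precisely the morphisms detected by $\mathrm{Forget}$, provided that $\dgAlg$ is complete and cocomplete, $\mathrm{Forget}$ preserves filtered colimits, and the ``acyclicity condition'' holds.

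First, I would recall that $\dgVec$ is cofibrantly generated with generating cofibrations $I=\{S^{n-1}\hookrightarrow D^n:n\geq 0\}$ and generating acyclic cofibrations $J=\{0\hookrightarrow D^n:n\geq 1\}$, where $D^n$ denotes the contractible complex $\bbK\to\bbK$ concentrated in degrees $n-1,n$ and $S^{n-1}$ is $\bbK$ in degree $n-1$. Completeness of $\dgAlg$ is automatic since limits are created by $\mathrm{Forget}$: products and units descend componentwise. Cocompleteness is more delicate, but it follows from the general theory of algebras over a monad on a cocomplete category whose underlying endofunctor preserves filtered colimits, which is the case for the tensor-algebra monad implicit in \eqref{eqn:freedgAlg}. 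That $\mathrm{Forget}$ itself preserves filtered colimits is likewise a consequence of the same monadicity picture.

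Second, I would take the candidate generating (acyclic) cofibrations of $\dgAlg$ to be $\mathrm{Free}(I)$ and $\mathrm{Free}(J)$; the small object argument then produces the required (acyclic cofibration, fibration) and (cofibration, acyclic fibration) factorizations. The non-formal part is the acyclicity condition: every relative $\mathrm{Free}(J)$-cell complex must be a quasi-isomorphism of underlying complexes. For a single pushout of $\mathrm{Free}(0\hookrightarrow D^n)$ along some map $\mathrm{Free}(D^n)\to A^\bullet$, one explicitly freely adjoins to $A^\bullet$ a pair of generators $(x,\dd x)$ of degrees $(n-1,n)$; as a graded vector space, the resulting algebra is a free product $A^\bullet\ast\mathrm{Free}(D^n)$, and a word-length filtration combined with the Künneth property (which is exact because we work over a field $\bbK$) exhibits $A^\bullet$ as a deformation retract of the pushout, so the inclusion is a quasi-isomorphism.

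The main obstacle is precisely this acyclicity step: unlike in the commutative case, pushouts in $\dgAlg$ are free products rather than tensor products, so identifying the homotopy type requires a careful filtration argument and a separate verification that the deformation retraction is compatible with the differential. Once acyclicity is established for a single cell attachment, stability under transfinite compositions and retracts is formal (being a quasi-isomorphism is preserved by filtered colimits of inclusions in $\dgVec$), and the remaining model-category axioms—lifting properties and the two-out-of-three property for weak equivalences—follow automatically from the transfer theorem and Theorem \ref{theo:dgVecmodel}.
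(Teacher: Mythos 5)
The paper offers no proof of this statement---it is imported verbatim from Jardine \cite{Jardine}---and your transfer argument along $\mathrm{Free}\dashv\mathrm{Forget}$, with the key acyclicity check that a free cell attachment $A^\bullet\to A^\bullet * \mathrm{Free}(D^n)$ is a quasi-isomorphism via the word-length decomposition and the K\"unneth theorem over the field $\bbK$, is precisely the standard proof of the cited result, so your outline is correct. The only blemishes are notational: in the cochain convention the generating cofibrations are $S^{n}\hookrightarrow D^{n}$ for $n\geq 1$ together with $0\to S^{0}$ (there is no $D^{0}$ in non-negative degrees, and the degree-$(n-1)$ generator of $D^{n}$ is not a cocycle, so $S^{n-1}\hookrightarrow D^{n}$ is not a cochain map), and the attaching map for a pushout of $\mathrm{Free}(0\hookrightarrow D^{n})$ is the unique map $\mathrm{Free}(0)=\bbK\to A^\bullet$ rather than a map $\mathrm{Free}(D^{n})\to A^\bullet$, so that the pushout is the coproduct $A^\bullet * \mathrm{Free}(D^n)$ as you subsequently use.
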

\begin{rem}
Notice that in $\dgVec$ all objects are fibrant, i.e.\
the unique morphism $V^\bullet \to 0$ from any object $V^\bullet$ to the terminal object
is a fibration. The same holds true for $\dgAlg$.
\end{rem}

From the definition of the model structures on $\dgVec$ and $\dgAlg$
we immediately observe
\begin{propo}\label{prop:Quillenadjunction}
The adjunction \eqref{eqn:adjunction} is a Quillen adjunction, i.e.\ the right adjoint
functor $\mathrm{Forget}: \dgAlg\to\dgVec$ preserves fibrations and acyclic fibrations.
Moreover, $\mathrm{Forget}: \dgAlg\to\dgVec$ preserves weak equivalences
and it even detects them, i.e., given a $\dgAlg$-morphism $\kappa$, 
if $\mathrm{Forget}(\kappa)$ is a weak equivalence in $\dgVec$ 
then $\kappa$ is a weak equivalence in $\dgAlg$.
\end{propo}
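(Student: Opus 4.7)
The proof is essentially a direct comparison of the two model structures established in Theorems \ref{theo:dgVecmodel} and \ref{theo:dgAlgmodel}, together with the observation that the functor $\mathrm{Forget}$ acts as the identity on underlying families of linear maps (it only forgets the multiplicative data).

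The plan is as follows. First I would tackle the statement about weak equivalences, because it is the strongest and immediately implies most of what we need. A weak equivalence in $\dgAlg$ is defined in Theorem \ref{theo:dgAlgmodel}(i) as a $\dgAlg$-morphism $\kappa$ inducing an isomorphism in cohomology, and a weak equivalence in $\dgVec$ is defined identically in Theorem \ref{theo:dgVecmodel}(i) as a $\dgVec$-morphism inducing an isomorphism in cohomology. Since $\mathrm{Forget}(\kappa)$ has the same underlying family of linear maps as $\kappa$, and hence the same induced maps on cohomology, $\kappa$ is a weak equivalence in $\dgAlg$ if and only if $\mathrm{Forget}(\kappa)$ is a weak equivalence in $\dgVec$. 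This proves both preservation and detection of weak equivalences in one stroke.

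Next, I would handle fibrations. A $\dgAlg$-morphism $\kappa : A^\bullet \to B^\bullet$ is a fibration precisely when $\kappa^n : A^n\to B^n$ is surjective for all $n\in\bbZ_{\geq 0}$ (Theorem \ref{theo:dgAlgmodel}(ii)), and the same degree-wise surjectivity criterion defines fibrations in $\dgVec$ (Theorem \ref{theo:dgVecmodel}(ii)). Since $\mathrm{Forget}(\kappa)$ is the same family of linear maps as $\kappa$, surjectivity in each degree is preserved, and hence $\mathrm{Forget}$ sends fibrations to fibrations. Combining this with the previous paragraph, $\mathrm{Forget}$ also sends acyclic fibrations to acyclic fibrations. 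This establishes that $(\mathrm{Free}, \mathrm{Forget})$ is a Quillen adjunction.

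There is no real obstacle here: the two model structures are designed so that $\mathrm{Forget}$ is manifestly a right Quillen functor. The only thing worth double-checking is that the cofibration classes are consistent, but this is automatic since in both model categories cofibrations are defined by the left lifting property against acyclic fibrations, and we have already verified that $\mathrm{Forget}$ preserves the latter class. Thus the proof reduces to unpacking the definitions and observing that $\mathrm{Forget}$ leaves the underlying cochain complex, and therefore the defining data of weak equivalences and fibrations, untouched.
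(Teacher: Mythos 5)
Your proposal is correct and matches the paper's reasoning: the paper offers no explicit proof, stating only that the claim follows "immediately" from the definitions of the two model structures, which is precisely the unpacking you carry out — weak equivalences and fibrations in both $\dgVec$ and $\dgAlg$ are defined on the underlying cochain complexes, which $\mathrm{Forget}$ leaves untouched.
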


\subsection{Homotopy limits in $\dgVec$}
We recall how homotopy limits may be computed in the model category $\dgVec$.
For more details we refer to \cite[Section 16.8]{Dugger} and \cite[Appendix B]{BSShomotopy};
see also \cite{Dwyer,Hirschhorn} for an introduction to the abstract theory of homotopy (co)limits.
\sk

Let $\DD$ be a small category. Recall that the nerve of $\DD$ is the simplicial set $\{\DD_n\}_{n\in\bbN_0}$,
where $\DD_0$ is the set of objects in $\DD$ and $\DD_n$, for $n\geq 1$,  is the set
of all composable $n$-arrows in $\DD$. For $n\geq 1$, we shall denote an element of $\DD_n$
by an $n$-tuple $(f_1,\dots,f_n)$ of morphisms in $\DD$ such that
the source of $f_i$ is the target of $f_{i+1}$ (i.e.\ the compositions $f_i\circ f_{i+1}$ exist).
The face maps are given by composing two subsequent arrows (or throwing away the first/last arrow)
and the degeneracy maps are given by inserting the identity morphisms.
\sk

Given a functor $X^\bullet : \DD \to \dgVec$, which we shall interpret as a diagram of shape $\DD$ in
$\dgVec$, the homotopy limit $\holim_{\dgVec}^{} X^\bullet$ is an object in $\dgVec$ 
that may be computed by the following three-step construction,
cf.\ \cite[Section 16.8]{Dugger} and \cite[Appendix B]{BSShomotopy}:
First, we take the cosimplicial replacement of $X^\bullet :\DD \to \dgVec$ resulting in a cosimplicial object in $\dgVec$.
Second, we assign to this cosimplicial object a double cochain complex of vector spaces
via the co-normalized Moore complex construction. 
Third, we define $\holim_{\dgVec}^{} X^\bullet$ 
to be the $\prod$-total complex of this double complex.
The result of this construction is rather explicit and reads as follows:
The double cochain complex $X^{\bullet,\bullet}$ has components
\begin{flalign}\label{eqn:doublecomplex}
X^{0,\bullet} = \prod_{d\in \DD_0} X(d)^\bullet \quad,\qquad X^{n,\bullet}= \prod_{\mycom{(f_1,\dots,f_n)\in \DD_n}{f_i\neq \id}} X(\mathrm{t}(f_1))^\bullet~,
\end{flalign}
for $n\in \bbZ_{\geq 1}$, where $\mathrm{t}(f)$ denotes the target of the $\DD$-morphism $f$.
It is very convenient to regard elements $x\in X^{n,m}$ as mappings
\begin{subequations}
\begin{flalign}
\DD_n\ni (f_1,\dots,f_n) \longmapsto x(f_1,\dots,f_n)\in X(\mathrm{t}(f_1))^m
\end{flalign}
on all of $\DD_n$,  which satisfy the normalization conditions
\begin{flalign}
x(f_1,\dots,f_{i-1}, \id,f_{i+1}, \dots f_n) =0~,
\end{flalign}
\end{subequations}
for all $i=1,\dots,n$.
The vertical differential $\dd^\mathrm{v} : X^{\bullet,\bullet}\to X^{\bullet +1 ,\bullet}$ in this notation
reads as
\begin{subequations}
\begin{multline}
 \dd^{\mathrm{v}}x(f_1,\dots,f_{n+1}) = X(f_1)\big(x(f_2,\dots,f_{n+1})\big) \\
 + \sum_{i=1}^n (-1)^i\, x(f_1,\dots,f_i\circ f_{i+1},\dots, f_{n+1}) 
+ (-1)^{n+1} \,x(f_1,\dots,f_n)~,
\end{multline}
for all $x\in X^{n,m}$, and the horizontal differential $\dd^\mathrm{h} : X^{\bullet,\bullet}\to X^{\bullet ,\bullet+1}$ 
is simply given by
\begin{flalign}
\dd^{\mathrm{h}}x(f_1,\dots,f_{n}) = \dd \big(x(f_1,\dots,f_n)\big)~,
\end{flalign}
\end{subequations}
for all $x\in X^{n,m}$, where $\dd$ on the right-hand side is the differential on $X(\mathrm{t}(f_1))^\bullet$.
The homotopy limit $\holim_{\dgVec}^{} X^\bullet$  is then the graded vector space with components
\begin{subequations}\label{eqn:holimdgVec}
\begin{flalign}
(\holim_{\dgVec}X^\bullet)^k = \prod_{n+m=k} X^{n,m}~,
\end{flalign}
for all $k\in\bbZ_{\geq 0}$, and differential given by
\begin{flalign}
\dd^{\mathrm{tot}} = \dd^\mathrm{v} + (-1)^n\, \dd^{\mathrm{h}}
\end{flalign}
\end{subequations}
on the factor $X^{n,m}$.
\sk

It is easy to see that the assignment of the object $\holim_{\dgVec}^{} X^\bullet$ in $\dgVec$
to a diagram $X^\bullet :\DD\to\dgVec$ of shape $\DD$ is functorial, hence
we obtain a homotopy limit functor
\begin{flalign}\label{eqn:holimfunctordgVec}
\holim_{\dgVec} : \dgVec^\DD \longrightarrow \dgVec~,
\end{flalign}
where $\dgVec^\DD$ is the category of functors from $\DD$ to $\dgVec$.

\subsection{Homotopy limits in $\dgAlg$}
Let again $\DD$ be a small category. The goal of this subsection is to describe the
homotopy limit functor for the model category $\dgAlg$.
\sk

Given any diagram $X^\bullet : \DD\to \dgAlg$, let us forget 
for the moment the $\dgAlg$-structure and form the homotopy limit
\eqref{eqn:holimdgVec} in $\dgVec$. On the resulting object
$\holim_{\dgVec}X^\bullet$ in $\dgVec$ we may define a  product and unit
by setting
\begin{subequations}\label{eqn:holimdgAlg}
\begin{flalign}
(x\,x^\prime)(f_1,\dots, f_{n+n^\prime}) := (-1)^{m\, n^\prime} x(f_1,\dots,f_n) ~ X(f_1\circ \cdots\circ f_n)\big(x^\prime(f_{n+1},\dots,f_{n+n^\prime})\big)~,
\end{flalign}
for all $x\in X^{n,m}$ and $x^\prime\in X^{n^\prime,m^\prime}$,
and
\begin{flalign}
\1(d) := \1 \in X(d)^{0}~.
\end{flalign}
\end{subequations}
It is easy to check that the product is associative and compatible with the unit (cf.\ Definition \ref{def:dgAlg}).
Moreover, a slightly lengthy computation shows that
product and unit are also compatible with the differential of $\holim_{\dgVec}X^\bullet$ in the sense that
\begin{flalign}
\dd^{\mathrm{tot}}(x\,x^\prime) = (\dd^{\mathrm{tot}} x) \, x^\prime + (-1)^{n+m}\, x\,(\dd^{\mathrm{tot}} x^\prime)\quad,
\qquad \dd^{\mathrm{tot}}\1 =0~,
\end{flalign}
for all $x\in X^{n,m}$ and $x^\prime\in X^{n^\prime,m^\prime}$. 
As a consequence, we may equip for any diagram $X^\bullet : \DD\to \dgAlg$
the differential graded vector space $\holim_{\dgVec}^{}X^\bullet$ with the structure of a differential graded algebra.
This $\dgAlg$-structure is natural in the sense that given any morphism
$\kappa : X^\bullet\to Y^\bullet$ in $\dgAlg^\DD$, the $\dgVec$-morphism
$\holim_{\dgVec}^{}\kappa : \holim_{\dgVec}^{}X^\bullet\to \holim_{\dgVec}^{}Y^\bullet$
preserves products and units, hence it is a $\dgAlg$-morphism. We thus have obtained a
functor from $\dgAlg^\DD$ to $\dgAlg$ which we shall denote by
\begin{flalign}\label{eqn:holimfunctordgAlg}
\holim_{\dgAlg} : \dgAlg^\DD \longrightarrow \dgAlg~.
\end{flalign}
Notice that due to the `same' choice of weak equivalences in 
$\dgVec$ and $\dgAlg$ (cf.\ Theorems \ref{theo:dgVecmodel} and \ref{theo:dgAlgmodel}),
$\holim_{\dgAlg}^{}$ is clearly a homotopy functor (i.e.\ it preserves weak equivalences)
as $\holim_{\dgVec}^{}$ is a homotopy functor.
\sk

It remains to show that \eqref{eqn:holimfunctordgAlg} is a homotopy limit functor for $\dgAlg$.
Using \cite[Theorem 2.3.7]{Walter} and our Quillen adjunction from Proposition \ref{prop:Quillenadjunction},
this will be the case provided that we can verify the following properties, for
all diagrams $X^\bullet : \DD\to\dgAlg$:
\begin{enumerate}
\item $\mathrm{Forget}\big(\holim_{\dgAlg}^{} X^\bullet\big) = \holim_{\dgVec}^{}\mathrm{Forget}^{\DD}(X^\bullet)$;
\item there exists a natural $\dgAlg$-morphism $e_{X^\bullet}^{} : \lim X^\bullet \to \holim_{\dgAlg}^{}X^\bullet$, where
$\lim$ denotes the ordinary categorical limit in $\dgAlg$;
\item $\mathrm{Forget}(e_{X^\bullet}^{})$ is the canonical $\dgVec$-morphism
$\lim \mathrm{Forget}^{\DD}(X^\bullet) \to \holim_{\dgVec}^{} \mathrm{Forget}^{\DD}(X^\bullet)$.
\end{enumerate}
\sk

Notice that item 1.\ holds true on account of our definition of $\holim_{\dgAlg}^{}$. 
For items 2.\ and 3.\ we have to form $\lim X^\bullet$ in $\dgAlg$ 
as well as $\lim \mathrm{Forget}^{\DD}(X^\bullet)$ in $\dgVec$. 
Recalling that limits in $\dgAlg$ may be computed by first computing the limit in $\dgVec$
and then equipping the result with an appropriate product and unit,
we start with the second task and form the limit $\lim \mathrm{Forget}^{\DD}(X^\bullet)$ in $\dgVec$.
Its underlying graded vector space has components
\begin{subequations}\label{eqn:limdgVec}
\begin{flalign}
\big(\lim \mathrm{Forget}^{\DD}(X^\bullet)\big)^k = \Big\{x \in \prod_{d\in \DD_0} X(d)^k\,:\; X(f)\big(x(\mathrm{s}(f))\big) = x(\mathrm{t}(f))\,,~\forall f\in \DD_1\Big\}~,
\end{flalign}
for all $k\in\bbZ_{\geq0}$, where $\mathrm{s}(f)$ denotes the source of the $\DD$-morphism $f$.
The differential $\dd^{\lim}$ on $\lim \mathrm{Forget}^{\DD}(X^\bullet)$ is given by 
\begin{flalign}
\dd^{\lim}x(d)= \dd \big(x(d)\big)~,
\end{flalign}
\end{subequations}
for all $x\in (\lim \mathrm{Forget}^{\DD}(X^\bullet))^k$, where $\dd$ on the right-hand side 
is the differential on $X(d)^\bullet$.
\sk

From \eqref{eqn:holimdgVec}, \eqref{eqn:doublecomplex} and \eqref{eqn:limdgVec}, 
we observe that there exists a natural $\dgVec$-morphism 
\begin{subequations}\label{eqn:eForget}
\begin{flalign}
e_{\mathrm{Forget}^{\DD}(X^\bullet)} :  \lim \mathrm{Forget}^{\DD}(X^\bullet) \longrightarrow \holim_{\dgVec}\mathrm{Forget}^{\DD}(X^\bullet)~.
\end{flalign}
Explicitly, the $k$-th component
\begin{flalign}
e_{\mathrm{Forget}^\DD(X^\bullet)} : \big( \lim \mathrm{Forget}^{\DD}(X^\bullet)\big)^k\subseteq
X^{0,k} \longrightarrow \big(\holim_{\dgVec}\mathrm{Forget}^{\DD}(X^\bullet)\big)^{k}=\prod_{n+m=k}X^{n,m} ~
\end{flalign}
is induced by the canonical inclusion 
\begin{flalign}
X^{0,k} \hookrightarrow \prod_{n+m=k}X^{n,m}
\end{flalign}
\end{subequations}
in the cartesian product.
\sk

In order to obtain the limit $\lim X^\bullet$ in $\dgAlg$, 
we endow the differential graded vector space $\lim \mathrm{Forget}^{\DD}(X^\bullet)$ 
given in \eqref{eqn:limdgVec} with a suitable product and unit. 
Given $x \in (\lim \mathrm{Forget}^{\DD}(X^\bullet))^k$ and 
$x^\prime \in (\lim \mathrm{Forget}^{\DD}(X^\bullet))^{k^\prime}$, we set 
\begin{subequations}\label{eqn:limdgAlg}
\begin{flalign}
(x\,x^\prime) (d) := x(d)\,x^\prime(d)~.
\end{flalign}
It is straightforward to check that this product is associative and compatible with the differential
$\dd^{\lim}$. The unit element $\1$ is defined as in \eqref{eqn:holimdgAlg}, i.e.\
\begin{flalign}
\1(d) := \1 \in \big(\lim \mathrm{Forget}^{\DD}(X^\bullet)\big)^0~,
\end{flalign} 
\end{subequations}
and it is clear that $\dd^{\lim}\1=0$.
We shall denote the resulting differential graded algebra by
$\lim X^\bullet$ and note that it fulfills the universal property 
for the limit of $X^\bullet: \DD \to \dgAlg$. 
\sk 

Note that $e_{\mathrm{Forget}^{\DD}(X^\bullet)}$ given in \eqref{eqn:eForget} 
is compatible with the products and units we introduced on the source \eqref{eqn:limdgAlg} 
and on the target \eqref{eqn:holimdgAlg}.
Hence, the $\dgVec$-morphism $e_{\mathrm{Forget}^{\DD}(X^\bullet)}$ defines 
a $\dgAlg$-morphism 
\begin{flalign}
e_{X^\bullet}^{}: \lim X^\bullet \longrightarrow \holim_{\dgAlg} X^\bullet~.
\end{flalign}
These $e_{X^\bullet}^{}$ are natural with respect to morphisms in the functor category $\dgAlg^\DD$, 
thus showing that the requirement of item 2.\ is fulfilled. 
By construction, $\mathrm{Forget}(e_{X^\bullet}^{})$ coincides 
with the canonical $\dgVec$-morphism $e_{\mathrm{Forget}^{\DD}(X^\bullet)}: \lim \mathrm{Forget}^{\DD}(X^\bullet) 
\to \holim_{\dgVec}^{} \mathrm{Forget}^{\DD}(X^\bullet)$, as required by item 3. 
Therefore, by \cite[Theorem 2.3.7]{Walter}, together with the Quillen adjunction 
of Proposition \ref{prop:Quillenadjunction}, the proof of the following statement is complete. 
\begin{cor}
\eqref{eqn:holimfunctordgAlg} is a homotopy limit functor for $\dgAlg$.
\end{cor}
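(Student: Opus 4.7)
My plan is to follow the strategy sketched by the authors and reduce the statement to an application of \cite[Theorem 2.3.7]{Walter} via the Quillen adjunction $\mathrm{Free} \dashv \mathrm{Forget}$ from Proposition \ref{prop:Quillenadjunction}. That theorem tells us: to check that a candidate functor $\dgAlg^\DD \to \dgAlg$ computes the homotopy limit, it suffices to verify that it is compatible with $\mathrm{Forget}$ on three levels, namely (i) it lifts the homotopy limit in $\dgVec$, (ii) there is a natural comparison map from the strict categorical limit, and (iii) this comparison map descends to the canonical $\dgVec$-morphism $\lim \to \holim$. So the whole argument boils down to assembling and checking these three items.

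First I would observe that item (i) is tautological: by the very definition of the product and unit in \eqref{eqn:holimdgAlg}, the underlying differential graded vector space of $\holim_{\dgAlg} X^\bullet$ is exactly $\holim_{\dgVec}\mathrm{Forget}^\DD(X^\bullet)$. Second, I would compute $\lim X^\bullet$ in $\dgAlg$ explicitly. Since $\mathrm{Forget}$ is a right adjoint it preserves limits, so the underlying graded vector space is the one in \eqref{eqn:limdgVec}; one then endows it with the pointwise product $(x\,x^\prime)(d) := x(d)\,x^\prime(d)$ and the pointwise unit $\1(d) := \1$ as in \eqref{eqn:limdgAlg}. Routine checks show that this is compatible with the differential (using that each $X(f)$ is a $\dgAlg$-morphism) and satisfies the universal property of the limit in $\dgAlg$.

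For items (ii) and (iii), I would define $e_{X^\bullet}^{}$ as the map whose $k$-th component is the canonical inclusion $(\lim \mathrm{Forget}^\DD(X^\bullet))^k \subseteq X^{0,k} \hookrightarrow \prod_{n+m=k} X^{n,m}$ arising from the fact that an element of the strict limit in degree $k$ sits inside the $n=0$ slot of the double complex. Naturality in $X^\bullet$ and compatibility with the total differential $\dd^{\mathrm{tot}}$ on the target follow because, for elements supported in $n=0$, $\dd^\mathrm{v}$ reduces precisely to the coequalizing constraint that defines $\lim$. The main bookkeeping step is to verify that this inclusion is multiplicative: on elements $x \in (\lim X^\bullet)^k$ and $x^\prime \in (\lim X^\bullet)^{k^\prime}$, the product formula in \eqref{eqn:holimdgAlg} collapses (since $n = n^\prime = 0$) to the pointwise product on the strict limit, so $e_{X^\bullet}^{}$ is a $\dgAlg$-morphism. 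It also clearly preserves units.

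The hard (or at least most delicate) part is nothing deep but a bookkeeping chore: checking that the product \eqref{eqn:holimdgAlg} really is compatible with $\dd^{\mathrm{tot}}$ and associative, including the signs from $(-1)^{m\, n^\prime}$ and from the total differential. Once this is in hand, item (iii) is immediate, since by construction $\mathrm{Forget}(e_{X^\bullet}^{}) = e_{\mathrm{Forget}^\DD(X^\bullet)}^{}$ from \eqref{eqn:eForget}. Invoking \cite[Theorem 2.3.7]{Walter} with the Quillen adjunction of Proposition \ref{prop:Quillenadjunction} then concludes the proof.
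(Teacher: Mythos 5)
Your proposal is correct and follows essentially the same route as the paper: reduce to \cite[Theorem 2.3.7]{Walter} via the Quillen adjunction of Proposition \ref{prop:Quillenadjunction}, note that item 1 holds by definition of $\holim_{\dgAlg}^{}$, compute $\lim X^\bullet$ in $\dgAlg$ with the pointwise product and unit, and check that the canonical inclusion into the $n=0$ slot of the double complex is a natural $\dgAlg$-morphism lifting the canonical $\dgVec$-comparison map. The details you flag as bookkeeping (compatibility of \eqref{eqn:holimdgAlg} with $\dd^{\mathrm{tot}}$, multiplicativity of $e_{X^\bullet}^{}$) are exactly the ones the paper also defers to routine computation.
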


\end{document}